\keywords{Quantifier Elimination, Craig Interpolation, Quantitative Program Verification}
\theoremstyle{plain}\newtheorem{theorem}[thm]{Theorem}
\theoremstyle{plain}\newtheorem{example}[thm]{Example}
\theoremstyle{plain}\newtheorem{definition}[thm]{Definition}
\theoremstyle{plain}\newtheorem{lemma}[thm]{Lemma}
\theoremstyle{plain}\newtheorem{remark}[thm]{Remark}
\newcommand{\kbcomment}[1]{\todo[color=yellow,size=\scriptsize,fancyline,author=Kevin]{#1}\xspace}
\newcommand{\eeq}{~{}={}~}
\newcommand{\mylambda}[1]{\ensuremath{\lambda #1.\,}}
\newcommand{\finishexample}{\ensuremath{\triangle}}
\newcommand{\finishdefinition}{\ensuremath{\triangle}}
\newcommand{\toolfont}[1]{\textnormal{\textsc{#1}}}
\newcommand{\toolzt}{\toolfont{Z3}\xspace}
\newcommand{\theory}{\ensuremath{\mathcal{T}}}
\newcommand{\inlightgray}{\color{gray}}
\newcommand{\AssignSymbol}{\mathrel{\textnormal{\texttt{:=}}}}
\newcommand{\ASSIGN}[2]{\ensuremath{#1 \AssignSymbol #2}}
\newcommand{\NN}{\ensuremath{\mathbb{N}}}
\newcommand{\NNO}{\ensuremath{\mathbb{N}_0}}
\newcommand{\nna}{\ensuremath{n}}
\newcommand{\nnb}{\ensuremath{m}}
\newcommand{\nnc}{\ensuremath{k}}
\newcommand{\QQ}{\ensuremath{\mathbb{Q}}}
\newcommand{\ExtQQ}{\ensuremath{\mathbb{Q}^{\pm\infty}}}
\newcommand{\rata}{\ensuremath{q}}
\newcommand{\ratb}{\ensuremath{p}}
\newcommand{\RR}{\ensuremath{\mathbb{R}}}
\newcommand{\ExtRR}{\ensuremath{\mathbb{R}^{\pm\infty}}}
\newcommand{\reala}{\ensuremath{r}}
\newcommand{\PosReals}{\RR_{\geq 0}}
\newcommand{\PosRealsInf}{\PosReals^{\infty}}
\newcommand{\proba}{\ensuremath{p}}
\newcommand{\idxa}{\ensuremath{i}}
\newcommand{\idxc}{\ensuremath{j}}
\newcommand{\idxd}{\ensuremath{j'}}
\newcommand{\Vars}{\ensuremath{\mathsf{Vars}}}
\newcommand{\FVars}[1]{\ensuremath{\mathsf{FV} \left( #1 \right)}}
\newcommand{\vara}{\ensuremath{x}}
\newcommand{\varb}{\ensuremath{y}}
\newcommand{\varc}{\ensuremath{z}}
\newcommand{\cardnum}{\ensuremath{\vert\Vars\vert}}
\newcommand{\LinAX}{\ensuremath{\mathsf{LinAX}}}
\newcommand{\axa}{\ensuremath{a}}
\newcommand{\axc}{\ensuremath{e}}
\newcommand{\ExtLinAX}{\ensuremath{\mathsf{LinAX}^{\pm\infty}}}
\newcommand{\eaxa}{\ensuremath{\tilde{a}}}
\newcommand{\eaxb}{\ensuremath{\tilde{b}}}
\newcommand{\eaxc}{\ensuremath{\tilde{e}}}
\newcommand{\eaxl}{\ensuremath{\tilde{\ell}}}
\newcommand{\eaxu}{\ensuremath{\tilde{u}}}
\newcommand{\Bool}{\ensuremath{\mathsf{Bool}}}
\newcommand{\true}{\ensuremath{\mathsf{true}}}
\newcommand{\false}{\ensuremath{\mathsf{false}}}
\newcommand{\bxa}{\ensuremath{\varphi}}
\newcommand{\DNF}{\text{{DNF}}}
\newcommand{\bimplies}{\ensuremath{\longrightarrow}}
\newcommand{\LQ}{\ensuremath{\mathsf{LinQuant}}}
\newcommand{\LE}{\ensuremath{\mathsf{LinExp}}}
\newcommand{\lqa}{\ensuremath{f}}
\newcommand{\lqb}{\ensuremath{f'}}
\newcommand{\lqc}{\ensuremath{g}}
\newcommand{\lqd}{\ensuremath{g'}}
\newcommand{\qflqa}{\ensuremath{h}}
\newcommand{\quanta}{\ensuremath{Q}}
\newcommand{\GNF}{\ensuremath{\mathsf{GNF}}}
\newcommand{\XGNF}[1]{\ensuremath{\mathsf{GNF}_#1}}
\newcommand{\iverson}[1]{\ensuremath{\left[ #1 \right]}}
\newcommand{\SNF}{\ensuremath{\sum\limits_{\idxa=1}^{\nna}\iverson{\bxa_\idxa}\cdot\eaxa_\idxa}}
\newcommand{\PS}{\ensuremath{\Sigma}}
\newcommand{\psa}{\ensuremath{\sigma}}
\newcommand{\statesubst}[2]{[#1\mapsto#2]}
\newcommand{\Sup}{\reflectbox{\textnormal{\textsf{\fontfamily{phv}\selectfont S}}}\hspace{.2ex}}
\newcommand{\SupOO}[2]{\Sup#1\colon#2}
\newcommand{\Inf}{\raisebox{.6\depth}{\rotatebox{-30}{\textnormal{\textsf{\fontfamily{phv}\selectfont \reflectbox{J}}}}\hspace{-.1ex}}}
\newcommand{\InfOO}[2]{\Inf#1\colon#2}
\newcommand{\syntRepl}[3]{\ensuremath{#2[#1/#3]}}
\newcommand{\sem}[2]{{}^{#1}\llbracket#2\rrbracket}
\newcommand{\qentails}{\ensuremath{\models}}
\newcommand{\qqentails}{\ensuremath{~{}\models{}~}}
\newcommand{\lleq}{\ensuremath{~{}\leq{}}}
\newcommand{\disj}{\ensuremath{D}}
\newcommand{\lit}{\ensuremath{L}}
\newcommand{\Bounds}[3]{\ensuremath{\mathsf{Bnd}_{#2 {}#1{} \cdot}(#3)}}
\newcommand{\LBounds}[1]{\ensuremath{\mathsf{LBnd}_{#1}}}
\newcommand{\UBounds}[1]{\ensuremath{\mathsf{UBnd}_{#1}}}
\newcommand{\bounds}{\ensuremath{\{>,\geq,<,\leq\}}}
\newcommand{\phiSAT}{\ensuremath{\bxa_{\exists}}}
\newcommand{\phiInf}{\ensuremath{\bxa_{\inf}}}
\newcommand{\phiSup}{\ensuremath{\bxa_{\sup}}}
\newcommand{\mina}{\ensuremath{m_1}}
\newcommand{\minb}{\ensuremath{m_2}}
\newcommand{\maxa}{\ensuremath{M_1}}
\newcommand{\maxb}{\ensuremath{M_2}}
\newcommand{\down}[2]{[#1]\cdot#2+[\lnot#1]\cdot(-\infty)}
\newcommand{\up}[2]{[#1]\cdot#2+[\lnot#1]\cdot\infty}
\newcommand{\MAX}{\ensuremath{\mathsf{MAX}}}
\newcommand{\MIN}{\ensuremath{\mathsf{MIN}}}
\newcommand{\seta}{\ensuremath{M}}
\newcommand{\QE}[1]{\ensuremath{\mathsf{QE}(#1)}}
\newcommand{\ELIMnobrace}{\ensuremath{\mathsf{Elim}}}
\newcommand{\ELIM}[1]{\ensuremath{\ELIMnobrace(#1)}}
\newcommand{\lqwidth}[1]{\ensuremath{| #1 |_{\rightarrow}}}
\newcommand{\lqdepth}[1]{\ensuremath{| #1 |}_{\downarrow}}
\newcommand{\bxsize}[1]{\ensuremath{|#1|}}
\newcommand{\pgcl}{\ensuremath{\mathsf{pGCL}}}
\newcommand{\pp}{\ensuremath{C}}
\newcommand{\SKIP}{\ensuremath{\textnormal{\texttt{skip}}}}
\newcommand{\COMPOSE}[2]{\ensuremath{{#1}{\,;}~ {#2}}}
\newcommand{\PCHOICE}[3]{\ensuremath{\left\{\, {#1} \,\right\}\mathrel{\left[\,#2\,\right]}\left\{\, {#3} \,\right\}}}
\newcommand{\UNDCHOICE}[1]{\ASSIGN{#1}{\QQ}}
\newcommand{\IFSYMBOL}{\ensuremath{\textnormal{\texttt{if}}}}
\newcommand{\ELSESYMBOL}{\ensuremath{\textnormal{\texttt{else}}}}
\newcommand{\ITE}[3]{\ensuremath{\IFSYMBOL\,\left(\, {#1} \,\right)\,\left\{\, {#2} \,\right\}\,\ELSESYMBOL\,\left\{\, {#3} \,\right\}}}
\newcommand{\E}{\mathbb{E}} 
\newcommand{\FF}{\ensuremath{X}}
\newcommand{\FG}{\ensuremath{Y}}
\newcommand{\xawpsymbol}{\overline{\mathsf{awp}}}
\newcommand{\xawptrans}[1]{\xawpsymbol\llbracket#1\rrbracket}
\newcommand{\xawp}[2]{\xawptrans{#1}\left(#2\right)}
\newcommand{\xdwpsymbol}{\mathsf{d\overline{wp}}}
\newcommand{\xdwptrans}[1]{\xdwpsymbol\llbracket#1\rrbracket}
\newcommand{\xdwp}[2]{\xdwptrans{#1}\left(#2\right)}
\newcommand{\dwpsymbol}{\mathsf{dwp}}
\newcommand{\dwptrans}[1]{\dwpsymbol\llbracket#1\rrbracket}
\newcommand{\dwp}[2]{\dwptrans{#1}\left(#2\right)}
\newcommand{\awpsymbol}{\mathsf{awp}}
\newcommand{\awptrans}[1]{\awpsymbol\llbracket#1\rrbracket}
\newcommand{\awp}[2]{\awptrans{#1}\left(#2\right)}
\newcommand{\somewpsymbol}{\ensuremath{\mathcal{T}}}
\newcommand{\somewptrans}[1]{\somewpsymbol\llbracket#1\rrbracket}
\newcommand{\somewp}[2]{\somewptrans{#1}\left(#2\right)}
\newcommand{\xsomewpsymbol}{\ensuremath{\overline{\mathcal{T}}}}
\newcommand{\xsomewptrans}[1]{\xsomewpsymbol\llbracket#1\rrbracket}
\newcommand{\xsomewp}[2]{\xsomewptrans{#1}\left(#2\right)}
\begin{document}
		
		\title[Quantifier Elimination and Craig Interpolation, Quantitatively]{Quantifier Elimination and Craig Interpolation, \\ Quantitatively
			}
		\titlecomment{{\lsuper*}This article extends our conference paper \cite{DBLP:conf/fossacs/BatzKO25}.}
		
		\author[K.~Batz]{Kevin Batz \lmcsorcid{0000−0001−8705−2564}}[a]
				\author[J.-P.~Katoen]{Joost-Pieter Katoen\lmcsorcid{0000-0002-6143-1926}}[b]
		\author[N.~Orhan]{Nora Orhan}[b]
		
		\address{Cornell University, United States}	
		\email{k.batz@ucl.ac.uk}  
		
		\address{RWTH Aachen University, Germany}	
		\email{nora.hiseni@rwth-aachen.de, katoen@cs.rwth-aachen.de}  
		
		
		
		
		
		\begin{abstract}
			\noindent
			Quantifier elimination (QE) and Craig interpolation (CI) are central to various state-of-the-art automated approaches to hardware- and software verification. They are rooted in the Boolean setting and are successful for, e.g., first-order theories such as linear rational arithmetic. What about their applicability in the quantitative setting where formulae evaluate to numbers and quantitative supremum/infimum quantifiers are the natural pendant to Boolean quantifiers? Applications include establishing quantitative properties of programs such as bounds on expected outcomes of probabilistic programs featuring non-determinism and analyzing the flow of information through programs.
			
			In this paper, we present the — to the best of our knowledge — first QE algorithm for possibly unbounded, $\infty$- or $(-\infty)$-valued, or discontinuous \emph{piecewise linear quantities}. They are the quantitative counterpart to linear rational arithmetic, and they are a popular quantitative assertion language for probabilistic program verification. We provide rigorous soundness proofs as well as upper space complexity bounds. Moreover, we present two applications of our QE algorithm. First, our algorithm yields a quantitative CI theorem: Given arbitrary piecewise linear quantities $\lqa,\lqc$ with $\lqa\qentails\lqc$, both the strongest and the weakest Craig interpolant of $\lqa$ and $\lqc$ are quantifier-free and effectively constructible. Second, we apply our QE algorithm to compute minimal and maximal expected outcomes of loop-free probabilistic programs featuring unbounded non-determinism.
		\end{abstract}
		
		\maketitle
		
		\section{Introduction}
		\label{sec:intro}
		\emph{Quantifier elimination} algorithms transform a first-order formula $\varphi$ over some background theory $\theory$ into a quantifier-free formula $\QE{\varphi}$ equivalent to $\varphi$ modulo $\theory$. Prime examples include Fourier-Motzkin variable elimination \cite{motzkin1936beitraege,fourier1825analyse} and virtual substitution \cite{DBLP:journals/cj/LoosW93} for linear rational arithmetic, Cooper's method \cite{cooper} for linear integer arithmetic, and Cylindrical Algebraic Decomposition \cite{DBLP:conf/automata/Collins75} for non-linear real arithmetic. Quantifier elimination is extensively leveraged by automatic hard- and software verification techniques for, e.g., computing images of state sets \cite{DBLP:conf/cav/KomuravelliGC14,DBLP:journals/fmsd/KomuravelliGC16}, or for synthesizing loop invariants either from templates \cite{DBLP:conf/qest/GretzKM13,DBLP:conf/cav/ColonSS03,Kapur2005AutomaticallyGL} or by solving abduction problems \cite{DBLP:conf/oopsla/DilligDLM13}.

\emph{Craig interpolation \cite{Craig_1957}} 
is vital to various automatic hard- and software verification techniques. A first-order theory $\theory$ is called \emph{(quantifier-free) interpolating} \cite{10.1145/1181775.1181789}, if for all formulae $\varphi,\psi$ with $\varphi \models_\theory \psi$, there is an effectively constructible (quantifier-free) formula $\vartheta$ --- called \emph{Craig interpolant} of $(\varphi,\psi)$ --- with $\varphi \models_\theory \vartheta \models_\theory \psi$ and such that all free variables occurring free in $\vartheta$ also occur free in \emph{both} $\varphi$ and $\psi$. Intuitively, $\varphi \models_\theory \psi$ encodes some (desirable or undesirable) reachability information and $\vartheta$ is a concise explanation of this fact, abstracting away irrelevant details. The computation of (quantifier-free) Craig interpolants is a vivid area of research \cite{DBLP:conf/cav/AlbarghouthiM13,DBLP:conf/cade/GanDXZKC16,DBLP:conf/cade/ChenWAZKZ19,DBLP:conf/fm/WuWXLZG24,DBLP:conf/fossacs/CateC24} with applications ranging from symbolic finite-state model checking \cite{DBLP:conf/cav/McMillan03,DBLP:conf/cav/VizelG14,DBLP:conf/cav/KrishnanVGG19} over computing transition power abstractions \cite{DBLP:conf/fm/BritikovBSF24} to automatic infinite-state software verification \cite{DBLP:conf/popl/HenzingerJMM04,craig_to_software,DBLP:conf/tacas/JhalaM06,DBLP:conf/hvc/SeryFS11,DBLP:conf/fmcad/SeryFS12,DBLP:conf/cav/McMillan06,DBLP:conf/lpar/AlbertiBGRS12}.

\emph{Quantitative program verification} includes reasoning about expected outcomes of probabilistic programs via \emph{weakest pre-expectations} \cite{kozen83,kozen85,mciver_morgan,kaminski_diss}, reasoning about the quantitative flow of information via \emph{quantitative strongest post} \cite{strongest_post}, and reasoning about competitive ratios of online algorithms via \emph{weighted programming} \cite{weighted}. Quantitative reasoning requires a shift:
 Predicates, i.e., maps from program states to truth values, are replaced by \emph{quantities}%
\footnote{We adopt this term from Zhang and Kaminski \cite{strongest_post}. In the realm of weakest pre-expectations, quantitative assertions are usually referred to as \emph{expectations} \cite{mciver_morgan}. In weighted programming, they are called \emph{weightings} \cite{weighted}.}, 
which map program states to \emph{extended reals} in $\ExtRR$. 

The classical quantifiers \enquote{there exists} $\exists$ and \enquote{for all} $\forall$ from predicate logic are replaced by \emph{quantitative} supremum $\Sup$ and infimum $\Inf$ quantifiers \cite{DBLP:journals/pacmpl/BatzKKM21}. These quantifiers occur naturally when reasoning with quantitative program logics: Very much like \emph{classical} strongest post-\emph{conditions} introduce an $\exists$-quantifier for assignments \cite{DBLP:books/daglib/0067387}, \emph{quantitative} strongest post introduces a $\Sup$-quantifier (cf.\ \cite[Table 2]{strongest_post}). Similarly, whereas \emph{classical} weakest pre-\emph{conditions} introduce a $\forall$-quantifier for demonically resolving unbounded non-determinism of the form $\ASSIGN{\vara}{\QQ}$ (read: assign to $\vara$ an \emph{arbitrary} rational number) \cite{dijkstra_discipline,mullerBuildingDeductiveProgram2019}, \emph{quantitative} weakest pre-\emph{expectations} introduce a $\Inf$-quantifier \cite{qsl_popl,caesar}.
%
\begin{example}
	\label{ex:intro}
	In this paper, we focus on \emph{piecewise linear quantities} such as 
	\begin{align*}
		 \lqc \eeq 
		\inlightgray \underbrace{
		 	\color{black}[\inlightgray \underbrace{ \color{black}\varb_1 \geq \varc \bimplies (\vara - 2 < \varb_1
		 	\wedge -\vara \geq \varb_3 \wedge \vara \geq \varb_2)}_{{} =\, \bxa}
		 	]
		 	\cdot {\color{black}(} \underbrace{ \vphantom{(}
		 		\color{black} 2\cdot \vara + \varc}_{{} =\, \eaxa} {\color{black})} 
		 }_{\text{evaluate to $\eaxa$ on variable valuation $\psa$ if $\psa \models \bxa$, and to $0$ otherwise}}~{\color{black} ,}
	\end{align*}
	where $\vara,\varb_1,\ldots$ are $\QQ$-valued variables. We can think of $\lqc$ as a formula that evaluates to extended rationals from $\QQ \cup \{-\infty,\infty\}$ instead of truth values. By prefixing $\lqc$ with, e.g., a supremum quantifier, we obtain a new piecewise linear quantity $\SupOO{\vara}{\lqc}$, which, on variable valuation $\psa$, evaluates to the supremum of $\lqc$ under all possible values for $\vara$, i.e.,
	\begin{align*}
		\sem{\psa}{\SupOO{\vara}{\lqc}} \eeq  \sup\Bigl\{\sem{\psa\statesubst{\vara}{\rata}}{\lqc}\mid\rata\in\QQ\Bigr\} ~.
		%
\tag*{\finishexample}
	\end{align*}
\end{example}

\paragraph{Our Contribution: Quantitative Quantifier Elimination and Craig Interpolation.}
Piecewise linear quantities over $\QQ$-valued variables are to quantitative probabilistic program verification what first-order linear rational arithmetic is to classical program verification: Their entailment problem, i.e., 
\begin{center}
	\emph{Given piecewise linear quantities $\lqa$ and $\lqb$}, \\ \emph{does $\inlightgray\underbrace{\color{black}\lqa\qentails\lqb}_{\text{for all variable valuations $\psa$}\colon \sem{\psa}{\lqa} \leq \sem{\psa}{\lqb}}$ hold?}
\end{center}
 is decidable \cite{DBLP:conf/sas/KatoenMMM10,DBLP:conf/cav/BatzCKKMS20}, they are effectively closed under weakest pre-expectations of loop-free linear probabilistic programs \cite{DBLP:conf/cav/BatzCKKMS20}, and they have been shown to be sufficiently expressive for the verification of various probabilistic programs \cite{DBLP:conf/tacas/BatzCJKKM23,DBLP:conf/cav/BatzCKKMS20,DBLP:journals/corr/abs-2403-17567,DBLP:conf/pldi/NgoC018,DBLP:conf/cav/ChakarovS13}. This renders piecewise linear quantities one of the most prevalent quantitative assertion languages {in automatic reasoning over probabilistic programs.}

Reasoning with piecewise linear quantities containing the quantitative $\Sup$ or $\Inf$ quantifiers has, however, received scarce attention, let alone algorithmically. The field of \emph{quantitative Craig interpolation} is rather unexplored, as well. The goal of this paper is to lay the foundations for (i) developing quantitative quantifier elimination- and Craig interpolation-based approaches to automatic quantitative program verification and (ii) for simplifying the reasoning with quantitative assertions involving quantitative quantifiers. Towards this end:
\begin{enumerate}
	\item We contribute the --- to the best of our knowledge --- first quantitative quantifier elimination algorithm for \emph{possibly unbounded, $\infty$ or $-(\infty)$-valued, or discontinuous} piecewise linear quantities. More formally, given an \emph{arbitrary} piecewise linear quantity $\lqa$ possibly containing quantitative quantifiers, our algorithm computes a \emph{quantifier-free equivalent of $\lqa$.} For $\SupOO{\vara}{\lqc}$ from \Cref{ex:intro}, i.e.,
	\[
		\SupOO{\vara}{\lqc} 
		\quad{}={}\quad
		 \SupOO{\vara}{\iverson{\varb_1 \geq \varc \bimplies (\vara - 2 < \varb_1
				\wedge -\vara \geq \varb_3 \wedge \vara \geq \varb_2)}
			\cdot 2\cdot \vara + \varc}~,
	\]
	our algorithm yields (after simplification)
	\begin{align*}
		&\iverson{\varb_1 < \varc} \cdot \infty\\
		&{}+ \iverson{\varb_1 \geq \varc \wedge \varb_2 < \varb_1+2 \wedge \varb_2 \leq -\varb_3 \wedge \varb_1 + 2 \leq \varb_3}\cdot(2\cdot \varb_1 + \varc + 4) \\
		&{}+ \iverson{\varb_1 \geq \varc \wedge \varb_2 < \varb_1+2 \wedge \varb_2 \leq -\varb_3 \wedge \varb_1 + 2 > \varb_3}\cdot(-\varb_3 + \varc)~.
	\end{align*} 
	\item We provide rigorous soundness proofs, illustrative examples, and upper space-complexity bounds on our algorithm.
	\item\label{contr:craig} We contribute the --- to the best of our knowledge --- first \emph{Craig interpolation theorem} for piecewise linear quantities: Using our quantifier elimination algorithm, we prove that for two \emph{arbitrary} piecewise linear quantities $\lqa,\lqb$ with $\lqa \qentails \lqb $, both the strongest and the weakest 
	$\lqc$ such that 
	\[
		\lqa \qentails \lqc \qentails \lqb 
		\qquad\text{and}\qquad
		\text{the free variables in $\lqc$ are free in \emph{both} $\lqa$ and $\lqb$}
	\]
	are quantifier-free and effectively constructible (see \Cref{ex:craig}).
	\item\label{contr:wp} We apply our quantifier elimination algorithm to \emph{effectively construct closed-form representations} for minimal/maximal expected outcomes of loop-free linear\footnote{i.e., all expressions appearing in assignments or guards involve only linear arithmetic. The control flow is unrestricted, i.e., we allow for arbitrarily nested conditionals or probabilistic choices.} probabilistic programs \emph{featuring unbounded non-determinism} (see \Cref{ex:intro:wp}). 
\end{enumerate}
%
%
	\begin{example}[Contribution \ref{contr:craig}: Quantitative Craig Interpolation]
		\label{ex:craig}
		Consider: %
		\begin{align*}
			\lqa \eeq& \iverson{\vara \geq 0}\cdot \vara + \iverson{\vara \geq 0 \wedge \varb \leq \vara}\cdot \varb \\
			\lqa' \eeq & \iverson{\vara \geq 0 \wedge \varc \geq \vara} \cdot (2\cdot\vara + \varc +1) + \iverson{\varc < \vara}\cdot \infty
		\end{align*}
		We have $\lqa \qentails \lqa'$. Using our quantifier elimination technique, we effectively construct both the strongest and the weakest\footnote{We say that $\lqc$ is \emph{stronger} (resp.\ \emph{weaker}) than $\lqc'$, if $\lqc\qentails\lqc'$ (resp.\ $\lqc' \qentails \lqc$).} Craig interpolants of \mbox{$(\lqa,\lqa')$ given by}
		\begin{align}
			\underbrace{
				\iverson{\vara \geq 0}\cdot 2\cdot \vara}_{\text{strongest Craig interpolant of $(\lqa,\lqa')$}}
			\quad\text{and}\quad
			\underbrace{
				\iverson{\vara\geq 0}\cdot (3\cdot \vara + 1)}_{\text{weakest Craig interpolant of $(\lqa,\lqa')$}}~.
			\tag*{\finishexample}
		\end{align}
	\end{example}
	
	\begin{remark}\label{rem:simple_interpolants}
		When applying \emph{classical} Craig interpolation for a first-order theory $\theory$ to, e.g., loop invariant generation, \enquote{simple} Craig interpolants, i.e., interpolants that lie strictly \enquote{between} (w.r.t.\ $\models_\theory $) the strongest and the weakest ones, are often very useful \cite{DBLP:conf/cav/AlbarghouthiM13}. Our \emph{quantitative} Craig interpolation technique presented in this paper does \emph{not} aim for obtaining such \enquote{simple} interpolants. Rather, our goal is to prove that quantitative Craig interpolants at all exist and that they are effectively constructible. We discuss possible directions for obtaining simpler quantitative Craig interpolants in \Cref{sec:concl}.
	\end{remark}
	
	\begin{example}[Contribution \ref{contr:wp}: Constructing Minimal/Maximal Expected Outcomes]
		\label{ex:intro:wp}
		Consider the following probabilistic program $\pp$ featuring unbounded non-determinism:\kbcomment{do calcs}
		\begin{align*}
			& \UNDCHOICE{\vara_1}\,; \\
			&\PCHOICE{\ASSIGN{\varb_1}{\varb_1 + \vara_1}}{\nicefrac{1}{2}}{\SKIP}\,; \\
			& \UNDCHOICE{\vara_2}\,; \\
			&\PCHOICE{\ASSIGN{\varb_2}{\varb_2 + \vara_2}}{\nicefrac{1}{2}}{\SKIP} 
		\end{align*}
		Variables range over $\QQ$. $\UNDCHOICE{\vara_1}$ is a purely \emph{non-deterministic} assignment, assigning to $\vara_1$ an \emph{arbitrary} rational number. $\PCHOICE{\ldots}{\nicefrac{1}{2}}{\ldots}$, on the other hand, is a \emph{probabilistic choice}, executing either the left- or the right branch, each with probability $\nicefrac{1}{2}$. The program $\pp$ models a game between two players. Player $1$ (resp.\ Player $2$) starts at some position $\varb_1$ (resp.\ $\varb_2$). Player $1$ then starts by (non-deterministically) choosing some rational $\vara_1$, and then flips a fair coin. If the coin lands heads (say), Player $1$ adds $\vara_1$ to her current position $\varb_1$. Subsequently, it is Player $2$'s turn, who behaves analogously. We now ask:
		\begin{center}
			\emph{Under all possible choices the players can make,} \\
			\emph{what is the minimal (maximal) probability of them ending up at the same position?}
		\end{center}
		Clearly, this (minimal or maximal) probability is \emph{not unique} but \emph{depends} on the initial values of $\varb_1$ and $\varb_2$, and determining these probabilities requires reasoning about (variable valuation-dependent) infima/suprema. By combining our quantifier elimination algorithm with a program logic tailored to reasoning about such probabilities (or more general expected values), we obtain an algorithm that computes the expressions (after simplification)
		\[
		    \textnormal{\textsf{MinProb}}(\varb_1,\varb_2) \eeq \iverson{\varb_1 = \varb_2} \cdot \frac{1}{4}
			\qquad{\text{and}}\qquad
			\textnormal{\textsf{MaxProb}}(\varb_1,\varb_2) \eeq  \frac{3}{4} + \iverson{\varb_1 = \varb_2} \cdot \frac{1}{4}~,
		\]
		mapping initial values of $\varb_1$ and $\varb_2$ to the sought-after minimal/maximal probabilities of the players ending up at the same position. In words: If the players start at the same position $\varb_1 = \varb_2$, then the minimal probability is $\nicefrac{1}{4}$, otherwise $0$. The maximal probability, on the other hand, is $1$ if the players start at the same position, otherwise $\nicefrac{1}{4}$. Put slightly more formally, our algorithm is guaranteed to generate \emph{quantifier-free piecewise linear quantities} serving as closed-form representations for such functions mapping initial variable valuations to \emph{expected outcomes}, i.e., probabilities like the above or more general expected values.
		\hfill $\triangle$
	\end{example}
%
%
%
\paragraph{Related Work.}
Our quantifier elimination algorithm is based on ideas related to Fourier-Motzkin variable elimination \cite{motzkin1936beitraege,fourier1825analyse}. Most closely related is the work by Zamani, Sanner, and Fang on symbolic dynamic programming \cite{DBLP:conf/uai/SannerDB11}. They introduce the symbolic $\max_\vara$ operator on piecewise defined functions of type $\RR^n \to \RR$, which also exploits the partitioning property (similar to \Cref{Nlemm:FirstLevelSup}) and disjunctive normal forms (similar to \Cref{lemm:ExploitDNF}). We identify the following key differences: the functions considered in \cite{DBLP:conf/uai/SannerDB11} must be (i) continuous, (ii) bounded (so that all suprema are actually maxima), and (iii) they must neither contain $\infty$ nor $-\infty$. We do not impose these restrictions since they do not apply to piecewise defined functions obtained from, e.g., applying the program logics mentioned in \Cref{sec:nondet_programs}. \cite{DBLP:conf/uai/SannerDB11}, on the other hand, also considers piecewise quadratic functions, whereas we focus on piecewise linear functions. Finally, we consider \emph{both} the elimination of $\Sup$ and $\Inf$ and provide a rigorous formalization and soundness proofs alongside upper space complexity bounds, whereas \cite{DBLP:conf/uai/SannerDB11} considers maximization only. Quantitative quantifier elimination is moreover related to \emph{parametric programming} \cite{gstill_parametric}. We are, however, not aware of an approach which tackles the computational problem we investigate as it is required from the perspective of quantitative quantifier elimination. 

Khatami, Pourmahdian, and Tavana \cite{DBLP:journals/fss/TavanaPK24} investigate a Craig interpolation property of first-order G\"odel logic, where formulae evaluate to real numbers in the unit interval $[0,1]$.  Apart from the more restrictive semantic codomain, \cite{DBLP:journals/fss/TavanaPK24} operates in an uninterpreted setting whereas we operate within linear rational arithmetic extended by $\infty$ and $-\infty$. Baaz and Veith \cite{DBLP:conf/csl/BaazV98} investigate quantifier elimination of first-order logic over fuzzy algebras over the same semantic codomain. Teige and Fr\"anzle \cite{DBLP:conf/rp/MahdiF14} investigate Craig interpolation for stochastic Boolean satisfiability problems, where formulae also evaluate to numbers instead of truth values. Quantified variables are assumed to range over a finite domain.

\paragraph{Difference to Conference Paper.} This article extends  \cite{DBLP:conf/fossacs/BatzKO25} by (i) detailed proofs of the soundness of our quantitative quantifier elimination algorithm, (ii) detailed proofs of our quantitative Craig interpolation theorems, and (ii) Contribution \ref{contr:wp} on effectively constructing minimal/maximal expected outcomes of probabilistic programs with unbounded non-determinism (cf.\ \Cref{sec:nondet_programs}).

\paragraph{Outline.} In \Cref{sec:quantities}, we introduce piecewise linear quantities. We present our quantifier elimination algorithm alongside illustrative examples, essential theorems, and a complexity analysis in \Cref{sec:qelim}. Our quantitative Craig interpolation results are presented in \Cref{sec:craig}. Effectively constructing expected outcomes of non-deterministic probabilistic programs is treated in \Cref{sec:nondet_programs}.
Finally, we conclude in \Cref{sec:concl}.

		\section{Piecewise Linear Quantities}
		\label{sec:quantities}
		 Throughout, we fix a finite non-empty set $\Vars=\{\vara,\varb,\varc,\ldots\}$ of variables. We denote by $\NNO$ the set of natural numbers including $0$ and let $\NN = \NNO \setminus{0}$. The set of rationals (resp.\ reals) is denoted by $\QQ$ (resp.\ $\RR$) and we denote by $\ExtQQ = \QQ \cup \{-\infty,\infty\}$ (resp.\ $\ExtRR = \RR \cup \{-\infty,\infty\}$) the set of \emph{extended} rationals (resp.\ reals).
A \emph{(variable) valuation} $\psa\colon\Vars\to\QQ$ assigns a rational number to each variable. The set of all valuations is denoted by $\PS$. 

Towards defining piecewise linear quantities and their semantics, we briefly recap linear arithmetic and Boolean expressions. 
\begin{definition}[Linear Arithmetic Expressions]\label{def:LinAX}
The set \LinAX{} of \emph{linear arithmetic expressions} consists of all expressions \axa{} of the form
\begin{align*}
\axa\quad{}={}\quad \rata_0+\sum\limits_{\idxa=1}^{\cardnum}\rata_\idxa\cdot\vara_\idxa~,
\end{align*}
where $\rata_0,\ldots,\rata_{\cardnum} \in \QQ$ and $\vara_1,\ldots,\vara_{\cardnum} \in \Vars$.
Moreover, we define the set \ExtLinAX{} of \emph{extended linear arithmetic expressions} as 
\begin{align*}
\ExtLinAX\eeq \LinAX\cup\{-\infty,\infty\}~.
\tag*{\finishdefinition}
\end{align*}
\end{definition}
Notice that every arithmetic expression is normalized in the sense that every variable occurs exactly once. 
We often omit summands $\rata_\idxa \cdot \vara_\idxa$ (resp.\ $\rata_0$) with $\rata_\idxa = 0$ (resp.\ $\rata_0=0$) for the sake of readability. Given $\axa$ as above, \mbox{we denote by}
\[
	\FVars{\axa} \eeq \{ \vara_i \in \Vars ~\mid~ \rata_i \neq 0  \}
\]
the set of (necessarily free) variables occurring in $\axa$. 
 For $\eaxa = \infty$ or $\eaxa = -\infty$, we define $\FVars{\eaxa}  = \emptyset$. Given $\eaxa\in\ExtLinAX$ and $\vara_\idxc\in\Vars$, we say that 
 \begin{align*}
 	\begin{cases}
 		 \vara_\idxc\text{ occurs positively in }\eaxa &  \text{if}~\eaxa = \rata_0+\sum\limits_{\idxa=1}^{\cardnum}\rata_\idxa\cdot\vara_\idxa ~\text{and}~\rata_\idxc>0 \\
 		 \vara_\idxc\text{ occurs negatively in }\eaxa &  \text{if}~\eaxa = \rata_0+\sum\limits_{\idxa=1}^{\cardnum}\rata_\idxa\cdot\vara_\idxa ~\text{and}~\rata_\idxc<0 ~.
 	\end{cases}
 \end{align*}
Finally, given $\psa\in\PS$, the \emph{semantics} $\sem{\psa}{\eaxa} \in \ExtQQ$ of $\eaxa$ under $\psa$ is defined as 
\[
	\sem{\psa}{\eaxa} \eeq 
	\begin{cases}
		\rata_0+\sum\limits_{\idxa=1}^{\cardnum}\rata_\idxa\cdot\psa(\vara_\idxa) &\text{if}~\eaxa = \rata_0+\sum\limits_{\idxa=1}^{\cardnum}\rata_\idxa\cdot\vara_\idxa \\
	-\infty & \text{if}~\eaxa = -\infty \\
	\infty & \text{if}~\eaxa = \infty~.
	\end{cases}
\]

\begin{definition}[Boolean Expressions]\label{def:Bool}
\emph{Boolean expressions} $\bxa$ in the set $\Bool$ adhere to the following grammar, where $\eaxa\in\ExtLinAX$:
\begin{align*}
\bxa \qquad {}\longrightarrow{} \qquad 
 & \eaxa<\eaxa ~\mid~ \eaxa \leq \eaxa ~\mid~ \eaxa > \eaxa ~\mid~ \eaxa \geq \eaxa \tag*{(linear inequalities) \phantom{\finishdefinition}}\\
{}|~& \lnot\bxa\tag*{(negation) \phantom{\finishdefinition}}\\
{}|~& \bxa\land\bxa  \tag*{(conjunction) \phantom{\finishdefinition}}\\
{}|~& \bxa\lor\bxa  \tag*{(disjunction) \finishdefinition}
\end{align*}
%
\end{definition}
The Boolean constants $\true,\false$ and the Boolean connective $\bimplies$ are syntactic sugar.
We assume that $\lnot$ binds stronger than $\land$ binds stronger than $\lor$, and we use parentheses to resolve ambiguities. The set $\FVars{\bxa}$ of (necessarily free) variables in $\bxa$ is defined as usual. Given a valuation $\psa$, we write $\psa \models \bxa$ if $\psa$ satisfies $\bxa$ and $\psa \not\models \bxa$ otherwise, which is defined in the standard way. Finally, if $\psa\not\models\bxa$ for all $\psa\in\PS$, then we say that $\bxa$ is \emph{unsatisfiable}\footnote{Unsatisfiability of Boolean expressions is decidable by SMT solving over linear rational arithmetic (LRA) as is implemented, e.g., by the solver $\toolzt$ \cite{z3}.}.
\begin{definition}[Piecewise Linear Quantities (adapted from \cite{DBLP:journals/pacmpl/BatzKKM21,DBLP:conf/sas/KatoenMMM10})]\label{def:LQ}
The set $\LQ$ of \emph{(piecewise linear) quantities} consists of all expressions
\begin{align*}
\lqa \quad{}={}\quad \quanta_1\vara_1\ldots\quanta_\nnc\vara_\nnc\colon\SNF~,  
\end{align*}
where $\nnc \in \NNO$, $\nna\in\NN$, and where
\begin{enumerate}
\item\label{def:LQ1} $\quanta_\idxc\in\{\Sup,\Inf\}$ and $\vara_\idxc\in\Vars$ for all $\idxc\in\{1,\ldots,\nnc\}$,
\item\label{def:LQ2} $\bxa_\idxa\in\Bool$ and $\eaxa_\idxa\in\ExtLinAX$ for all $\idxa\in\{1,\ldots,\nna\}$,
\item\label{def:LQ3} for all $\psa\in\PS$ and all $\idxa,\idxc\in\{1,\ldots,\nna\}$ with $\idxa\neq\idxc$, we have\footnote{This is decidable by SMT solving over LRA. Hence, the set $\LQ$ is computable.}
 \begin{align*}
 \psa\models\bxa_{\idxa}~\text{and}~\psa\models\bxa_{\idxc}
 \qquad\text{implies}\qquad 
 \eaxa_{\idxa}\neq\infty~\text{or}~\eaxa_{\idxc}\neq-\infty ~.
 \tag*{\finishdefinition}
 \end{align*}
\end{enumerate}
\end{definition}
Here $\iverson{\bxa}$ is the \emph{Iverson bracket} \cite{iverson_knuth} of the Boolean expression $\bxa$, which evaluates to $1$ under valuation $\psa$ if $\psa \models \bxa$, and to $0$ otherwise.
$\Sup$ is called the \emph{supremum quantifier} and $\Inf$ is the \emph{infimum quantifier}. The quantitative quantifiers take over the role of the classical $\exists$- and $\forall$-quantifiers from first-order predicate logic. Their semantics is detailed further below. 
 If $\nnc = 0$, then we call $\lqa$ \emph{quantifier-free}. Given $\lqa$ as above, the set of \emph{free variables} in $\lqa$ is 
\[
	\FVars{\lqa}\quad{}={}\quad\bigcup\limits_{\idxc=1}^{\nna}\big(\FVars{\bxa_\idxc}\cup\FVars{\eaxa_\idxc}\,\big) \setminus \{\vara_1,\ldots,\vara_\nnc\}~.
\]
For quantifier-free $\lqa$, we introduce the shorthand $\iverson{\bxa}\cdot\lqa \eeq \sum_{i=1}^n \iverson{\bxa \wedge \bxa_i}\cdot \eaxa$.

Towards defining the semantics of quantities, we use the following notions:
Given a valuation $\psa\in\PS$, a variable $\vara\in\Vars$, and $\rata\in\QQ$, we define the valuation obtained from updating the value of $\vara$ under $\psa$ to $\rata$ as
\begin{align*}
\psa\statesubst{\vara}{\rata}(\varb) \quad {}={}\quad
\begin{cases}
\rata & \text{if }\varb=\vara\\
\psa(\varb) & \text{otherwise}~.
\end{cases}
\end{align*}
As is standard \cite{aliprantis1981principles} in the realm of extended reals, we define for all $\reala\in\RR$:
\begin{enumerate}
	\item $\reala+\infty = \infty+\reala= \infty$
	\item $\reala+(-\infty) = -\infty+\reala =  -\infty$
	\item $\infty+\infty = \infty$ 
	\item $-\infty+(-\infty) = -\infty$ 
	\item $-\infty \cdot 0 = 0 \cdot (-\infty) = 0 = 0\cdot \infty = \infty \cdot 0$
	\item if $\reala >0$, then $\reala\cdot\infty = \infty \cdot \reala = \infty$
	\item if $\reala >0$, then $\reala\cdot(-\infty) = -\infty \cdot \reala = -\infty$
	\item if $\reala <0$, then $\reala\cdot\infty = \infty \cdot \reala = -\infty$
	\item if $\reala <0$, then $\reala\cdot(-\infty) = -\infty \cdot \reala = \infty$
\end{enumerate}
The terms $\infty+(-\infty)$ and $-\infty+\infty$ are undefined. The condition from \Cref{def:LQ}.\ref{def:LQ3} ensures that we never encounter such undefined terms, which yields the semantics of piecewise linear quantities to be well-defined:
%
%
\begin{definition}[Semantics of Piecewise Linear Quantities]\label{def:SemOfLQ}
Let $\lqa\in\LQ$ and $\psa\in\PS$. The \emph{semantics\footnote{It follows from the soundness of our quantifier elimination algorithm (\Cref{thm:algo_sound}) that all $\lqa\in\LQ$ evaluate to extended rationals in $\ExtQQ$.}} $\sem{\psa}{\lqa}\in\ExtRR$ of $\lqa$ under $\psa$ is defined inductively: 
\begin{align*}
\sem{\psa}{\SNF} & \quad{}={}\quad\sum\limits_{\idxa=1}^n 
\begin{cases}
	\sem{\psa}{\eaxa_\idxa} &\text{if}~\psa \models \bxa_i \\
	0 &\text{if}~\psa \not\models \bxa_i
\end{cases}
	\\
\sem{\psa}{\SupOO{\vara}{\lqb}}&\quad{}={}\quad\sup\Bigl\{\sem{\psa\statesubst{\vara}{\rata}}{\lqb}\mid\rata\in\QQ\Bigr\}\\
\sem{\psa}{\InfOO{\vara}{\lqb}}&\quad{}={}\quad\inf\Bigl\{\sem{\psa\statesubst{\vara}{\rata}}{\lqb}\mid\rata\in\QQ\Bigr\}
\tag*{\finishdefinition}
\end{align*} 
%
\end{definition}
In words, if $\lqa$ is quantifier-free, then $\sem{\psa}{\lqa}$ evaluates to the sum of all extended arithmetic expressions $\eaxa_\idxc$ for which $\psa \models \bxa_\idxc$. The semantics of $\Sup$ and $\Inf$ makes it evident that the quantitative quantifiers generalize the classical quantifiers: Whereas $\exists$ maximizes --- so to speak --- a truth value, the $\Sup$-quantifier maximizes a quantity by evaluating to the supremum obtained from evaluating $\lqa$ under all possible values for $\vara$. The semantics of $\Inf$ behaves analogously by evaluating to an infimum.

Finally, we say that two piecewise linear quantities $\lqa,\lqb\in\LQ$ are \emph{(semantically) equivalent}, denoted by $\lqa\equiv\lqb$, if $\sem{\psa}{\lqa}=\sem{\psa}{\lqb}$ for all $\psa \in \PS$.
\label{def:SE}

		\section{Quantitative Quantifier Elimination}
		\label{sec:qelim}
In this section, we detail our quantifier elimination procedure alongside illustrative examples. Given an \emph{arbitrary} piecewise linear quantity
\[
		\lqa \quad{}={}\quad \quanta_1\vara_1\ldots\quanta_\nnc\vara_\nnc\colon\SNF~{}\in{}~\LQ~,  
\]
we aim to automatically compute some $\QE{\lqa} \in \LQ$ satisfying
\[
	\QE{\lqa}~\text{is quantifier-free}\qquad\text{and}\qquad\QE{\lqa}~\text{is equivalent to $\lqa$, i.e., $\lqa \equiv \QE{\lqa}$}~.
\]
As with classical quantifier elimination, it suffices being able to eliminate piecewise linear quantities containing \emph{a single} quantifier, which then enables to process quantities containing an \emph{arbitrary} number of quantifiers in an inner- to outermost fashion, i.e.,
\[
	\QE{\lqa} \eeq \QE{\quanta_1\vara_1 \colon \QE{\ldots\QE{\quanta_\nnc\vara_\nnc\colon\SNF}}}~.
\] 
%
%
Throughout the next sections, we thus fix an $\lqa$ of the form 
\begin{align}
	\label{eqn:f_stage0}
	\lqa \quad{}={}\quad\quanta\vara \colon\SNF~,
\end{align}
where $\quanta \in \{\Sup,\Inf\}$. 
We proceed by means of a $3$-level divide-and-conquer approach. We describe each of the involved stages in Sections \ref{sec:stage1}-\ref{sec:stage3}. In \Cref{sec:algo}, we summarize our approach by providing an algorithm.
%
%
%
\subsection{Stage 1: Exploiting the Guarded Normal Form}
\label{sec:stage1}
First, we transform the input $\lqa$ into a normal form (extending \cite[Section 5.1]{DBLP:conf/sas/KatoenMMM10}), which enables us to subdivide the quantifier elimination problem into simpler sub-problems. This normal form enforces a more explicit representation of the $\ExtRR$-valued function a piecewise linear quantity represents.
\begin{definition}[Guarded Normal Form]
	\label{def:GNF}
	Let $\lqc \in \LQ$ be given by 
	\[
	\lqc \quad{}={}\quad \quanta_1\vara_1\ldots\quanta_\nnc\vara_\nnc\colon\SNF
	\]
	and fix some variable $\vara \in \Vars$. 
	We say that $\lqc$ is in \emph{guarded normal form w.r.t.\ $\vara$}, denoted by $\lqc \in \XGNF{\vara}$, if all of the following conditions hold:
	\begin{enumerate}
		\item\label{def:GNF1} the $\bxa_i$ partition the set $\PS$ of valuations, i.e., for all $\psa\in\PS$ there exists exactly one $\idxa\in\{1,\ldots,\nna\}$ such that $\psa\models\bxa_\idxa$,
		\item\label{def:GNF2} the $\bxa_\idxa$ are in disjunctive normal form (\DNF), i.e., 
		\[
		\forall i \in \{1,\ldots,n\}\colon \quad \bxa_\idxa~\text{is of the from}~\bigvee\limits_{\idxc}\bigwedge\limits_{\idxd}\lit_{\idxc,\idxd}~,
		\]
		where each $\lit_{\idxc,\idxd} \in \ExtLinAX$
		is a (strict or non-strict) linear inequality,
		\item\label{def:GNF3} for each linear inequality $\lit$ in $\lqc$, it holds that if $\vara\in\FVars{\lit}$, then 
		\[
		\lit \quad{}={}\quad \vara\sim\eaxb
		\]
		for some $\eaxb\in\ExtLinAX$ with $\vara\not\in\FVars{\eaxb}$ and $\sim\hspace{1mm}\in\{>,\geq,<,\leq\}$.   
		\hfill \finishdefinition
	\end{enumerate}
\end{definition}
If Condition \ref{def:GNF}.\ref{def:GNF1} holds, then we say that $\lqc$ is \emph{partitioning}. 
Speaking of a \emph{normal form} is justified by 
the fact that every piecewise linear quantity $\lqc\in\LQ$ can effectively be transformed into a semantically equivalent $\lqc' \in\LQ$ in guarded normal form with respect to variable $\vara\in\Vars$, i.e., such that $\lqc' \in \XGNF{\vara}$ and $\lqc\equiv\lqc'$. To see this, let $\lqc$ be given as above. Towards obtaining $\lqc'$, we first establish the partitioning property by enumerating the possible assignments of truth values to the $\bxa_i$. Put more formally, we construct

\begin{align*}
	\sum_{
		\big((\rho_1,\eaxc_1),\ldots,(\rho_n,\eaxc_n)\big) {}\,\in\, {} {\Large \bigtimes_{\, i=1}^{\, n}} \big\{ (\bxa_i,\eaxa_i),(\neg\bxa_i,0) \big\}
		}
	\begin{cases}
		\epsilon &\text{if}~\bigwedge_{i=1}^n \rho_i~\text{is unsat.} \\
		\big[ \bigwedge_{i=1}^n \rho_i \big]\cdot \sum_{i=1}^n \eaxc_i &\text{otherwise}~,
	\end{cases}
	%
\end{align*}
where we let $\epsilon + \eaxc = \eaxc = \eaxc + \epsilon$ for all $\eaxc \in \ExtLinAX$ and obey the rules for treating $\infty$ and $-\infty$ from \Cref{sec:quantities}. We then obtain $\lqc'$ by transforming the so-obtained Boolean expressions into DNF and isolating $\vara$ in every inequality where $\vara$ occurs. Notice that if $\lqc$ satisfies the conditions from \Cref{def:LQ}, then so does $\lqc'$. In particular, when constructing sums of the form $\sum_{i=1}^n \eaxc_i$, we never encounter expressions of the form $\infty + (-\infty)$ or $-\infty + \infty$.
\begin{example}
	\label{ex:running_ex_1}
	Recall the piecewise linear quantity from \Cref{ex:intro} given by
	\begin{align*}
		 \Sup \vara \colon \iverson{\varb_1 \geq \varc \bimplies (\vara - 2 < \varb_1
		 	\wedge -\vara \geq \varb_3 \wedge \vara \geq \varb_2)}\cdot (2\cdot \vara + \varc) ~,
	\end{align*}
	which is \emph{not} in $\XGNF{\vara}$. Applying the construction from above yields
	\begin{align*}
		 &\Sup \vara \colon \iverson{\varb_1 < \varc \vee (\vara  < \varb_1 + 2 
		 	\wedge \vara \leq -\varb_3\wedge \vara \geq \varb_2)}\cdot (2\cdot \vara + \varc) \\
		 &\qquad {}+{} \iverson{(\varb_1 \geq \varc \wedge \vara  \geq \varb_1 + 2) 
		 	\vee (\varb_1 \geq \varc \wedge \vara > -\varb_3)\vee ( \varb_1 \geq \varc \wedge \vara < \varb_2)}\cdot 0 
	\end{align*}
	which \emph{is} in $\XGNF{\vara}$ and will serve us as a running example.
	\hfill \finishexample
\end{example}
Now assume w.l.o.g.\ that the input quantity  $\lqa$ is in $\XGNF{\vara}$.
Each of the Conditions \ref{def:GNF}.\ref{def:GNF1}-\ref{def:GNF3} is essential to our approach. We will now exploit that $\lqa$ is partitioning. Given $\bxa \in \Bool$ and $\eaxa \in \ExtLinAX$, we define the shorthands
\begin{align*}
	\bxa\searrow\eaxa\eeq\down{\bxa}{\eaxa}
	\qquad\text{and}\qquad
	\bxa\nearrow\eaxa\eeq\up{\bxa}{\eaxa}~.
\end{align*}
Notice that these quantities are always partitioning.
Now consider the following:
\begin{restatable}{theorem}{NlemFirstLevSup}
	\label{Nlemm:FirstLevelSup}
	Let $\vara\in\Vars$ and let $\SNF\in\XGNF{\vara}$. We have for all $\psa\in\PS$:
	\begin{enumerate}
		\item $\sem{\psa}{\SupOO{\vara}{\SNF}}=\max\bigl\{\sem{\psa}{\SupOO{\vara}{(\bxa_\idxa\searrow\eaxa_\idxa)}}\mid\idxa\in\{1,\ldots,\nna\}\bigr\}$
		\item $\sem{\psa}{\InfOO{\vara}{\SNF}}=\min\bigl\{\sem{\psa}{\InfOO{\vara}{(\bxa_\idxa\nearrow\eaxa_\idxa)}}\mid\idxa\in\{1,\ldots,\nna\}\bigr\}$
	\end{enumerate}
\end{restatable}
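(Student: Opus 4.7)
The plan is to exploit Condition \ref{def:GNF}.\ref{def:GNF1} of the guarded normal form (the partitioning property): for every valuation $\psa\statesubst{\vara}{\rata}$ there is a \emph{unique} index $\idxa(\rata)\in\{1,\ldots,\nna\}$ with $\psa\statesubst{\vara}{\rata}\models\bxa_{\idxa(\rata)}$. Hence all but one Iverson bracket vanish, and
\[
\sem{\psa\statesubst{\vara}{\rata}}{\SNF} \eeq \sem{\psa\statesubst{\vara}{\rata}}{\eaxa_{\idxa(\rata)}}~.
\]
I will prove only the $\Sup$-case in detail; the $\Inf$-case is entirely symmetric by replacing $-\infty$ with $\infty$ and sup with inf.

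Partition $\QQ$ into the (possibly empty) sets $A_\idxa \eeq \{\rata\in\QQ\mid\psa\statesubst{\vara}{\rata}\models\bxa_\idxa\}$. Using that a supremum over a disjoint union equals the maximum of the suprema over the pieces (with the convention $\sup\emptyset=-\infty$, which is a proper element of $\ExtRR$), the semantic definition of $\Sup$ yields
\[
\sem{\psa}{\SupOO{\vara}{\SNF}} \eeq \sup_{\rata\in\QQ}\sem{\psa\statesubst{\vara}{\rata}}{\eaxa_{\idxa(\rata)}} \eeq \max_{\idxa=1}^{\nna}\sup_{\rata\in A_\idxa}\sem{\psa\statesubst{\vara}{\rata}}{\eaxa_\idxa}~.
\]
On the right-hand side of the theorem, unfolding $\bxa_\idxa\searrow\eaxa_\idxa=\down{\bxa_\idxa}{\eaxa_\idxa}$ gives
\[
\sem{\psa\statesubst{\vara}{\rata}}{\bxa_\idxa\searrow\eaxa_\idxa}\eeq
\begin{cases}
\sem{\psa\statesubst{\vara}{\rata}}{\eaxa_\idxa}&\text{if }\rata\in A_\idxa\\
-\infty&\text{otherwise}~,
\end{cases}
\]
so $\sem{\psa}{\SupOO{\vara}{(\bxa_\idxa\searrow\eaxa_\idxa)}}$ equals $\sup_{\rata\in A_\idxa}\sem{\psa\statesubst{\vara}{\rata}}{\eaxa_\idxa}$ (absorbing the $-\infty$ summand, which is neutral for $\sup$ and also equals the supremum when $A_\idxa=\emptyset$). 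Matching the two expressions gives the claim for $\Sup$; the $\Inf$-claim follows line-by-line with the $\nearrow$-shorthand contributing $+\infty$ as the neutral element for $\inf$.

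The main obstacle is purely bookkeeping around the $\pm\infty$ edge cases: (i) some $A_\idxa$ may be empty, which is why we need $\sup\emptyset=-\infty$ so that the $\max$ on the right still correctly records that index $\idxa$ contributes nothing; (ii) some $\eaxa_\idxa$ may itself be $\pm\infty$, but Condition \ref{def:LQ}.\ref{def:LQ3} together with the partitioning property rules out simultaneously satisfiable disjuncts assigning $\infty$ and $-\infty$ to the same valuation, so the pointwise sum defining $\sem{\psa\statesubst{\vara}{\rata}}{\SNF}$ is well-defined and collapses to a single summand as claimed. Once these conventions are in place, the argument reduces to the elementary fact that suprema (infima) distribute over disjoint unions.
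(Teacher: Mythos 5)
Your proof is correct and takes essentially the same route as the paper's: both exploit the partitioning property to collapse the sum to a single summand, split $\QQ$ by which $\bxa_\idxa$ holds, use that the supremum of a finite union is the maximum of the individual suprema, and use that $-\infty$ (resp.\ $\infty$) is neutral for $\sup$ (resp.\ $\inf$) to re-extend each piece to all of $\QQ$. The edge cases you flag (empty $A_\idxa$, $\pm\infty$-valued $\eaxa_\idxa$) are handled the same way in the paper.
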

\begin{proof}
	This is a consequence of the fact that the quantity is partitioning and that $-\infty$ (resp.\ $\infty$) are neutral wr.t. $\max$ (resp.\ $\min$).
	See \Cref{Nproof:lemm:FirstLevelSup} \mbox{for details.}
\end{proof}
We may thus consider each summand of the input quantity $\lqa$ separately. Assuming we can compute $\QE{\SupOO{\vara}{\bxa_\idxa\searrow\eaxa_\idxa}}$ and $\QE{\InfOO{\vara}{\bxa_\idxa\nearrow\eaxa_\idxa}}$, we obtain the 
sought-after quantifier-free equivalent of $\lqa$ by effectively constructing valuation-wise minima and maxima of finite sets of partitioning quantities as follows:
\begin{lemma}
	\label{lemm:statewise_maxmin}
	Let $\seta=\{\qflqa_1,\ldots,\qflqa_\nna\} \subseteq \LQ$ for some $\nna\geq 1$, where each 
	\[ 
		\qflqa_\idxa 
		\quad{}={}\quad
		\sum\limits_{\idxc=1}^{\nnb_\idxa}[\bxa_{\idxa,\idxc}]\cdot\eaxa_{\idxa,\idxc}
	\]
	is partitioning. Then:
	\begin{enumerate}
		\item There is an effectively constructible $\MAX(\seta) \in \LQ$ such that
		\[
		\forall \psa\in\PS \colon \quad \sem{\psa}{\MAX(\seta)} \eeq \max\bigl\{\sem{\psa}{\qflqa_\idxa}\mid\idxa\in\{1,\ldots,\nna\}\bigr\}~.
		\]
		\item There is an effectively constructible $\MIN(\seta) \in \LQ$ such that 
		\[
		\forall \psa\in\PS \colon \quad \sem{\psa}{\MIN(\seta)} \eeq \min\bigl\{\sem{\psa}{\qflqa_\idxa}\mid\idxa\in\{1,\ldots,\nna\}\bigr\}~.
		\]
	\end{enumerate}
	Moreover, both $\MAX(\seta)$ and $\MIN(\seta)$ are partitioning.
\end{lemma}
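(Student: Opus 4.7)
The plan is to exploit the partitioning property of each $\qflqa_\idxa$ by taking the \emph{product partition} of all $\nna$ guards and then, within each joint cell, further splitting according to which of the $\nna$ candidate summands attains the maximum (resp.\ minimum). The construction of $\MIN(\seta)$ is entirely symmetric to that of $\MAX(\seta)$, so I describe only the latter.

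For each tuple $\vec{\idxc}=(\idxc_1,\ldots,\idxc_\nna)$ with $\idxc_\idxa\in\{1,\ldots,\nnb_\idxa\}$, I set $\bxa_{\vec{\idxc}}=\bigwedge_{\idxa=1}^{\nna}\bxa_{\idxa,\idxc_\idxa}$. Since each $\qflqa_\idxa$ is partitioning, for every $\psa\in\PS$ there is \emph{exactly one} tuple $\vec{\idxc}(\psa)$ with $\psa\models\bxa_{\vec{\idxc}(\psa)}$, and on that cell $\sem{\psa}{\qflqa_\idxa}=\sem{\psa}{\eaxa_{\idxa,\idxc_\idxa(\psa)}}$; tuples whose $\bxa_{\vec{\idxc}}$ is unsatisfiable (decidable by SMT over LRA) may be dropped. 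I then refine each such cell by introducing, for every candidate winning index $\idxa^{*}\in\{1,\ldots,\nna\}$, the sub-guard
\[
\bxa_{\vec{\idxc},\idxa^{*}} ~{}={}~ \bxa_{\vec{\idxc}} \,\wedge\, \bigwedge_{\idxa<\idxa^{*}}\eaxa_{\idxa^{*},\idxc_{\idxa^{*}}}>\eaxa_{\idxa,\idxc_\idxa} \,\wedge\, \bigwedge_{\idxa>\idxa^{*}}\eaxa_{\idxa^{*},\idxc_{\idxa^{*}}}\geq\eaxa_{\idxa,\idxc_\idxa},
\]
which forces $\idxa^{*}$ to be the \emph{smallest} index attaining the maximum of $\eaxa_{1,\idxc_1},\ldots,\eaxa_{\nna,\idxc_\nna}$. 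My candidate is $\MAX(\seta)=\sum_{\vec{\idxc}}\sum_{\idxa^{*}=1}^{\nna}\iverson{\bxa_{\vec{\idxc},\idxa^{*}}}\cdot\eaxa_{\idxa^{*},\idxc_{\idxa^{*}}}$.

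Two checks remain — and these are the main obstacles in turning the sketch into a proof. First, I must verify that $\MAX(\seta)\in\LQ$ is partitioning. The lexicographic tie-break ensures that on every $\psa$ exactly one sub-guard $\bxa_{\vec{\idxc}(\psa),\idxa^{*}(\psa)}$ fires, so $\MAX(\seta)$ is indeed partitioning, and Condition \ref{def:LQ3} of \Cref{def:LQ} then holds vacuously since no two guards are simultaneously satisfied. Second, I must verify semantic correctness in the presence of $\pm\infty$. The inequalities between \emph{extended} linear expressions occurring in $\bxa_{\vec{\idxc},\idxa^{*}}$ are already admissible by \Cref{def:Bool} and evaluate consistently with the conventions of \Cref{sec:quantities} (for instance, $\infty\geq\eaxb$ is always true, while $-\infty\geq\eaxb$ holds only when $\eaxb$ is syntactically $-\infty$), so no dedicated case split on infinities is needed: the chosen sub-cell always picks out the pointwise semantic maximum. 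The construction of $\MIN(\seta)$ is obtained by swapping $>$ with $<$ and $\geq$ with $\leq$ throughout.
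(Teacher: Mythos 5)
Your construction is exactly the paper's: the product partition over tuples $(\idxc_1,\ldots,\idxc_\nna)$ combined with the lexicographic tie-break (strict $>$ against smaller indices, non-strict $\geq$ against larger ones) to select the maximizing summand of smallest index, with $\MIN$ obtained dually. The proposal is correct and matches the paper's proof in all essentials, including the justification that the tie-break yields the partitioning property and that comparisons involving $\pm\infty$ need no special treatment.
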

\begin{proof}
	Write $\underline{\nnb}_\idxa = \{1,\ldots, \nnb_\idxa\}$. We construct\footnote{As usual, the empty conjunction is equivalent to $\true$.}
	\begin{align*}
		&\MAX(\seta)  = \sum\limits_{(\idxc_1,\ldots,\idxc_\nna)\in \underline{\nnb}_1\times\ldots\times\underline{\nnb}_\nna}\quad\sum\limits_{\idxa=1}^{\nna}\\
		&\qquad \qquad\qquad\Bigl[
		\underbrace{\bigwedge\limits_{\nnc=1}^{\nna} \bxa_{\nnc,\idxc_\nnc}}_{
			\text{$\qflqa_\nnc$ evaluates to $\eaxa_{\nnc,\idxc_\nnc}$}
		}
		\land
		\underbrace{\bigwedge\limits_{\nnc= 1}^{\idxa - 1} \eaxa_{\idxa,\idxc_\idxa}>\eaxa_{\nnc,\idxc_\nnc}
			\land
			\bigwedge\limits_{\nnc= \idxa + 1}^{\nna} \eaxa_{\idxa,\idxc_\idxa}\geq \eaxa_{\nnc,\idxc_\nnc}}_{\substack{\text{$\qflqa_\idxa$ is the quantity with smallest index} \\ \text{evaluating to the sought-after maximum}}}
		\Bigr] \cdot \eaxa_{\idxa,\idxc_\idxa}~.
	\end{align*}
	$\MAX(\seta)$ iterates over all combinations of summands, which determine the value each of the $\qflqa_\idxa$ evaluate to (first summand). For each of these combinations, we check, for each $\idxa \in \{1,\ldots,\nna\},$ whether $\qflqa_\idxa$ evaluates to the sought-after maximum (second summand). $\MAX(\seta)$ is partitioning since the $\qflqa_\idxa$ are since $\MAX(\seta)$ selects the maximizing quantity with the \emph{smallest index}. The construction of $\MIN(\seta)$ is analogous and provided in \Cref{app:minima}. 
\end{proof}
Combining \Cref{Nlemm:FirstLevelSup} and \Cref{lemm:statewise_maxmin} thus yields:
\begin{enumerate}
	\item $\QE{\SupOO{\vara}{\SNF}}  \eeq \MAX\left(\bigl\{\QE{\SupOO{\vara}{(\bxa_\idxa\searrow\eaxa_\idxa)}}\mid\idxa\in\{1,\ldots,\nna\}\bigr\} \right)$ 
	\item $\QE{\InfOO{\vara}{\SNF}}  \eeq \MIN\left(\bigl\{\QE{\InfOO{\vara}{(\bxa_\idxa\nearrow\eaxa_\idxa)}}\mid\idxa\in\{1,\ldots,\nna\}\bigr\}\right)$
\end{enumerate}
\begin{example}
	\label{ex:running_ex_2}
	Continuing \Cref{ex:running_ex_1}, we have for every $\psa\in\PS$,
	\begin{align*}
		 \QE{\lqa} 
		%
		\eeq& 
		 \MAX \big(\bigl\{
		\QE{\Sup \vara \colon \big(\varb_1 < \varc \vee (\vara  < \varb_1 + 2 \wedge \vara \geq \varb_2)\big) \searrow 2\cdot \vara + \varc}, \\
		&\qquad \qquad 
		\QE{\Sup \vara \colon \big((\varb_1 \geq \varc \wedge \vara  \geq \varb_1 + 2)
			\vee \ldots)
			 \searrow 0}
		\bigr\}\big)~.
		\tag*{$\finishexample$}
\end{align*}
	%
\end{example}

\subsection{Stage 2: Exploiting the Disjunctive Normal Form}
\label{sec:stage2}
In this stage, we aim to eliminate the quantifiers from the simpler quantities
\[
	\SupOO{\vara}{(\bxa\searrow\eaxa)}
	\qquad\text{or}\qquad 
	\InfOO{\vara}{(\bxa\nearrow\eaxa)}~.
\]
Recall that we assume the input quantity $\lqa$ to be in guarded normal form w.r.t.~$\vara$, which yields the Boolean expression $\bxa$ to be in DNF (cf.\ \Cref{def:GNF}.\ref{def:GNF2}). Exploiting the disjunctive shape of $\bxa$ yields the following:
\begin{restatable}{theorem}{lemmExploitDNF}
	\label{lemm:ExploitDNF}
	Let $\eaxa \in\ExtLinAX$ be an extended arithmetic expression and let 
	\[
		\bxa
		\quad{}={}\quad
		\bigvee\limits_{\idxa=1}^n\disj_\idxa ~{}\in{}~ \Bool
	\]
	be a Boolean expression in DNF for some $n\geq 1$. We have for all $\psa\in\PS$:
	\begin{enumerate}
		\item $\sem{\psa}{\SupOO{\vara}{(\bxa\searrow\eaxa)}}=\max\bigr\{\sem{\psa}{\SupOO{\vara}{(\disj_\idxa\searrow\eaxa)}}\mid\idxa\in\{1,\ldots,\nna\}\bigr\}$
		\item $\sem{\psa}{\InfOO{\vara}{(\bxa\nearrow\eaxa)}}=\min\bigr\{\sem{\psa}{\InfOO{\vara}{(\disj_\idxa\nearrow\eaxa)}}\mid\idxa\in\{1,\ldots,\nna\}\bigr\}$
	\end{enumerate}
\end{restatable}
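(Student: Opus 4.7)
\medskip

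\noindent\textbf{Proof plan.} The two items are dual, so the plan is to prove (1) carefully and then obtain (2) by a symmetric argument (replacing $\sup$ by $\inf$, $\max$ by $\min$, $-\infty$ by $\infty$, and $\searrow$ by $\nearrow$). Fix $\psa \in \PS$. The first step is to unfold the semantics. By \Cref{def:SemOfLQ} and the abbreviation $\bxa \searrow \eaxa = \down{\bxa}{\eaxa}$ from Stage~1, the left-hand side of~(1) equals
\begin{align*}
	\sem{\psa}{\SupOO{\vara}{(\bxa \searrow \eaxa)}}
	\eeq \sup\Bigl\{\, v_{\psa,\vara}(\rata) \,\Bigm|\, \rata \in \QQ \,\Bigr\},
\end{align*}
where $v_{\psa,\vara}(\rata) = \sem{\psa\statesubst{\vara}{\rata}}{\eaxa}$ if $\psa\statesubst{\vara}{\rata} \models \bxa$ and $v_{\psa,\vara}(\rata) = -\infty$ otherwise. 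Set $S = \{\rata \in \QQ \mid \psa\statesubst{\vara}{\rata} \models \bxa\}$ and, for each $\idxa$, $S_\idxa = \{\rata \in \QQ \mid \psa\statesubst{\vara}{\rata} \models \disj_\idxa\}$. Since $\bxa$ is in DNF, $S = \bigcup_{\idxa=1}^{\nna} S_\idxa$.

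Next, I would discharge the corner case $S = \emptyset$. In that case $v_{\psa,\vara}(\rata) = -\infty$ for every $\rata \in \QQ$, so the left-hand side equals $-\infty$. Simultaneously each $S_\idxa = \emptyset$, so by the same unfolding $\sem{\psa}{\SupOO{\vara}{(\disj_\idxa \searrow \eaxa)}} = -\infty$ for all $\idxa$, and the right-hand side is the maximum of finitely many copies of $-\infty$, hence also $-\infty$.

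In the main case $S \neq \emptyset$, I would argue that because $v_{\psa,\vara}$ takes the value $-\infty$ off $S$ and at least one finite (or $+\infty$) value on $S$, dropping the $-\infty$ contributions does not change the supremum:
\begin{align*}
	\sup\Bigl\{\, v_{\psa,\vara}(\rata) \,\Bigm|\, \rata \in \QQ \,\Bigr\}
	\eeq \sup\Bigl\{\, \sem{\psa\statesubst{\vara}{\rata}}{\eaxa} \,\Bigm|\, \rata \in S \,\Bigr\}.
\end{align*}
Then I would invoke the elementary identity $\sup_{\rata \in \bigcup_\idxa S_\idxa} g(\rata) = \sup_\idxa \sup_{\rata \in S_\idxa} g(\rata)$ in $\ExtRR$, with the convention $\sup \emptyset = -\infty$. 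For each $\idxa$ with $S_\idxa \neq \emptyset$ the inner supremum equals $\sem{\psa}{\SupOO{\vara}{(\disj_\idxa \searrow \eaxa)}}$ by the same unfolding as before; for $S_\idxa = \emptyset$ both sides equal $-\infty$. Finally, since the outer index set $\{1,\ldots,\nna\}$ is finite, $\sup$ coincides with $\max$, yielding the stated equality.

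The only delicate point I anticipate is the bookkeeping around $\pm\infty$: one must check that the sup/$-\infty$ and inf/$+\infty$ conventions from \Cref{sec:quantities} interact correctly with the $\searrow$/$\nearrow$ constructs so that empty-set suprema exactly reproduce the values those constructs assign on falsifying valuations. Once this alignment is verified, both directions reduce to the standard sup-over-union identity, and the dual statement for~(2) is obtained by flipping every extremum.
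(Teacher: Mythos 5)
Your proposal is correct and follows essentially the same route as the paper's proof: unfold the semantics, use that $-\infty$ is neutral w.r.t.\ $\sup$ to restrict attention to the values of $\rata$ satisfying the guard, decompose that set as a finite union via the DNF shape, and apply the identity that the supremum of a finite union is the maximum of the individual suprema (your explicit treatment of the empty case is what the paper absorbs into the convention $\sup\emptyset=-\infty$). One tiny phrasing point: in the case $S\neq\emptyset$ you justify dropping the $-\infty$ contributions by saying the function takes \enquote{at least one finite (or $+\infty$) value on $S$}, but $\eaxa$ may itself be $-\infty$; the step remains valid because the supremum of a nonempty set is unchanged by adjoining $-\infty$ elements, irrespective of the values attained on $S$.
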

\begin{proof}
	We first observe that 
	\begin{align*}
		 \sem{\psa}{\SupOO{\vara}{(\bxa\searrow\eaxa)}} \eeq \sup \Bigl(\bigcup\limits_{\idxa=1}^n \Bigl\{\sem{\psa\statesubst{\vara}{\rata}}{[\disj_\idxa]\cdot\eaxa}\mid\rata\in\QQ\text{ and }\psa[\vara\mapsto\rata]\models\disj_\idxa\Bigr\}\Bigr)
	\end{align*}
	and then make use of the fact that the supremum of a finite union of extended reals is the maximum of the individual suprema, i.e., the above is equal to
	\begin{align*}
		&\max\Bigl(\bigcup\limits_{\idxa=1}^n \Bigl\{\sup\Bigl\{\sem{\psa\statesubst{\vara}{\rata}}{\disj_\idxa\searrow\eaxa}\mid\rata\in\QQ\Bigr\}\Bigr\}\Bigr)
		\tag{$-\infty$ is neutral w.r.t.\ $\sup$}\\
		\eeq&\max\Bigl(\bigcup\limits_{\idxa=1}^ n\bigl\{\sem{\psa}{\SupOO{\vara}{(\disj_\idxa\searrow\eaxa)}}\bigr\}\Bigr)\tag{\Cref{def:SemOfLQ}}\\
		\eeq&\max\bigr\{\sem{\psa}{\SupOO{\vara}{(\disj_\idxa\searrow\eaxa)}}\mid\idxa\in\{1,\ldots,\nna\}\bigr\}
		\tag{rewrite set}~.
	\end{align*}
	The reasoning for $\Inf$ is analogous.
	See \Cref{proof:lemm:ExploitDNF} for a detailed proof.
\end{proof}
Hence, combining \Cref{lemm:ExploitDNF} with \Cref{lemm:statewise_maxmin} reduces our problem further to eliminating quantifiers from the above simpler quantities. Put formally:
\begin{enumerate}
	\item $\QE{\SupOO{\vara}{(\bxa\searrow\eaxa)}} \eeq \MAX\big(\bigr\{{\QE{\SupOO{\vara}{(\disj_\idxa\searrow\eaxa)}}}\mid\idxa\in\{1,\ldots,\nna\}\bigr\} \big)$ \\
	\item $\QE{\InfOO{\vara}{(\bxa\nearrow\eaxa)}} \eeq \MIN\big(\bigr\{{\QE{\InfOO{\vara}{(\disj_\idxa\nearrow\eaxa)}}}\mid\idxa\in\{1,\ldots,\nna\}\bigr\}\big)$
\end{enumerate}
\begin{example}
	\label{ex:running_ex_3}
	Continuing \Cref{ex:running_ex_2}, we have 
	\begin{align*}
	&\QE{\Sup \vara \colon \big(\varb_1 < \varc \vee (\vara  < \varb_1 + 2 \wedge \vara \geq \varb_2)\big) \searrow 2\cdot \vara + \varc} \\
	\eeq&\MAX\big(\big\{
		\QE{\Sup \vara \colon \varb_1 < \varc \searrow 2\cdot \vara + \varc}, \\
		&\qquad\qquad \QE{\Sup \vara \colon (\vara  < \varb_1 + 2 \wedge \vara \geq \varb_2) \searrow 2\cdot \vara + \varc}
	 \big\}\big) ~.
	\end{align*}
	The second argument of $\MAX$ from \Cref{ex:running_ex_2} is treated analogously.
	\hfill\finishexample
	%
\end{example}

\subsection{Stage 3: Computing Valuation-Dependent Suprema and Infima}
\label{sec:stage3}
This is the most involved stage since we need to operate on the atomic level of the given expressions. We aim to eliminate the quantifiers from \mbox{quantities of the form}
\[
\SupOO{\vara}{(\bigwedge_{i=1}^n \lit_i\searrow\eaxa)}
\qquad\text{or}\qquad 
\InfOO{\vara}{(\bigwedge_{i=1}^n \lit_i \nearrow\eaxa)}~,
\]
where each $\lit_i$ is a linear inequality. We start with an example.
\begin{example}
	\label{ex:running_ex_4}
	Continuing \Cref{ex:running_ex_3}, we perform quantifier elimination on
	\[
	\lqc \eeq \Sup \vara \colon \underbrace{(\vara  < \varb_1 + 2 \wedge \vara \leq -\varb_3 \wedge \vara \geq \varb_2)}_{{}=\,\disj} \searrow \underbrace{\vphantom{(}2\cdot \vara + \varc}_{{}=\,\eaxa}~.
	\]
	Fix some valuation $\psa$. First observe that if there is no $\rata \in \QQ$ such that $\psa\statesubst{\vara}{\rata} \models \disj$ --- or, phrased in predicate logic, if $\psa \not\models \exists \vara \colon \disj$ ---,
	then $\sem{\psa}{\lqc}$ evaluates to $-\infty$. Otherwise, we need to inspect $\disj$ and $\eaxa$ closer in order to determine $\sem{\psa}{\lqc}$. Hence, eliminating the $\Sup \vara$-quantifier involves characterizing whether $\psa \models \exists \vara \colon \disj$ holds \emph{without referring to $\vara$}. This boils down to performing \emph{classical} quantifier elimination on the formula $\exists \vara \colon \disj$. We leverage classical Fourier-Motzkin variable elimination: Compare the bounds $\disj$ imposes on $\vara$ and encode whether they are consistent. Towards this end, we construct
	\[
		\phiSAT(\disj,\vara) 
		\quad{}={}\quad
		\underbrace{\varb_2 < \varb_1 +2 \wedge \varb_2 \leq -\varb_3}_{\text{equivalent to $\exists \vara \colon \disj$}}
			\quad{}\in\quad\Bool~.
		%
	\]
	Now, how can we characterize $\sem{\psa}{\lqc}$ in case $\psa \models \phiSAT(\disj,\vara)$? 
	We first observe that $\vara$ occurs positively in $\eaxa$. Therefore, intuitively, the $\Sup \vara$-quantifier aims to maximize the value of $\vara$ under all possible assignments satisfying $\disj$. Since we isolate $\vara$ in every inequality where $\vara$ occurs, we can readily read off $\disj$ that $\vara$'s maximal (or, in fact, \emph{supremal}) value is given by the \emph{minimum} of $\psa(\varb_1) + 2$ and $-\psa(\varb_3)$ --- the least of all upper bounds imposed on $\vara$. Overall, we get
	\begin{align*}
		\lqc ~{}\equiv{}~ &
		\iverson{\phiSAT(\disj,\vara)} \cdot \big(\iverson{\varb_1+2 \leq -\varb_3}\cdot (2\cdot \varb_1 + \varc +4) \\
		&\quad {}+\iverson{\varb_1+2 > -\varb_3}\cdot (-2\cdot \varb_3 + \varc) \big) {}+ \iverson{\neg\phiSAT(\disj,\vara)}\cdot (-\infty)
		%
	\end{align*}
	The above quantifier-free equivalent of $\lqc$ indeed evaluates to $-\infty$ if $\psa\not\models\phiSAT(\disj,\vara)$ and, otherwise, performs a case distinction on said least upper bounds on $\vara$. 
	
	Finally, consider the other quantity of \Cref{ex:running_ex_3}:
	%
	\[
	\lqc' \eeq \Sup \vara \colon \underbrace{\varb_1 < \varc}_{{}=\,\disj'} \searrow \underbrace{2\cdot \vara + \varc}_{{}=\,\eaxa'}
	\]
	and observe that $\disj'$ does not impose any bound on $\vara$ whatsoever. This highlights the need for a careful treatment of $\infty$ (or, in dual situations, $-\infty$): Since $\vara$ occurs positively in $\eaxa$, we have $\sem{\psa}{\lqc'} = \infty$ whenever $\psa \models \varb_1 < \varc$, and $\sem{\psa}{\lqc'} = -\infty$ otherwise. We thus have 
	\begin{align*}
		\lqc' ~{}\equiv{}~ \iverson{\varb_1 < \varc}\cdot \infty + \iverson{\varb_1 \geq \varc} \cdot (-\infty)~.
	\end{align*}
	When considering $\Inf \vara$-quantifiers or when $\vara$ occurs negatively in $\eaxa$, the above described observations need to be dualized, which we detail further below.
	\hfill \finishexample
\end{example}
We condense the following steps for eliminating the $\Sup \vara$- or $\Inf \vara$-quantifiers:
\begin{enumerate}
	\item Extract lower and upper bounds on $\vara$ imposed by the \mbox{Boolean expression $\disj$.}
	\item Construct the Boolean expression $\phiSAT(\disj,\vara)$ via classical Fourier-Motzkin.
	\item Characterize least upper- and greatest lower bounds on $\vara$ admitted by $\disj$.
	\item Eliminate the $\Sup \vara$- or $\Inf \vara$-quantifiers by gluing the above concepts together.
\end{enumerate}
We detail these steps in the subsequent paragraphs. Fix $\vara \in \Vars$, $n\geq1$, and 
\[
	\disj \quad{}={}\quad \bigwedge_{i=1}^n \lit_i~.
\]
\paragraph{Extracting Lower and Upper Bounds.}
Given $\sim\hspace{1mm}\in\bounds$, we define
	%
	\begin{align*}
		&\Bounds{\sim}{\vara}{\disj}\\
		\eeq &\begin{cases}
			\{\eaxa\in \ExtLinAX\mid\exists i\in \{1,\ldots,n\} \colon\lit_i =\vara\sim\eaxa\} & \text{if}~ \sim\hspace{1mm}\in\{>,<\}\\
			\{\eaxa\in \ExtLinAX\mid\exists i\in \{1,\ldots,n\} \colon\lit_i=\vara\sim\eaxa\}\cup\{-\infty\} & \text{if}~\sim\hspace{1mm}=\hspace{1mm}\geq\\
			\{\eaxa\in \ExtLinAX\mid\exists i\in \{1,\ldots,n\} \colon\lit_i=\vara\sim\eaxa\}\cup\{\infty\} &\text{if}~\sim\hspace{1mm}=\hspace{1mm}\leq
		\end{cases}
	\end{align*}  
	%
and let $\UBounds{\vara}= \Bounds{<}{\vara}{\disj}\cup\Bounds{\leq}{\vara}{\disj} $ and $\LBounds{\vara} = \Bounds{>}{\vara}{\disj}\cup\Bounds{\geq}{\vara}{\disj}$.
Including $\infty$ and $-\infty$, respectively, by default will be convenient when characterizing least upper- and greatest lower bounds on $\vara$ admitted by $\disj$: If there is no upper (resp.\ lower) bound on $\vara$ whatsoever imposed by $\disj$, our construction will automatically default to $\infty$ (resp.\ $-\infty$). 
\paragraph{Classical Fourier-Motzkin Quantifier Elimination with Infinity.}
	We define
	\begin{align*}
		\phiSAT(\disj,\vara) \eeq
		&\bigwedge\limits_{\substack{\eaxa\in\Bounds{\geq}{\vara}{\disj},\\\eaxb\in\Bounds{\leq}{\vara}{\disj}}}\eaxa\leq\eaxb
		\land
		\bigwedge\limits_{\substack{\eaxa\in\Bounds{\geq}{\vara}{\disj},\\\eaxb\in\Bounds{<}{\vara}{\disj}}}\eaxa<\eaxb
		\land
		\bigwedge\limits_{\substack{\eaxa\in\Bounds{>}{\vara}{\disj},\\\eaxb\in\Bounds{\leq}{\vara}{\disj}}}\eaxa<\eaxb
		\\
		&\quad{}\land\bigwedge\limits_{\substack{\eaxa\in\Bounds{>}{\vara}{\disj},\\\eaxb\in\Bounds{<}{\vara}{\disj}}}\eaxa<\eaxb
		\land
		\bigwedge\limits_{\substack{i \in \{1,\ldots,n\},\\\vara\not\in\FVars{\lit_i}}}\lit_i ~.
	\end{align*}
%
as is standard in Fourier-Motzkin variable elimination. The soundness of this construction generalizes to Boolean expressions involving $\infty$ or $-\infty$:
\begin{lemma}[\cite{fourier1825analyse,motzkin1936beitraege}]\label{lemm:phiSAT}
	For all $\psa\in\PS$, we have
	\begin{align*}
		\psa\models\phiSAT(\disj,\vara)
		\qquad\text{iff}\qquad
		\bigl\{\rata\in\QQ\mid\psa[\vara\mapsto\rata]\models\disj\bigr\} \neq \emptyset~.
	\end{align*}
\end{lemma}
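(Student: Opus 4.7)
The plan is to prove both directions of the biconditional separately, following the classical Fourier--Motzkin soundness argument, with added care for the default $\pm\infty$ entries placed in $\Bounds{\geq}{\vara}{\disj}$ and $\Bounds{\leq}{\vara}{\disj}$ and for the possibility that bound expressions themselves evaluate to $\pm\infty$.

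For the easy direction ($\Leftarrow$), I would fix some $\rata\in\QQ$ with $\psa[\vara\mapsto\rata]\models\disj$. Every non-$\vara$ literal of $\disj$ appears directly as a conjunct of $\phiSAT(\disj,\vara)$ and is therefore satisfied by $\psa$. For each pair consisting of a lower-bound literal $\vara\sim_L\eaxa$ and an upper-bound literal $\vara\sim_U\eaxb$ in $\disj$, chaining $\sem{\psa}{\eaxa}\sim_L\rata\sim_U\sem{\psa}{\eaxb}$ through the rational $\rata$ yields the inequality between $\sem{\psa}{\eaxa}$ and $\sem{\psa}{\eaxb}$ demanded by the corresponding conjunct of $\phiSAT$, with strictness whenever either $\sim_L$ or $\sim_U$ is strict. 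The defaults $-\infty\in\Bounds{\geq}{\vara}{\disj}$ and $\infty\in\Bounds{\leq}{\vara}{\disj}$ generate only conjuncts that hold vacuously in extended-rational arithmetic.

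For the hard direction ($\Rightarrow$), I would assume $\psa\models\phiSAT(\disj,\vara)$ and construct a rational witness. Let
\begin{align*}
L \eeq \max\bigl\{\sem{\psa}{\eaxa} \mid \eaxa\in\LBounds{\vara}\bigr\} \quad\text{and}\quad U \eeq \min\bigl\{\sem{\psa}{\eaxa} \mid \eaxa\in\UBounds{\vara}\bigr\}~,
\end{align*}
both viewed in $\ExtRR$. The first four conjunctions of $\phiSAT$ force $L\leq U$, with $L<U$ whenever either of the extremizing bounds is strict. If $L<U$, density of $\QQ$ in $\RR$---together with a sufficiently large choice when $U=\infty$ and sufficiently small when $L=-\infty$---supplies a rational $\rata$ with $L<\rata<U$; by construction $\rata$ then satisfies every $\vara$-bound literal of $\disj$. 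In the boundary case $L=U$, both extremizing bounds must be non-strict, their common value is necessarily finite (being the evaluation of a $\LinAX$-expression on a $\QQ$-valuation), and $\rata=L=U\in\QQ$ works. Non-$\vara$ literals of $\disj$ are again satisfied because they appear verbatim as conjuncts of $\phiSAT$.

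The main obstacle is the careful bookkeeping around $\pm\infty$ in both of its roles---as default entries in $\LBounds{\vara},\UBounds{\vara}$ and as possible right-hand sides of bound literals. In particular, one must verify that (i) the extended-order conventions underlying $\phiSAT$ license the chaining and density arguments, (ii) the boundary case $L=U$ really yields a \emph{rational} witness rather than only an extended-real one, and (iii) pairs involving an added default produce conjuncts that are compatible with, rather than stronger than, the existence of a finite witness. The remainder is a routine adaptation of the textbook Fourier--Motzkin elimination proof.
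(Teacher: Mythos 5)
The paper does not actually prove \Cref{lemm:phiSAT}; it is stated with a citation to the classical Fourier--Motzkin literature and the remark that the construction ``generalizes'' to expressions involving $\pm\infty$. Your two-direction argument is the standard soundness proof of that elimination step, and it is essentially correct: the chaining argument for ($\Leftarrow$), the reduction of ($\Rightarrow$) to finding a rational strictly between (or weakly between) the extremal lower and upper bounds, and the observation that the default entries $-\infty\in\Bounds{\geq}{\vara}{\disj}$ and $\infty\in\Bounds{\leq}{\vara}{\disj}$ only contribute vacuous conjuncts are all what a full proof would contain.

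One step deserves scrutiny. In the boundary case $L=U$ you justify rationality of the common value by saying it is ``the evaluation of a $\LinAX$-expression on a $\QQ$-valuation,'' but the right-hand sides of bound literals live in $\ExtLinAX$, not $\LinAX$. If $\disj$ contains a degenerate literal such as $\vara\geq\infty$ or $\vara\leq-\infty$, then $L=U=\pm\infty$ with both extremizing bounds non-strict, $\phiSAT(\disj,\vara)$ reduces to conjuncts like $\infty\leq\infty$ and is satisfied, yet no rational witness exists --- so your finiteness claim fails there and, read literally, so does the lemma itself. This is arguably an imprecision in the paper's formalization rather than in your argument (such literals are unsatisfiable and would normally be simplified away), but since you explicitly flagged point (ii) as needing verification, you should close it: either show that non-strict bounds with infinite right-hand sides cannot occur in the $\disj$ reaching Stage~3, or treat them as a separate (unsatisfiable) case before running the density argument. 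With that caveat resolved, the proof is complete and fills a gap the paper leaves to the reader.
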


\paragraph{Characterizing Suprema and Infima.}

	\label{def:phiInf+phiSup}
	Fix some ordering on the bounds on $\vara$, i.e.,
	 let $\UBounds{\vara} = \{\eaxu_1,\ldots,\eaxu_{m_1}\}$ and $\LBounds{\vara}= \{\eaxl_1,\ldots,\eaxl_{m_2}\}$. \mbox{We define:}
	\begin{enumerate}
		\item $\phiSup(\disj,\vara,\eaxu_i) = \bigwedge_{k=1}^{i-1} \eaxu_i < \eaxu_k \wedge \bigwedge_{k=i+1}^{m_1} \eaxu_i \leq \eaxu_k$
		\item $\phiInf(\disj,\vara,\eaxl_i) = \bigwedge_{k=1}^{i-1} \eaxl_i > \eaxl_k \wedge \bigwedge_{k=i+1}^{m_2} \eaxl_i \geq \eaxl_k$ 
		%
	\end{enumerate}
	%
	%
%
%
Intuitively, $\phiSup(\disj,\vara,\eaxu_i)$ evaluates to $\true$ under valuation $\psa$ if $\sem{\psa}{\eaxu_i}$ evaluates to the \emph{least upper bound} on $\vara$ admitted by $\disj$ under $\psa$ with \emph{minimal} $i$. %
The intuition for $\phiInf(\disj,\vara,\eaxl_i)$ is analogous. Put more formally:
\begin{restatable}{theorem}{thmPhiInfSup}
	\label{lemm:phiInf+phiSup}
	Let $\psa\in\PS$ such that  
	\[
		\bigl\{\rata\in\QQ\mid\psa[\vara\mapsto\rata]\models\disj\bigr\} ~{}\neq{}~ \emptyset~.
	\]
	Then all of the following statements hold:
	\begin{enumerate}
		\item\label{lemm:phiInf+phiSup1} There is exactly one $\eaxu\in \UBounds{\vara}$ such that
		\[
			\psa\models\phiSup(\disj,\vara,\eaxu)~.
		\]
		%
		%
		%
		\item\label{lemm:phiInf+phiSup2}  If $\eaxu\in \UBounds{\vara}$ and $\psa\models\phiSup(\disj,\vara,\eaxu)$, then 
		\[
			\sem{\psa}{\eaxu} \eeq	\sup\bigl\{\rata\in\QQ\mid\psa[\vara\mapsto\rata]\models\disj\bigr\}~.
		\]
		%
		\item\label{lemm:phiInf+phiSup3} There is exactly one $\eaxl\in\LBounds{\vara}$ such that
		\[
			\psa\models\phiInf(\disj,\vara,\eaxl)~.
		\]
		\item\label{lemm:phiInf+phiSup4}  If $\eaxl\in \LBounds{\vara}$ and $\psa\models\phiInf(\disj,\vara,\eaxl)$, then 
		\[
		\sem{\psa}{\eaxl} \eeq	\inf\bigl\{\rata\in\QQ\mid\psa[\vara\mapsto\rata]\models\disj\bigr\}~.
		\]
	\end{enumerate}
\end{restatable}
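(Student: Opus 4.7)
The plan is to prove statements (1) and (2) about $\phiSup$; statements (3) and (4) for $\phiInf$ then follow by a symmetric argument that swaps $\sup/\inf$, $</>$, $\leq/\geq$, and the default $\infty/{-\infty}$.

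For (1), observe that semantically $\psa \models \phiSup(\disj,\vara,\eaxu_i)$ asserts that $\sem{\psa}{\eaxu_i}$ is the minimum of $\{\sem{\psa}{\eaxu_k} \mid k \in \{1,\ldots,m_1\}\}$ and that $i$ is the \emph{smallest} index achieving that minimum (the strict inequalities handle indices $k<i$, the non-strict ones handle $k>i$). Existence then follows by choosing $i$ to be the smallest index at which this finite set of extended rationals attains its minimum. Uniqueness is a contradiction argument: for $i < j$, the conjunct $\eaxu_j < \eaxu_i$ coming from $\phiSup(\disj,\vara,\eaxu_j)$ is inconsistent with the conjunct $\eaxu_i \leq \eaxu_j$ coming from $\phiSup(\disj,\vara,\eaxu_i)$.

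For (2), write $c = \sem{\psa}{\eaxu}$ and $V = \{\rata \in \QQ \mid \psa[\vara\mapsto \rata] \models \disj\}$, and prove $\sup V \leq c$ and $\sup V \geq c$ separately. The upper bound $\sup V \leq c$ is immediate: any $\rata \in V$ satisfies $\rata \sim \sem{\psa}{\eaxu_k}$ for every genuine upper bound (with $\sim \in \{<,\leq\}$), so in particular $\rata \leq \sem{\psa}{\eaxu_k}$, and also $\rata \leq \infty$ trivially, giving $\rata \leq \min_k \sem{\psa}{\eaxu_k} = c$ by characterization (1). For the reverse inequality I would split on whether $c$ is finite. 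If $c = \infty$ then no actual upper bound constraints are present, so starting from any $\rata_0 \in V$ (which exists by hypothesis) the values $\rata_0 + n$ for $n \in \NN$ all lie in $V$ (lower bounds are preserved under increasing $\rata$), witnessing $\sup V = \infty$. If $c$ is finite (and necessarily greater than $-\infty$ because $V \neq \emptyset$), then for any $r' < c$ I pick a rational $\rata$ in $[\max(\rata_0, r'), c)$, or set $\rata = \rata_0$ when $c = \rata_0 > r'$; such $\rata$ satisfies every lower bound by monotonicity, and every upper bound because either the bound is non-strict with value $\geq c \geq \rata$ or strict with value $> c \geq \rata$ (if $\sem{\psa}{\eaxu_k} > c$) or $= c > \rata$ (if the strict bound equals $c$).

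The main obstacle is precisely that last case analysis in the approximation argument: one must rule out the degenerate scenario where pushing $\rata$ up to $c$ would violate a strict upper bound $\vara < \eaxu_k$ with $\sem{\psa}{\eaxu_k} = c$. This is handled by staying strictly below $c$, which is always possible because we are computing a supremum rather than a maximum. Handling the $c = \infty$ corner separately (since the generic \enquote{pick a rational in $[\rata_0,c)$} step does not make sense there) and symmetrizing the whole argument to obtain (3) and (4) are the remaining bookkeeping steps.
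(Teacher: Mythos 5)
Your proof is correct. Parts (1) and (3) are argued exactly as in the paper: the sets of bound values are finite and non-empty, so a minimum (resp.\ maximum) exists, and tie-breaking by smallest index via the strict/non-strict split in $\phiSup$ (resp.\ $\phiInf$) gives uniqueness. For parts (2) and (4) your route differs slightly in execution: the paper first characterizes $\{\rata\in\QQ\mid\psa[\vara\mapsto\rata]\models\disj\}$ as one of four explicit $\ExtQQ$-valued intervals whose right endpoint is $\min\{\maxa,\maxb\}$ and then reads the supremum off that interval, whereas you prove the two inequalities $\sup V\leq c$ and $\sup V\geq c$ directly, with an explicit approximating witness strictly below $c$ (and a separate unboundedness argument for $c=\infty$). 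The content is the same --- both are Fourier-Motzkin-style reasoning about least upper bounds --- but your version is somewhat more self-contained, since it does not silently rely on the supremum of the rational points of a half-open interval being its right endpoint, and it makes explicit why strict upper bounds attaining the value $c$ are harmless (one computes a supremum, not a maximum). Your handling of the corner cases ($c=\infty$ when $\UBounds{\vara}$ contains only the default bound $\infty$, and $\rata_0=c$ when the minimum is attained by a non-strict bound) is sound.
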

\begin{proof}
	See \Cref{proof:lemm:phiInf+phiSup}.
	%
	%
	%
	%
\end{proof}
\noindent
An immediate consequence of \Cref{lemm:phiInf+phiSup} is that, for every $\psa\in\PS$, 
%
\begin{align*}
	\sem{\psa}{\sum\limits_{i=1}^{m_1} \iverson{\phiSup(\disj,\vara,\eaxu_i)} \cdot \eaxu_i}
	\eeq& \sup \bigl\{\rata\in\QQ\mid\psa[\vara\mapsto\rata]\models\disj\bigr\} \\
	 \sem{\psa}{\sum\limits_{i=1}^{m_2} \iverson{\phiInf(\disj,\vara,\eaxl_i)} \cdot \eaxl_i} \eeq& \inf \bigl\{\rata\in\QQ\mid\psa[\vara\mapsto\rata]\models\disj\bigr\}
\end{align*}
whenever $\psa \models \phiSAT(\disj,\vara)$.
It is in this sense that  $\phiSup$ and $\phiInf$ characterize least upper- and greatest lower bounds on $\vara$ admitted by $\disj$.
\subsubsection{Eliminating the Quantitative Quantifiers}
Equipped with the preceding prerequisites, we formalize our construction and prove it sound. Given extended arithmetic expressions $\eaxa,\eaxc\in\ExtLinAX$ and a variable $\vara\in\Vars$, we define
	\begin{align*}
	\eaxc(\vara,\eaxa)
	\eeq 
	\begin{cases}
		\infty & \text{if \vara{} occurs positively in \eaxc{} and } \eaxa=\infty ~\text{or}\\[-1.5mm]
		& \text{\phantom{if }\vara{} occurs negatively in \eaxc{} and } \eaxa=-\infty\\
		-\infty & \text{if \vara{} occurs positively in \eaxc{} and } \eaxa=-\infty ~\text{or}\\[-1.5mm]
		& \text{\phantom{if }\vara{} occurs negatively in \eaxc{} and } \eaxa=\infty\\
		\eaxc & \text{if $\vara\not\in\FVars{\eaxc}$} \\
		\eaxc[\vara/\eaxa] & \text{otherwise}~,
	\end{cases}
\end{align*}
where in the last case we have $\eaxa \in \LinAX$ so $\eaxc[\vara/\eaxa]$ is the standard syntactic replacement\footnote{provided in \Cref{app:syntrepl}.} of $\vara$ by $\eaxa$ in $\eaxc$.
Our sought-after quantifier-free equivalents are\footnote{recall that $\UBounds{\vara} = \{\eaxu_1,\ldots,\eaxu_{m_1}\}$ and $\LBounds{\vara} = \{\eaxl_1,\ldots,\eaxl_{m_2}\}$.}
\begin{align}
	\label{eq:qe_stage3_sup}
	&\QE{\SupOO{\vara}{(\disj\searrow\eaxc)}} \eeq \iverson{\neg \phiSAT(\disj,\vara)}\cdot(-\infty)  \\
	&\qquad{}+\iverson{\phiSAT(\disj,\vara)}\cdot \begin{cases}
		\sum\limits_{i=1}^{m_1} \iverson{\phiSup(\disj,\vara,\eaxu_i)} \cdot \eaxc(\vara,\eaxu_i) & \text{if \vara{} occurs positively in \eaxc}\\
		\sum\limits_{i=1}^{m_2} \iverson{\phiInf(\disj,\vara,\eaxl_i)} \cdot \eaxc(\vara,\eaxl_i)& \text{if \vara{} occurs negatively in \eaxc}\\
		 \eaxc &\text{if $\vara \not\in\FVars{\eaxc}$}
	\end{cases}
	\notag
\end{align}
%
and
\begin{align}
	\label{eq:qe_stage3_inf}
	&\QE{\InfOO{\vara}{(\disj\nearrow\eaxc)}} \eeq \iverson{\neg \phiSAT(\disj,\vara)}\cdot \infty  \\
	&\qquad{}+\iverson{\phiSAT(\disj,\vara)}\cdot \begin{cases}
		\sum\limits_{i=1}^{m_2} \iverson{\phiInf(\disj,\vara,\eaxl_i)} \cdot \eaxc(\vara,\eaxl_i) & \text{if \vara{} occurs positively in \eaxc}\\
		\sum\limits_{i=1}^{m_1} \iverson{\phiSup(\disj,\vara,\eaxu_i)} \cdot \eaxc(\vara,\eaxu_i)& \text{if \vara{} occurs negatively in \eaxc}\\
		\eaxc &\text{if $\vara \not\in\FVars{\eaxc}$}~.
	\end{cases}
	\notag
\end{align}
%
This complies with our intuition from \Cref{ex:running_ex_4}. We apply classical Fourier-Motzkin variable elimination to check whether the respective supremum (infimum) evaluates to $-\infty$ (reps.\ $\infty$). If $\phiSAT(\disj,\vara)$ is satisfied, then we inspect $\eaxc$ closer, select the right bound $\eaxu$ (resp.\ $\eaxl$) on $\vara$ in $\disj$ via $\phiSup$ (resp.\ $\phiInf$), and substitute $\vara$ in $\eaxc$ by $\eaxu$ ($\eaxl$) while obeying the arithmetic laws for the extended reals given in \Cref{sec:quantities}.
The resulting quantities are partitioning and:
\begin{restatable}{theorem}{thmThirdLevelSup}
	\label{lemm:ThirdLevelSup}
	Let $\vara\in\Vars$ and $\eaxc\in\ExtLinAX$. We have:
	\begin{enumerate}
		\item $\SupOO{\vara}{(\disj\searrow\eaxc)} ~{}\equiv{}~ \QE{\SupOO{\vara}{(\disj\searrow\eaxc)}}$
		\item $\InfOO{\vara}{(\disj\nearrow\eaxc)} ~{}\equiv{}~ \QE{\InfOO{\vara}{(\disj\nearrow\eaxc)}} $
	\end{enumerate}
\end{restatable}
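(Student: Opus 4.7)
The proof establishes the two equivalences by fixing an arbitrary valuation $\psa \in \PS$ and showing $\sem{\psa}{\SupOO{\vara}{(\disj\searrow\eaxc)}} = \sem{\psa}{\QE{\SupOO{\vara}{(\disj\searrow\eaxc)}}}$; the $\Inf$-case is fully dual, with the roles of $\phiSup$/$\UBounds{\vara}$ and $\phiInf$/$\LBounds{\vara}$ swapped, and $\pm\infty$ defaults exchanged. The top-level dichotomy is whether $\psa \models \phiSAT(\disj, \vara)$.

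If $\psa \not\models \phiSAT(\disj, \vara)$, then by \Cref{lemm:phiSAT} the set $S_\psa = \{\rata \in \QQ \mid \psa[\vara \mapsto \rata] \models \disj\}$ is empty. Hence for every $\rata$ the summand $[\disj]\cdot\eaxc + [\neg\disj]\cdot(-\infty)$ evaluates to $-\infty$, so the supremum is $-\infty$, matching the $[\neg\phiSAT]\cdot(-\infty)$ branch of the RHS (the other branch contributes $0$).

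If $\psa \models \phiSAT(\disj, \vara)$, then $S_\psa \neq \emptyset$, and using that $-\infty$ is neutral for $\sup$ over a nonempty set, the LHS reduces to $\sup\bigl\{\sem{\psa[\vara \mapsto \rata]}{\eaxc} \mid \rata \in S_\psa\bigr\}$. I then case on how $\vara$ appears in $\eaxc$. If $\vara \not\in \FVars{\eaxc}$, the expression is constant in $\rata$, so the sup equals $\sem{\psa}{\eaxc}$, matching the corresponding branch of the RHS. If $\vara$ occurs positively in $\eaxc$, then $\rata \mapsto \sem{\psa[\vara \mapsto \rata]}{\eaxc}$ is an affine function with strictly positive slope; by \Cref{lemm:phiInf+phiSup}.\ref{lemm:phiInf+phiSup1}--\ref{lemm:phiInf+phiSup2} there is a unique $\eaxu_{i^*} \in \UBounds{\vara}$ with $\psa \models \phiSup(\disj, \vara, \eaxu_{i^*})$ and $\sem{\psa}{\eaxu_{i^*}} = \sup S_\psa$, so the RHS reduces to $\sem{\psa}{\eaxc(\vara, \eaxu_{i^*})}$, which one then matches to the sup of the affine map. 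The case where $\vara$ occurs negatively is symmetric, using \Cref{lemm:phiInf+phiSup}.\ref{lemm:phiInf+phiSup3}--\ref{lemm:phiInf+phiSup4} to select the unique $\eaxl_{i^*}$ with $\sem{\psa}{\eaxl_{i^*}} = \inf S_\psa$, since the affine map is now strictly decreasing so its sup on $S_\psa$ is driven toward $\inf S_\psa$.

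The main obstacle is the analytic matching step: when the extremal bound is imposed strictly (e.g.\ $\vara < \eaxu_{i^*}$) the sup is not attained, so I must argue, via density of $\QQ$ and continuity of the affine map $\rata \mapsto \sem{\psa[\vara \mapsto \rata]}{\eaxc}$, that the sup nevertheless equals the value of that map at the endpoint $\sem{\psa}{\eaxu_{i^*}}$. Moreover, one must separately handle the degenerate case $\sem{\psa}{\eaxu_{i^*}} = \infty$, which by construction of $\UBounds{\vara}$ occurs exactly when $\disj$ imposes no upper bound on $\vara$: here the positively-sloped affine map is unbounded above on $S_\psa$, so the sup is $\infty$, matching the first clause $\eaxc(\vara, \infty) = \infty$ in the definition of $\eaxc(\vara, \cdot)$; the case $\eaxl_{i^*} = -\infty$ paired with a negative occurrence is handled analogously. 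Once these boundary cases are verified, the equivalence is immediate from \Cref{lemm:phiInf+phiSup}. Finally, the partitioning claim follows because the uniqueness assertions in \Cref{lemm:phiInf+phiSup}.\ref{lemm:phiInf+phiSup1} and \ref{lemm:phiInf+phiSup3}, together with the disjoint guards $\phiSAT \oplus \neg \phiSAT$, ensure that exactly one summand is active under every valuation.
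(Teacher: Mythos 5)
Your proposal is correct and follows essentially the same route as the paper's proof: the same top-level dichotomy on $\phiSAT(\disj,\vara)$, the same sub-cases on whether and with which sign $\vara$ occurs in $\eaxc$, and the same appeal to \Cref{lemm:phiSAT} and \Cref{lemm:phiInf+phiSup} (including the uniqueness parts to collapse the guarded sum to a single active summand). The "analytic matching step" you flag is handled in the paper simply by pulling the affine constants and the coefficient $\rata_\vara$ out of the supremum (turning it into an infimum when $\rata_\vara<0$), which subsumes both the non-attained-supremum and the $\pm\infty$ boundary cases you discuss.
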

\begin{proof}
	Let $\psa\in\PS$. We distinguish the cases $\psa \models \phiSAT(\disj,\vara)$ and $\psa\not\models \phiSAT(\disj,\vara)$.
	For $\psa \models \phiSAT(\disj,\vara)$, we distinguish $\vara \not\in \FVars{\eaxc}$ and $\vara \in \FVars{\eaxc}$, the latter case being the most interesting. The key insight is that if $\eaxc$ is of the form $\rata_0 + \sum_{\varb\in\Vars}\rata_{\varb} \cdot \varb$ and $\psa\models \phiSAT(\disj,\vara)$, then 
	\begin{align*}
		& \sem{\psa}{\SupOO{\vara}{(\disj\searrow\eaxc)}} \\
		\eeq & \rata_0  + \big( \sum_{\varb\in\Vars\setminus\{\vara\}}\rata_{\varb} \cdot \psa(\varb)  \big)
		+ \rata_\vara \cdot 
		\begin{cases}
			\displaystyle \sem{\psa}{\sum\limits_{i=1}^{m_1} \iverson{\phiSup(\disj,\vara,\eaxu_i)} \cdot \eaxu_i } & \text{if}~\rata_\vara > 0 \\
			\displaystyle \sem{\psa}{\sum\limits_{i=1}^{m_2} \iverson{\phiInf(\disj,\vara,\eaxl_i)} \cdot \eaxl_i }   & \text{if}~\rata_\vara < 0 \
		\end{cases} 
	\end{align*}
	by \Cref{lemm:phiInf+phiSup}. See \Cref{proof:lemm:ThirdLevelSup} for a detailed proof.
\end{proof}

\subsection{Algorithmically Eliminating Quantitative Quantifiers}
\label{sec:algo}
\begin{algorithm}[t]
	\SetAlgoLined
	\SetNoFillComment
	\textbf{Input: } partitioning $\lqa\in\LQ$\\
	\textbf{Output: }{quantifier-free} partitioning $\ELIM{\lqa}\in\LQ$ with $\ELIM{\lqa}\equiv\lqa$\\
	%
	%
	\uIf{$\lqa$ is quantifier-free}{
		\Return $\lqa$
	}

	\uElseIf{$\lqa$ is of the form $\quanta \vara \colon \lqc$}{
		$\lqc \gets \ELIM{\lqc} $\; 
		transform $\lqc$ into $\XGNF{\vara}$\; \label{algoline:transform_gnf}
		{\color{gray} \tcp{let $\lqc = \sum_{i=1}^n \big[ \bigvee_{j=1}^{m_i} \disj_{i,j} \big] \cdot \eaxa_{i,j}$} }
		\uIf{$\quanta = \Sup$}{
			\Return $\MAX \big( \bigcup_{i=1}^n \bigcup_{j=1}^{m_i} 
				\big\{ \underbrace{\QE{\Sup \vara \colon \disj_{i,j} \searrow \eaxa_{i,j}}}_{\text{given by \Cref{eq:qe_stage3_sup} on page \pageref{eq:qe_stage3_sup}}}
				\big\}
			\big)$
		}\uElseIf{$\quanta = \Inf$}{
		\Return $\MIN \big( \bigcup_{i=1}^n \bigcup_{j=1}^{m_i} 
		\big\{ 
		\underbrace{\QE{\Inf \vara \colon \disj_{i,j} \nearrow \eaxa_{i,j}}}_{\text{given by \Cref{eq:qe_stage3_inf} on page \pageref{eq:qe_stage3_inf}}}
		 \big\}
		\big)$ 
		}
	}
	
	\caption{$\ELIM{\cdot}$ --- Quantitative Quantifier Elimination}
	\label{alg:qe}
\end{algorithm}
We summarize our quantifier elimination technique in \Cref{alg:qe}, which takes as input a partitioning $\lqa \in \LQ$ and computes a quantifier-free equivalent $\ELIM{\lqa}$ of $\lqa$ by proceeding in a recursive inner- to outermost fashion. Since $\lqa$ is partitioning and since both $\MAX$ and $\MIN$ always return partitioning quantities, the transformation of $\lqc$ into $\XGNF{\vara}$ only involves transforming every Boolean expression into DNF and isolating $\vara$ in every inequality where $\vara$ occurs. The soundness of \Cref{alg:qe} is an immediate consequence of our observations from the preceding sections. Moreover, the algorithm terminates because the number of recursive invocations equals the number of quantifiers occurring in $\lqa$.

In order to upper-bound the space complexity of \Cref{alg:qe}, we agree on the following: The size $\bxsize{\bxa}$ of a Boolean expression $\bxa$ is the number of (not necessarily distinct) inequalities it contains. 
The \emph{width} $\lqwidth{\lqa}$ of $\lqa\in\LQ$ is its number of summands, and its \emph{depth} $\lqdepth{\lqa}$ is the maximum of the sizes of the Boolean expressions $\lqa$ contains. 
\begin{restatable}{theorem}{algoSound}
	\label{thm:algo_sound}
	\Cref{alg:qe} is sound and terminates. Moreover, for partitioning\footnote{If $\lqa$ needs to be pre-processed to make it partitioning via the construction from \Cref{sec:stage1}, then $n$ is to be substituted by $2^n$ and $m$ is to be substituted by $n\cdot m$.} $\lqa\in\LQ$ with $\lqwidth{\lqa} = n$ and $\lqdepth{\lqa}= m$ containing exactly one quantifier,
	%
	\[
		\lqwidth{\ELIM{\lqa}} \leq n\cdot2^m\cdot(m+2)^{n\cdot2^m}
		\quad\text{and}\quad
		\lqdepth{\ELIM{\lqa}} \leq n \cdot 2 ^m \cdot \big((\nicefrac{m+2}{2})^2 + m + 1\big)~.
	\]
\end{restatable}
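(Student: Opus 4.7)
The plan is to establish the three claims (soundness, termination, and complexity) separately. Soundness and termination proceed by structural induction on the number of quantifiers in $\lqa$. In the base case $\lqa$ is quantifier-free, and \Cref{alg:qe} returns $\lqa$ unchanged. In the inductive step $\lqa = \quanta\vara\colon\lqc$, the induction hypothesis yields a quantifier-free partitioning $\ELIM{\lqc}\equiv\lqc$. The transformation of $\ELIM{\lqc}$ into $\XGNF{\vara}$ on line~\ref{algoline:transform_gnf} only rewrites each Boolean expression into DNF and isolates $\vara$ in every inequality, which preserves both equivalence and the partitioning property (in particular, no partition-enumeration blow-up is needed since $\ELIM{\lqc}$ is already partitioning). \Cref{Nlemm:FirstLevelSup} then justifies distributing the outer quantifier over the partitioning summands, \Cref{lemm:ExploitDNF} distributes it further over each DNF disjunct, and \Cref{lemm:ThirdLevelSup} eliminates the quantifier on every resulting atomic instance $\quanta\vara\colon(\disj_{i,j}\searrow\eaxa_{i,j})$ (or its $\nearrow$ variant). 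Finally \Cref{lemm:statewise_maxmin} guarantees that $\MAX$ (resp.\ $\MIN$) returns a partitioning, quantifier-free quantity representing the intended valuation-wise maximum (resp.\ minimum). Termination is immediate since each recursive call strictly decreases the quantifier count.

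For the width bound, fix a partitioning $\lqa = \quanta\vara\colon\lqc$ with $\lqwidth{\lqa}=n$ and $\lqdepth{\lqa}=m$. Transforming each of the $n$ Boolean expressions of size at most $m$ into DNF produces at most $2^m$ disjuncts per expression, yielding at most $N = n\cdot 2^m$ atomic stage-3 instances whose conjunctions $\disj$ contain at most $m$ literals each. Splitting those literals into upper bounds $U$, lower bounds $L$, and literals free of $\vara$, one has $|U|+|L|\leq m$, hence $|\UBounds{\vara}|\leq m+1$ and $|\LBounds{\vara}|\leq m+1$. The quantifier-free equivalent produced by \eqref{eq:qe_stage3_sup} or \eqref{eq:qe_stage3_inf} therefore has width at most $1 + (m+1) = m+2$ (one summand for the $\neg\phiSAT$ branch and at most $m+1$ summands indexed by the selected bound). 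Feeding $N$ such partitioning quantities, each of width $\nnb_\ell\leq m+2$, into the explicit $\MAX/\MIN$ construction from \Cref{lemm:statewise_maxmin} yields width at most $N\cdot \prod_{\ell=1}^{N}\nnb_\ell \leq n\cdot 2^m \cdot (m+2)^{n\cdot 2^m}$, which matches the first claimed bound.

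For the depth bound, each Iverson bracket in the stage-3 output is either $\neg\phiSAT(\disj,\vara)$ or of the shape $\phiSAT(\disj,\vara)\wedge\phiSup(\disj,\vara,\eaxu_i)$ (and analogously with $\phiInf$). The four big conjunctions inside $\phiSAT(\disj,\vara)$ contribute a total of $(|L_{\geq}|+1)(|U_{\leq}|+1) + (|L_{\geq}|+1)|U_<| + |L_>|(|U_{\leq}|+1) + |L_>||U_<| = (|L|+1)(|U|+1)$ pairwise comparisons, which by the AM--GM inequality is at most $((|L|+|U|+2)/2)^2 \leq ((m+2)/2)^2$; the non-$\vara$ literals from $\disj$ add at most $m - |L| - |U|$ further conjuncts (already dominated by $((m+2)/2)^2$); and $\phiSup$ contributes at most $m_1 - 1 \leq m$ additional inequalities, with the final $+1$ absorbing the outer conjunction with $\phiSAT$. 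This bounds the stage-3 depth by $((m+2)/2)^2 + m + 1$. The $\MAX/\MIN$ construction then concatenates one selected Boolean condition per input summand together with at most $N-1$ selection inequalities inside each output Iverson bracket, multiplying the depth by at most $N = n\cdot 2^m$ and yielding the claimed overall bound $n\cdot 2^m \cdot \bigl((\nicefrac{m+2}{2})^2 + m + 1\bigr)$.

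The main obstacle will be the careful combinatorial bookkeeping inside the $\MAX/\MIN$ aggregation, where every output summand references one condition drawn from each of the $N$ input quantities and each such condition itself already aggregates the contributions of $\phiSAT$ and $\phiSup/\phiInf$; the single exponential dependence must be shown to come entirely from the DNF expansion in stage~1, with stages~2 and~3 contributing only a polynomial factor per atomic instance.
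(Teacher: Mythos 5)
Your proposal is correct and follows essentially the same route as the paper: soundness and termination by induction on the quantifier count via Theorems~\ref{Nlemm:FirstLevelSup}, \ref{lemm:ExploitDNF}, \ref{lemm:ThirdLevelSup} and Lemma~\ref{lemm:statewise_maxmin}, and the size bounds obtained by counting the $n\cdot 2^m$ atomic stage-3 instances after DNF expansion, bounding each instance's width by $m+2$ and depth by $(\nicefrac{m+2}{2})^2+m$ (your accounting of $\phiSAT$ and $\phiSup$/$\phiInf$ matches the paper's Lemmas~\ref{lem:size_third_level} and \ref{lem:size_minmax}), and then propagating these through the explicit $\MAX$/$\MIN$ construction. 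The bookkeeping of where the final $+1$ in the depth bound arises differs cosmetically (you fold it into the stage-3 estimate rather than into the $\MAX$/$\MIN$ lemma), but the resulting bounds coincide with the claimed ones.
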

\begin{proof}
	We exploit that (i) transforming a Boolean expression $\bxa$ of size $l$ into DNF produces at most $2^l$ disjuncts, each consisting of at most $l$ linear inequalities and (ii) if $\disj$ is of size $l$, then $\bxsize{\phiSAT(\disj,\vara)} \leq \left( \nicefrac{l+2}{2} \right)^2$. See \Cref{proof:algo_sound} for details.
\end{proof}
Fixing $m$ and $l$ as above, the resulting upper bounds for quantities containing $k$ quantifiers are thus non-elementary in $k$. 
Investigating \emph{lower} space complexity bounds of \Cref{alg:qe} or the computational complexity of the quantitative quantifier elimination problem is left for future work.

		\section{Quantitative Craig Interpolation}
		\label{sec:craig}
In this section,  we derive a quantitative Craig interpolation theorem from \Cref{alg:qe} and \Cref{thm:algo_sound}. %
Let us first agree on a notion of quantitative Craig interpolants, which is a quantitative analogue of the notion from \cite{10.1145/1181775.1181789}.
Given $\lqa,\lqb \in \LQ$, we say that $\lqa$ (quantitatively) \emph{entails} $\lqb$, denoted by $\lqa \qentails \lqb$, if 
%
	$\forall \psa \in \PS \colon  \sem{\psa}{\lqa} \leq \sem{\psa}{\lqb}$.
%
%
%
\begin{definition}[Quantitative Craig Interpolant]
	Given $\lqa,\lqb,\lqc\in\LQ$ with $\lqa\qentails\lqb$, we say that $\lqc$ is a \emph{(quantitative) Craig interpolant} of $(\lqa,\lqb)$, if 
	\begin{align*}
		\lqa \qentails \lqc ~\text{and}~ \lqc \qentails \lqb
		\qquad\text{and}\qquad
		\FVars{\lqc} \subseteq \FVars{\lqa} \cap \FVars{\lqb}~.
		\tag*{\finishdefinition}
	\end{align*}
\end{definition}
In words, $\lqc$ sits between $\lqa$ and $\lqb$ and the free variables occurring in $\lqc$ also occur free in \emph{both} $\lqa$ and $\lqb$. 
We will now see that piecewise linear quantities enjoy the property of being \emph{quantifier-free interpolating} \cite{10.1145/1181775.1181789}: For all $\lqa,\lqb \in \LQ$ with $\lqa\qentails\lqb$, there exists a quantifier-free Craig interpolant of $(\lqa,\lqb)$. More precisely, we prove that both the \emph{strongest} and the \emph{weakest} Craig interpolants 
of
 $(\lqa,\lqb)$ are \emph{quantifier-free and effectively constructible}.
Our construction is inspired by existing techniques for constructing \emph{classical} Craig interpolants via \emph{classical} quantifier elimination \cite{DBLP:journals/jsat/EsparzaKS08}: 
By \enquote{projecting-out} the free variables in $\lqa$ which are \emph{not} free in $\lqb$ via $\Sup$, we obtain the \emph{strongest} Craig interpolant of $(\lqa,\lqb)$. Dually, by \enquote{projecting-out} the free variables in $\lqb$ which are \emph{not} free in $\lqa$ via $\Inf$, we obtain the \mbox{\emph{weakest} Craig interpolant of $(\lqa,\lqb)$. Put formally:}
%

\begin{restatable}{theorem}{thmCraig}
	\label{thm:craig}
	Let $\lqa,\lqb \in \LQ$ with $\lqa\qentails\lqb$. We have:
	\begin{enumerate}
		\item\label{thm:craig1} For $\{\vara_1,\ldots,\vara_n\} = \FVars{\lqa} \setminus \FVars{\lqb}$,
		%
		\[
		\lqc \eeq \ELIM{\Sup \vara_1 \ldots \Sup \vara_n \colon \lqa}
		\]
		is the strongest quantitative Craig interpolant of $(\lqa,\lqb)$, i.e.,
		\[
		\forall~\text{Craig interpolants $\lqd$ of $(\lqa,\lqb)$} \colon \quad 
		\lqc \qentails \lqd~.
		\]
		\item\label{thm:craig2} For $\{\varb_1,\ldots,\varb_m\} = \FVars{\lqb} \setminus \FVars{\lqa}$,
		%
		\[
		\lqc \eeq \ELIM{\Inf \varb_1 \ldots \Inf \varb_m \colon \lqb}
		\]
		is the weakest quantitative Craig interpolant of $(\lqa,\lqb)$, i.e.,
		\[
		\forall~\text{Craig interpolants $\lqd$ of $(\lqa,\lqb)$} \colon \quad 
		\lqd \qentails \lqc~.
		\]
	\end{enumerate}
\end{restatable}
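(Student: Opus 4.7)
Part (2) is completely dual to part (1) (swap $\Sup$/$\Inf$ and $\lqa$/$\lqb$, and invoke the $\Inf$ branch of \Cref{alg:qe}), so I focus on part (1). Writing $\lqc = \ELIM{\Sup \vara_1 \ldots \Sup \vara_n \colon \lqa}$, I would establish four statements: (a) $\FVars{\lqc} \subseteq \FVars{\lqa} \cap \FVars{\lqb}$, (b) $\lqa \qentails \lqc$, (c) $\lqc \qentails \lqb$, and (d) $\lqc \qentails \lqd$ for every Craig interpolant $\lqd$ of $(\lqa,\lqb)$. Together, (a)--(c) say that $\lqc$ is a Craig interpolant, and (d) says it is the strongest one.

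By soundness of $\ELIMnobrace$ (\Cref{thm:algo_sound}) we have $\lqc \equiv \Sup \vara_1 \ldots \Sup \vara_n \colon \lqa$, so every semantic claim reduces to a statement about the pre-eliminated quantity. For (a), I would combine the syntactic identity
\[
	\FVars{\Sup \vara_1 \ldots \Sup \vara_n \colon \lqa} ~{}={}~ \FVars{\lqa} \setminus \{\vara_1,\ldots,\vara_n\} ~{}={}~ \FVars{\lqa} \cap \FVars{\lqb}
\]
with the fact that $\ELIMnobrace$ never introduces fresh free variables (see below). For (b), the semantics of $\Sup$ gives, by instantiating $\rata_i := \psa(\vara_i)$,
\[
	\sem{\psa}{\lqa} ~{}={}~ \sem{\psa\statesubst{\vara_1}{\psa(\vara_1)}\cdots\statesubst{\vara_n}{\psa(\vara_n)}}{\lqa} ~{}\leq{}~ \sem{\psa}{\Sup \vara_1 \ldots \Sup \vara_n \colon \lqa}~.
\]
For (c), I fix $\psa \in \PS$. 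Since no $\vara_i$ is free in $\lqb$, we have $\sem{\psa\statesubst{\vara_1}{\rata_1}\cdots\statesubst{\vara_n}{\rata_n}}{\lqb} = \sem{\psa}{\lqb}$ for all $\rata_i \in \QQ$; combining with $\lqa \qentails \lqb$ applied to the updated valuation and then taking suprema over the $\rata_i$ yields $\sem{\psa}{\Sup \vara_1 \ldots \Sup \vara_n \colon \lqa} \leq \sem{\psa}{\lqb}$. For (d), the argument for (c) goes through verbatim with $\lqb$ replaced by $\lqd$, since $\FVars{\lqd} \subseteq \FVars{\lqa} \cap \FVars{\lqb}$ forces $\vara_i \notin \FVars{\lqd}$ for each $i$.

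The only non-routine step is the free-variable invariance of $\ELIMnobrace$, which I would isolate as an auxiliary lemma: $\FVars{\ELIM{\lqa}} \subseteq \FVars{\lqa}$. The proof is a straightforward induction over the recursion of \Cref{alg:qe}: the partitioning pre-processing from \Cref{sec:stage1}, the translation to $\XGNF{\vara}$, and the $\MAX$/$\MIN$ constructions from \Cref{lemm:statewise_maxmin} all preserve free variables; in Stage 3, the Fourier--Motzkin guard $\phiSAT(\disj,\vara)$, the selectors $\phiSup$/$\phiInf$, and the substitutions $\eaxc(\vara,\eaxu_i)$, $\eaxc(\vara,\eaxl_i)$ only reintroduce variables already free in $\disj$ or $\eaxc$, while $\vara$ itself is removed. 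This bookkeeping is the main --- albeit mild --- obstacle; the remainder is a clean consequence of the soundness of $\ELIMnobrace$ together with the observation that variables absent from an expression can be assigned arbitrarily without altering its semantics.
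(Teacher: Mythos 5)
Your proposal is correct and follows essentially the same route as the paper's proof: instantiate the supremum at $\rata_i = \psa(\vara_i)$ for $\lqa \qentails \lqc$, and use that the $\vara_i$ are not free in $\lqb$ (resp.\ in any interpolant $\lqd$) together with $\lqa \qentails \lqb$ (resp.\ $\lqa \qentails \lqd$) to bound the supremum from above. Your step (a) --- the free-variable bookkeeping for $\ELIMnobrace$, which you rightly flag as needing a small auxiliary lemma --- is actually left implicit in the paper's proof, so your write-up is, if anything, slightly more complete on that point.
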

\begin{proof}
	See \Cref{proof:thm:craig}.
\end{proof}
%

%
%
%

%
\begin{example}
	\label{ex:craig2}
	Consider the following quantities $\lqa,\lqa'$ which satisfy $\lqa \qentails \lqa'$:
	\begin{align*}
		 \lqa \eeq& \iverson{\vara \geq 0}\cdot \vara + \iverson{\vara \geq 0 \wedge \varb \leq \vara}\cdot \varb \\
		 \lqa' \eeq & \iverson{\vara \geq 0 \wedge \varc \geq \vara} \cdot (2\cdot\vara + \varc +1) + \iverson{\varc < \vara}\cdot \infty
	\end{align*}
	Pre-processing $\lqa$ and $\lqa'$ to make them partitioning and simplifying yields
	%
	\begin{align}
		 \underbrace{\ELIM{\Sup \varb \colon \lqa } \eeq \iverson{\vara \geq 0}\cdot 2\cdot \vara}_{\text{strongest Craig interpolant}}
		 \quad\text{and}\quad
		 \underbrace{\ELIM{\Inf \varc \colon \lqb}\eeq \iverson{\vara\geq 0}\cdot (3\cdot \vara + 1)}_{\text{weakest Craig interpolant}}~.
	\tag*{\finishexample}
	\end{align}
\end{example}

		\section{Reasoning about Probabilistic Programs with Unbounded Non-Determinism}
		\label{sec:nondet_programs}
		In this section, we present our results on effectively constructing closed-form representations for expected outcomes of loop-free linear probabilistic programs featuring unbounded non-determinism. Towards this end, we introduce an appropriate programming language in \Cref{sec:pgcl}. We present McIver's \& Morgan's weakest pre-expectation calculus \cite{mciver_morgan} extended by constructs for \emph{unbounded} non-determinism --- a program logic tailored to reasoning about minimal/maximal expected outcomes. Finally, in \Cref{sec:wp:compute}, we combine that program logic with our quantifier elimination results, yielding the desired algorithm.

\subsection{Probabilistic Programs with Unbounded Non-Determinism}
\label{sec:pgcl}
We introduce a loop-free probabilistic programming language à la McIver \& Morgan \cite{mciver_morgan} featuring \emph{unbounded} (pure) non-determinism. Importantly, all arithmetic- and Boolean expressions occurring in a program are assumed to be linear according to \Cref{def:LinAX,def:Bool}:
\begin{definition}
Linear programs $\pp$ in the (loop-free fragment of the) {\emph{{probabilistic guarded command language}}} $\pgcl$ adhere to the grammar
\begin{align*}
	\pp \quad \longrightarrow \quad& \phantom{{}\mid{}~} \SKIP  \tag{effectless program}\\
	&{}\mid{}~ \ASSIGN{\vara}{\axa}  \tag{assignment}\\
	&{}\mid{}~ \UNDCHOICE{\vara}  \tag{non-deterministic assignment}\\
	&{}\mid{}~ \COMPOSE{\pp}{\pp}  \tag{sequential composition}\\
	&{}\mid{}~ \PCHOICE{\pp}{\proba}{\pp} \tag{probabilistic choice}\\
	%
	%
	&{}\mid{}~ \ITE{\bxa}{\pp}{\pp} \tag{conditional choice}~, 
	%
\end{align*}
where:
\begin{enumerate}
	\item $\axa \in \LinAX{}$ is a {{linear arithmetic expression}}, 
	\item $\proba \in [0,1] \cap \QQ$ is a {rational probability}, and
	\item $\bxa\in \Bool$ is a Boolean expression also referred to as a \emph{guard}. \hfill $\triangle$
\end{enumerate}
\end{definition}
Let us go over each construct. $\SKIP$ does nothing. $\ASSIGN{\vara}{\axa}$ assigns the value of the linear arithmetic expression $\axa$ under the current variable valuation to the variable $\vara$. $\UNDCHOICE{\vara}$ non-deterministically assigns an arbitrary rational number to $\vara$. The sequential composition $\COMPOSE{\pp_1}{\pp_2}$ first executes $\pp_1$ and then $\pp_2$.  $\PCHOICE{\pp_1}{\proba}{\pp_2}$ executes $\pp_1$ with probability $\proba$, and $\pp_2$ with probability $1-\proba$. $\ITE{\bxa}{\pp_1}{\pp_2}$ executes either $\pp_1$ or $\pp_2$, depending on whether the Boolean expression is satisfied by the current variable valuation or not.

\subsection{The Weakest Pre-Expectation Calculus}
\label{sec:wp}
\begin{table}[t]
	\begin{center}
		\begin{tabularx}{0.9\textwidth}{X@{\quad}l@{\quad~~}lX}
			\toprule
			\toprule
			$\boldsymbol{\pp}$ & $\mathsf{\mathbf{awp}}\boldsymbol{\llbracket\pp \rrbracket(\FF)}$  \\[0.5ex]
			\hline 
			$\SKIP$ & $\FF$  \rule{0pt}{3.5ex}\\[1.8ex]
			$\ASSIGN{\vara}{\axa}$ & $\mylambda{\psa} \FF\big( \psa[\vara\mapsto\sem{\psa}{\axa}] \big)$   \\[1.5ex]
			$\UNDCHOICE{\vara}$ & $\mylambda{\psa} \sup \big\{ \FF\big( \psa[\vara\mapsto\rata] \big) ~\mid~ \rata\in\QQ \big\}$   \\[1.5ex]
			$\COMPOSE{\pp_1}{\pp_2}$ & $\awp{\pp_1}{\awp{\pp_2}{\FF}}$ \\[1.5ex]
			$\PCHOICE{\pp_1}{\proba}{\pp_2}$ & $\mylambda{\psa} \proba \cdot \awp{\pp_1}{\FF}(\psa) + (1-\proba )\cdot \awp{\pp_2}{\FF}(\psa)$ \\[1.5ex]
			$\ITE{\bxa}{\pp_1}{\pp_2}$ & $\mylambda{\psa} 
							\begin{cases}
										\awp{\pp_1}{\FF}(\psa) & \text{if}~ \psa \models \bxa \\
										\awp{\pp_2}{\FF}(\psa) & \text{if}~ \psa\models \neg\bxa
								\end{cases}$  \\[3.0ex]
			\bottomrule
			\bottomrule		
\end{tabularx}
	\end{center}%
	\caption{Recursive definition of $\awp{\pp}{\FF}$. The definition of $\dwp{\pp}{\FF}$ is obtained from (i) replacing $\sup$  by $\inf$ and (ii) replacing $\awpsymbol$  by $\dwpsymbol$.}
	\label{tab:wp}%
\end{table}%
The central objects the weakest pre-expectation calculus operates on are so called \emph{expectations} --- functions assigning a non-negative real number or infinity to every variable valuation. The \emph{set of expectations} is 
\[
\E \eeq \{ \FF ~\mid~ \FF \colon \PS \to \PosRealsInf \ \}
\]
and we denote expectations by $\FF$, $\FG$, and variations thereof. Expectations can be though of as random variables on a program's state space. With this in mind, consider the following:
\begin{definition}
	\label{def:wp}
	Let $\pp \in \pgcl$ and let $\FF\in \E$. Then: 
	\begin{enumerate}
		\item The \emph{angelic weakest pre-expectation \[\awp{\pp}{\FF} \in \E\] of $\pp$ w.r.t.\ post-expectation $\FF$} is defined recursively on the structure of $\pp$ in \Cref{tab:wp}.
		\item The \emph{demonic weakest pre-expectation \[\dwp{\pp}{\FF} \in \E\] of $\pp$ w.r.t.\ post-expectation $\FF$} is defined analogously by (i) replacing $\sup$  by $\inf$ and (ii) replacing $\awpsymbol$  by $\dwpsymbol$ in the rules given in \Cref{tab:wp}. \hfill $\triangle$
	\end{enumerate}
\end{definition}
%
Let us first gain some intuition on the values $\awpsymbol/\dwpsymbol$ determine. Given $\somewpsymbol\in\{\awpsymbol,\dwpsymbol\}$, $\pp \in \pgcl$, and an expectation $\FF\in\E$, $\somewp{\pp}{\FF}$ is yet another expectation assigning a non-negative real or infinity to each variable valuation. Now let $\psa\in\PS$ and think of it as an \emph{initial} variable valuation for $\pp$. Due to the non-deterministic assignments possibly occurring in $\pp$, executing $\pp$ on $\psa$ does not necessarily yield a \emph{unique} distribution over final variable valuations but an \emph{infinite set} of such distributions --- one for each possibly way of resolving the non-determinism in $\pp$. Now, the intuition on $\awpsymbol$ and $\dwpsymbol$ is that
%
%
\[
\awpsymbol/\dwp{\pp}{\FF}(\psa) \eeq
\substack{\text{\normalsize\emph{minimal/maximal} expected value of $\FF$ w.r.t.\ \emph{all} distributions} \\ \text{\normalsize of \emph{final} states reached after executing $\pp$ on \emph{initial} valuation $\psa$~.}}
\]

Both $\dwp{\pp}{\FF}$ and $\awp{\pp}{\FF}$ can be defined recursively on the structure of the program $\pp$, obviating the need for explicitly determining the set of final distributions of~$\pp$. This recursive definition is shown in \Cref{tab:wp}. Let us go over the rules. For that, let $\somewpsymbol\in\{\dwpsymbol,\awpsymbol\}$. Since $\SKIP$ does nothing, $\somewp{\SKIP}{\FF}$ is just $\FF$. $\somewp{\ASSIGN{\vara}{\axa}}{\FF}$ is obtained from \enquote{replacing} $\vara$ in $\FF$ by the right-hand side of the assignment $\axa$. Since, for now, we treat expectations as purely semantic objects, this replacement is formalized semantically. $\somewp{\UNDCHOICE{\vara}}{\FF}$ is the expectation that evaluates to the infimum/supremum of all possible values of $\FF$ obtained from replacing $\vara$ by arbitrary rationals under the current valuation. The rule for sequential composition suggests that $\somewp{\pp}{\FF}$ is obtained in a backward-moving fashion: We first determine $\somewp{\pp_2}{\FF}$ and plug this intermediate pre-expectation into $\somewp{\pp_1}{\cdot}$. $\somewp{\PCHOICE{\pp_1}{\proba}{\pp_2}}{\FF}$ is the weighted sum of the pre-expectations of the two branches, weighted according to the probability of the respective branch being executed. Finally, $\somewp{\ITE{\bxa}{\pp_1}{\pp_2}}{\FF}$ evaluates to the pre-expectation of $\pp_1$ if the current variable valuation satisfies $\bxa$, \mbox{otherwise to the pre-expectation of $\pp_2$.}
\begin{example}
	\label{ex:wp}
	Reconsider program $\pp$ from \Cref{ex:intro:wp}:
	\begin{align*}
		& \UNDCHOICE{\vara_1}\,; \\
		&\PCHOICE{\ASSIGN{\varb_1}{\varb_1 + \vara_1}}{\nicefrac{1}{2}}{\SKIP}\,; \\
		& \UNDCHOICE{\vara_2}\,; \\
		&\PCHOICE{\ASSIGN{\varb_2}{\varb_2 + \vara_2}}{\nicefrac{1}{2}}{\SKIP} 
	\end{align*}
	Now fix an initial variable valuation $\psa$. 
	Due to the purely non-deterministic assignments, executing $\pp$ on $\psa$ does not determine a unique probability distribution over final variable valuations on termination of $\pp$ but an \emph{infinite} set obtained from \emph{all} possible choices for $\vara$. 
	Now recall that we asked:
	\begin{center}
		\emph{Under all possible choices the players can make,} \\
		\emph{what is the minimal (maximal) probability of them ending up at the same position?}
	\end{center}
	To model this question using $\dwpsymbol$/$\awpsymbol$, let $\FF$ be the post-expectation given by
	\[
		\FF = \mylambda{\psa}
		\begin{cases}
			1 & \text{if $\psa(\varb_1)=\psa(\varb_2)$} \\
			0 & \text{otherwise}~,
		\end{cases}
	\]
	which is the indicator function of the event that $\varb_1=\varb_2$ holds on termination of $\pp$. The sought-after minimal probabilities (depending on the initial valuation $\psa$) are then given by
	\begin{align*}
		&\dwp{\pp}{\FF}(\psa) \eeq\\
		&\inf_{\rata_1\in\QQ} 
		\nicefrac{1}{2} \cdot \left(\inf_{\rata_2\in\QQ}
			\nicefrac{1}{2}\cdot \FF(\psa[\varb_1 \mapsto \psa(\varb_1)+\rata_1, \varb_2 \mapsto \psa(\varb_2)+\rata_2]) 
			 + \nicefrac{1}{2}\cdot \FF(\psa[\varb_1 \mapsto \psa(\varb_1)+\rata_1]) \right) \\
		&\qquad{}+{} 
		\nicefrac{1}{2} \cdot \left(\inf_{\rata_2\in\QQ}
			\nicefrac{1}{2}\cdot \FF(\psa[\varb_2 \mapsto \psa(\varb_2)+\rata_2]) 
			+ \nicefrac{1}{2} \cdot \FF(\psa)
		 \right)~,
	\end{align*}
	and similarly for $\awpsymbol$. The algorithm presented in the next section --- so to speak --- eliminates the $\inf$'s, yielding a closed-form representation for $\dwp{\pp}{\FF}$ (and $\awp{\pp}{\FF}$).
	\hfill $\triangle$
\end{example}

\subsection{Effectively Representing Optimal Weakest Pre-Expectations}
\label{sec:wp:compute}
\begin{table}[t]
	\begin{center}
		\begin{tabularx}{0.8\textwidth}{X@{\hspace{0.3em}}l@{\hspace{0.3em}}lX}
			\toprule
			\toprule
			$\boldsymbol{\pp}$ & $\mathsf{\mathbf{\overline{awp}}}\boldsymbol{\llbracket\pp \rrbracket(\lqa)}$  \\[0.5ex]
			\hline 
			$\SKIP$ & $\ELIM{\lqa}$  \rule{0pt}{3.5ex}\\[1.8ex]
			$\ASSIGN{\vara}{\axa}$ & $\ELIM{\lqa}[\vara/\axa]$   \\[1.5ex]
			$\UNDCHOICE{\vara}$ & $\ELIM{\SupOO{\vara}{\lqa}}$   \\[1.5ex]
			$\COMPOSE{\pp_1}{\pp_2}$ & $\xawp{\pp_1}{\xawp{\pp_2}{\lqa}}$ \\[1.5ex]
			$\PCHOICE{\pp_1}{\proba}{\pp_2}$ & $\proba \cdot \xawp{\pp_1}{\lqa} + (1-\proba)\cdot\xawp{\pp_2}{\lqa}$ \\[1.5ex]
			$\ITE{\bxa}{\pp_1}{\pp_2}$ & $\iverson{\bxa}\cdot \xawp{\pp_1}{\lqa}  + \iverson{\neg\bxa} \cdot \xawp{\pp_2}{\lqa}$  \\[1.0ex]
			\bottomrule
			\bottomrule		
\end{tabularx}
	\end{center}%
	\caption{Syntactic variants of $\dwpsymbol$/$\awpsymbol$ operating on piecewise linear expectations. If $\lqa$ is not partitioning, we assume that $\ELIM{\cdot}$ first transforms $\lqa$ into an equivalent, partitioning piecewise linear expectation using the procedure described in \Cref{sec:stage1}.}
	\label{tab:xwp}%
\end{table}%
We have just introduced the recursive definitions of weakest pre-expectations for purely semantic functions from variable valuations to numbers. In this section, we present a \emph{syntactic} variant of weakest pre-expectations operating on piecewise linear quantities. Using our quantifier elimination algorithm, this will enable us to obtain effectively constructible, closed-form representations of weakest pre-expectations whenever the post-expectation can be represented as a piecewise linear quantity --- even in the presence of unbounded non-determinism.

Piecewise linear quantities in $\LQ$ generally evaluate to negative numbers. As a first step, we define a computable subset of piecewise linear quantities that actually represent expectations, i.e., functions from variable valuations to \emph{non-negative} numbers or $\infty$.
\begin{definition}
	The computable set of \emph{piecewise linear expectations} is defined as
	\begin{align*}
		\LE \quad\eeq\quad \{ \lqa \in \LQ \quad {}|{}\quad \mylambda{\psa} \sem{\psa}{\lqa} ~{}\in{}~ \E \}~.
		\tag*{$\triangle$}
	\end{align*}
\end{definition}
Notice that the condition $\mylambda{\psa}\sem{\psa}{\lqa} \in \E $ is equivalent to $\forall \psa\in\PS\colon \sem{\psa}{\lqa} \geq 0$, which is easily seen to be decidable: Given $\lqa\in\LQ$, let 
\[
	\ELIM{\lqa} \eeq \SNF
\]
be the quantifier-free equivalent of $\lqa$ obtained from \Cref{alg:qe}. Since $\ELIM{\lqa}$ is partitioning (cf.\ \Cref{def:GNF}.\ref{def:GNF1}), $\forall \psa\in\PS\colon \sem{\psa}{\ELIM{\lqa}} \geq 0$ is equivalent to 
\[
	\bxa_i \wedge \eaxa_i < 0~\text{being unsatisfiable for all $i \in \{1,\ldots,n\}$}~.
\]
The latter can be decided via SMT solving over linear rational arithmetic (LRA).

Piecewise linear expectations are syntactic representations of  (semantic) expectations as introduced in the previous section. This yields syntactic variants of both $\dwpsymbol$ and $\awpsymbol$:
\begin{definition}
	\label{def:xwp}
	Let $\pp \in \pgcl$ and let $\lqa\in \LE$. Then: 
	\begin{enumerate}
		\item The \emph{syntactic} angelic weakest pre-expectation \[\xawp{\pp}{\lqa} ~{}\in{}~ \LE\] of $\pp$ w.r.t.\  $\lqa$ is defined recursively on the structure of $\pp$ by the rules in \Cref{tab:xwp}.
		\item The \emph{syntactic} demonic weakest pre-expectation \[\xdwp{\pp}{\lqa} ~{}\in{}~ \LE\] of $\pp$ w.r.t.\ $\lqa$ is defined analogously by (i) replacing $\Sup$  by $\Inf$ and (ii) $\xawpsymbol$  by $\xdwpsymbol$. \hfill $\triangle$
	\end{enumerate}
\end{definition}
The rules in \Cref{tab:xwp} closely follow their semantic counterparts in \Cref{tab:wp}. To keep the definition concise, we introduce some straightforward shorthands: $\ELIM{\lqa}[\vara/\axa]$ is the (quantifier-free) piecewise linear expectation obtained from $\ELIM{\lqa}$ by replacing every occurrence of $\vara$ by $\axa$. Moreover, given a quantifier-free
\[
	\lqc \eeq \SNF ~{}\in{}~ \LE~,
\]
and a rational probability $\proba$, we let 
\[
	\proba \cdot \lqc \eeq \sum\limits_{\idxa=1}^{\nna}\iverson{\bxa_\idxa}\cdot (\proba\cdot\eaxa_\idxa)~,
\]
where $\proba\cdot\eaxa_\idxa$ is the extended arithmetic expression $\eaxa_\idxa$ scaled by $\proba$. Finally, we recall that 
\[
	\iverson{\bxa}\cdot\lqc \eeq \sum_{i=1}^n \iverson{\bxa \wedge \bxa_i}\cdot \eaxa~.
\]
By performing quantifier elimination (\Cref{alg:qe}) whenever we encounter a piecewise linear expectation possibly containing quantifiers, we obtain only quantifier-free expressions which moreover represent the sought-after pre-expectations. Let us put this more formally:
\begin{theorem}
	\label{thm:compute_wp}
	Let $\somewpsymbol \in \{\awpsymbol,\dwpsymbol\}$. For all $\pp\in\pgcl$ and all $\lqa \in \LE$, we have
	\begin{enumerate}
		\item $\xsomewp{\pp}{\lqa}$ is quantifier-free, and
		\item $\xsomewp{\pp}{\lqa}$ is sound, i.e., $\mylambda{\psa} \sem{\psa}{\xsomewp{\pp}{\lqa}} = \somewp{\pp}{\mylambda{\psa} \sem{\psa}{\lqa}}$.
	\end{enumerate}
	In particular, $\somewp{\pp}{\mylambda{\psa} \sem{\psa}{\lqa}}$ is a computable, $\QQ^{\infty}$-valued function.
\end{theorem}
\begin{proof}
	By induction on $\pp$ using \Cref{thm:algo_sound}.
\end{proof}
Recursively applying the rules in \Cref{tab:xwp} thus yields the sought-after algorithm.
\begin{example}
	Reconsider the program $\pp$ and the post-expectation $\FF$ from \Cref{ex:wp}.
	%
	%
	%
	With $\xdwpsymbol/\xawpsymbol$, we can compute quantifier-free piecewise linear expectations mapping \emph{all initial positions $\varb_1,\varb_2$} to the sought-after minimal/maximal probabilities. For that, let \[\lqa \eeq \iverson{\varb_1 = \varb_2}\] be the (syntactic) post-expectation corresponding to $\FF$. We compute  
	\[
		\xdwp{\pp}{\lqa} \eeq \iverson{\varb_1 = \varb_2} \cdot \frac{1}{4}
		\qquad\text{and}\qquad
		\xawp{\pp}{\lqa} \eeq  \frac{3}{4} + \iverson{\varb_1 = \varb_2} \cdot \frac{1}{4}~.
	\]
	In words, the sough-after \emph{minimal} probability is $\nicefrac{1}{4}$ whenever the players start at the same position, and $0$ otherwise. On the other hand, the sought-after \emph{maximal} probability is $1$ whenever the players start at the same position, and $\nicefrac{3}{4}$, otherwise. 
	%
	\hfill $\triangle$
\end{example}

		\section{Conclusion}
		\label{sec:concl}
		We have investigated both quantitative quantifier elimination and quantitative Craig interpolation for piecewise linear quantities --- an assertion language in automatic quantitative software verification. We have provided a sound and complete quantifier elimination algorithm, proved it sound, and analyzed upper space-complexity bounds. Using our algorithm, we have derived a quantitative Craig interpolation theorem for arbitrary \mbox{piecewise linear quantities.}

We see ample space for future work. First, we could investigate alternative quantifier elimination procedures: Our algorithm can be understood as a quantitative generalization of Fourier-Motzkin variable elimination \cite{motzkin1936beitraege,fourier1825analyse}. It would be interesting to apply, e.g., virtual substitution \cite{DBLP:journals/cj/LoosW93} in the quantitative setting and to compare the so-obtained approaches --- both empirically and theoretically. We might also benefit from improvements of Fourier-Motzkin variable elimination such as \toolfont{FMplex} \cite{DBLP:journals/corr/abs-2310-00995} to improve the practical feasibility of our approach.  Moreover, we have focussed on $\QQ$-valued variables. We plan to investigate techniques which apply to integer-valued variables using, e.g., Cooper's method \cite{cooper} and in how far our results can be generalized to a non-linear setting. We also plan to apply our results from \Cref{sec:nondet_programs} to the technique from \cite{DBLP:journals/pacmpl/BatzKRW25} for approximate inference of continuous probabilistic programs with hard guarantees.

Finally, we plan to investigate potential applications of our techniques:
\begin{enumerate}
	\item Dillig et al.\ \cite{DBLP:conf/oopsla/DilligDLM13} present a quantifier elimination-based algorithm for generating inductive loop invariants of classical programs abductively. Generalizing this algorithm to the probabilistic setting, where weakest pre\emph{conditions} are replaced by weakest pre\emph{expectations}, might yield a promising application of our quantifier elimination algorithm.
	\item We are currently investigating the applicability of McMillan’s interpolation and SAT-based model checking algorithm \cite{DBLP:conf/cav/McMillan03} to \emph{probabilistic} program verification. One of the major challenges is to obtain suitable quantitative interpolants and we hope that our results on the existence of interpolants spark the development of suitable techniques.
	\item In the light of the above application and \Cref{rem:simple_interpolants}, we plan to adapt Albarghouthi's and McMillan's technique for computing \cite{DBLP:conf/cav/AlbarghouthiM13}  \enquote{simpler} interpolants.
\end{enumerate}

	\bibliographystyle{alphaurl}
	\bibliography{literature}

\newcommand{\etalchar}[1]{$^{#1}$}
\begin{thebibliography}{WWX{\etalchar{+}}24}

\bibitem[AB81]{aliprantis1981principles}
C.D. Aliprantis and O.~Burkinshaw.
\newblock {\em Principles of Real Analysis}.
\newblock North Holland, 1981.

\bibitem[ABG{\etalchar{+}}12]{DBLP:conf/lpar/AlbertiBGRS12}
Francesco Alberti, Roberto Bruttomesso, Silvio Ghilardi, Silvio Ranise, and
  Natasha Sharygina.
\newblock Lazy abstraction with interpolants for arrays.
\newblock In {\em {LPAR}}, volume 7180 of {\em Lecture Notes in Computer
  Science}, pages 46--61. Springer, 2012.

\bibitem[AM13]{DBLP:conf/cav/AlbarghouthiM13}
Aws Albarghouthi and Kenneth~L. McMillan.
\newblock Beautiful interpolants.
\newblock In {\em {CAV}}, volume 8044 of {\em Lecture Notes in Computer
  Science}, pages 313--329. Springer, 2013.

\bibitem[BBSF24]{DBLP:conf/fm/BritikovBSF24}
Konstantin Britikov, Martin Blicha, Natasha Sharygina, and Grigory Fedyukovich.
\newblock Reachability analysis for multiloop programs using transition power
  abstraction.
\newblock In {\em {FM} {(1)}}, volume 14933 of {\em Lecture Notes in Computer
  Science}, pages 558--576. Springer, 2024.

\bibitem[BCJ{\etalchar{+}}23]{DBLP:conf/tacas/BatzCJKKM23}
Kevin Batz, Mingshuai Chen, Sebastian Junges, Benjamin~Lucien Kaminski,
  Joost{-}Pieter Katoen, and Christoph Matheja.
\newblock Probabilistic program verification via inductive synthesis of
  inductive invariants.
\newblock In {\em {TACAS} {(2)}}, volume 13994 of {\em Lecture Notes in
  Computer Science}, pages 410--429. Springer, 2023.

\bibitem[BCK{\etalchar{+}}21]{DBLP:conf/cav/BatzCKKMS20}
Kevin Batz, Mingshuai Chen, Benjamin~Lucien Kaminski, Joost{-}Pieter Katoen,
  Christoph Matheja, and Philipp Schr{\"{o}}er.
\newblock Latticed k-induction with an application to probabilistic programs.
\newblock In {\em {CAV} {(2)}}, volume 12760 of {\em Lecture Notes in Computer
  Science}, pages 524--549. Springer, 2021.

\bibitem[BGK{\etalchar{+}}22]{weighted}
Kevin Batz, Adrian Gallus, Benjamin~Lucien Kaminski, Joost-Pieter Katoen, and
  Tobias Winkler.
\newblock Weighted programming: a programming paradigm for specifying
  mathematical models.
\newblock {\em Proc. ACM Program. Lang.}, 6(OOPSLA1), 2022.

\bibitem[BKK{\etalchar{+}}19]{qsl_popl}
Kevin Batz, Benjamin~Lucien Kaminski, Joost-Pieter Katoen, Christoph Matheja,
  and Thomas Noll.
\newblock Quantitative separation logic: A logic for reasoning about
  probabilistic pointer programs.
\newblock {\em Proc. {ACM} Program. Lang.}, 3({POPL}):34:1--34:29, 2019.

\bibitem[BKKM21]{DBLP:journals/pacmpl/BatzKKM21}
Kevin Batz, Benjamin~Lucien Kaminski, Joost{-}Pieter Katoen, and Christoph
  Matheja.
\newblock Relatively complete verification of probabilistic programs: an
  expressive language for expectation-based reasoning.
\newblock {\em Proc. {ACM} Program. Lang.}, 5({POPL}):1--30, 2021.

\bibitem[BKO25]{DBLP:conf/fossacs/BatzKO25}
Kevin Batz, Joost{-}Pieter Katoen, and Nora Orhan.
\newblock Quantifier {E}limination and {C}raig interpolation: The
  {Q}uantitative {W}ay.
\newblock In {\em FoSSaCS}, volume 15691 of {\em Lecture Notes in Computer
  Science}, pages 176--197. Springer, 2025.

\bibitem[BKRW25]{DBLP:journals/pacmpl/BatzKRW25}
Kevin Batz, Joost{-}Pieter Katoen, Francesca Randone, and Tobias Winkler.
\newblock Foundations for deductive verification of continuous probabilistic
  programs: From {L}ebesgue to {R}iemann and {B}ack.
\newblock {\em Proc. {ACM} Program. Lang.}, 9({OOPSLA1}):421--448, 2025.

\bibitem[BLW22]{craig_to_software}
Dirk Beyer, Nian{-}Ze Lee, and Philipp Wendler.
\newblock Interpolation and {SAT}-based model checking revisited: Adoption to
  software verification.
\newblock {\em CoRR}, abs/2208.05046, 2022.

\bibitem[BV98]{DBLP:conf/csl/BaazV98}
Matthias Baaz and Helmut Veith.
\newblock Quantifier elimination in fuzzy logic.
\newblock In {\em {CSL}}, volume 1584 of {\em Lecture Notes in Computer
  Science}, pages 399--414. Springer, 1998.

\bibitem[Col75]{DBLP:conf/automata/Collins75}
George~E. Collins.
\newblock Quantifier elimination for real closed fields by cylindrical
  algebraic decomposition.
\newblock In {\em Automata Theory and Formal Languages}, volume~33 of {\em
  Lecture Notes in Computer Science}, pages 134--183. Springer, 1975.

\bibitem[Coo72]{cooper}
D.C. Cooper.
\newblock Theorem proving in arithmetic without multiplication.
\newblock {\em Machine Intelligence}, 1972.

\bibitem[Cra57]{Craig_1957}
William Craig.
\newblock Three uses of the herbrand-gentzen theorem in relating model theory
  and proof theory.
\newblock {\em Journal of Symbolic Logic}, 22(3):269–285, 1957.

\bibitem[CS13]{DBLP:conf/cav/ChakarovS13}
Aleksandar Chakarov and Sriram Sankaranarayanan.
\newblock Probabilistic program analysis with martingales.
\newblock In {\em {CAV}}, volume 8044 of {\em Lecture Notes in Computer
  Science}, pages 511--526. Springer, 2013.

\bibitem[CSS03]{DBLP:conf/cav/ColonSS03}
Michael Col{\'{o}}n, Sriram Sankaranarayanan, and Henny Sipma.
\newblock Linear invariant generation using non-linear constraint solving.
\newblock In {\em {CAV}}, volume 2725 of {\em Lecture Notes in Computer
  Science}, pages 420--432. Springer, 2003.

\bibitem[CWA{\etalchar{+}}19]{DBLP:conf/cade/ChenWAZKZ19}
Mingshuai Chen, Jian Wang, Jie An, Bohua Zhan, Deepak Kapur, and Naijun Zhan.
\newblock {NIL:} learning nonlinear interpolants.
\newblock In {\em {CADE}}, volume 11716 of {\em Lecture Notes in Computer
  Science}, pages 178--196. Springer, 2019.

\bibitem[DDLM13]{DBLP:conf/oopsla/DilligDLM13}
Isil Dillig, Thomas Dillig, Boyang Li, and Kenneth~L. McMillan.
\newblock Inductive invariant generation via abductive inference.
\newblock In {\em {OOPSLA}}, pages 443--456. {ACM}, 2013.

\bibitem[Dij76]{dijkstra_discipline}
Edsger~W. Dijkstra.
\newblock {\em A Discipline of Programming}.
\newblock Prentice-Hall, 1976.

\bibitem[dMB08]{z3}
Leonardo~Mendon{\c{c}}a de~Moura and Nikolaj~S. Bj{\o}rner.
\newblock {Z3:} an efficient {SMT} solver.
\newblock In {\em {TACAS}}, volume 4963 of {\em Lecture Notes in Computer
  Science}, pages 337--340. Springer, 2008.

\bibitem[DS90]{DBLP:books/daglib/0067387}
Edsger~W. Dijkstra and Carel~S. Scholten.
\newblock {\em Predicate Calculus and Program Semantics}.
\newblock Texts and Monographs in Computer Science. Springer, 1990.

\bibitem[EKS08]{DBLP:journals/jsat/EsparzaKS08}
Javier Esparza, Stefan Kiefer, and Stefan Schwoon.
\newblock Abstraction refinement with {C}raig interpolation and symbolic
  pushdown systems.
\newblock {\em J. Satisf. Boolean Model. Comput.}, 5(1-4):27--56, 2008.

\bibitem[Fou25]{fourier1825analyse}
J.B.J. Fourier.
\newblock {\em Analyse des travaux de l'Acad{\'e}mie royale des sciences
  pendant l'ann{\'e}e 1824: rapport lu dans la s{\'e}ance publique de
  l'Institut le 24 avril 1825. Partie math{\'e}matique}.
\newblock Institut (Paris). Institut royal de France, 1825.

\bibitem[GDX{\etalchar{+}}16]{DBLP:conf/cade/GanDXZKC16}
Ting Gan, Liyun Dai, Bican Xia, Naijun Zhan, Deepak Kapur, and Mingshuai Chen.
\newblock Interpolant synthesis for quadratic polynomial inequalities and
  combination with {EUF}.
\newblock In {\em {IJCAR}}, volume 9706 of {\em Lecture Notes in Computer
  Science}, pages 195--212. Springer, 2016.

\bibitem[GKM13]{DBLP:conf/qest/GretzKM13}
Friedrich Gretz, Joost{-}Pieter Katoen, and Annabelle McIver.
\newblock Prinsys - on a quest for probabilistic loop invariants.
\newblock In {\em {QEST}}, volume 8054 of {\em Lecture Notes in Computer
  Science}, pages 193--208. Springer, 2013.

\bibitem[HJMM04]{DBLP:conf/popl/HenzingerJMM04}
Thomas~A. Henzinger, Ranjit Jhala, Rupak Majumdar, and Kenneth~L. McMillan.
\newblock Abstractions from proofs.
\newblock In {\em {POPL}}, pages 232--244. {ACM}, 2004.

\bibitem[JM06]{DBLP:conf/tacas/JhalaM06}
Ranjit Jhala and Kenneth~L. McMillan.
\newblock A practical and complete approach to predicate refinement.
\newblock In {\em {TACAS}}, volume 3920 of {\em Lecture Notes in Computer
  Science}, pages 459--473. Springer, 2006.

\bibitem[Kam19]{kaminski_diss}
Benjamin~Lucien Kaminski.
\newblock {\em Advanced Weakest Precondition Calculi for Probabilistic
  Programs}.
\newblock PhD thesis, {RWTH} Aachen University, Germany, 2019.

\bibitem[Kap05]{Kapur2005AutomaticallyGL}
Deepak Kapur.
\newblock Automatically generating loop invariants using quantifier
  elimination.
\newblock In {\em Deduction and Applications}, 2005.

\bibitem[KGC14]{DBLP:conf/cav/KomuravelliGC14}
Anvesh Komuravelli, Arie Gurfinkel, and Sagar Chaki.
\newblock {SMT}-based model checking for recursive programs.
\newblock In {\em {CAV}}, volume 8559 of {\em Lecture Notes in Computer
  Science}, pages 17--34. Springer, 2014.

\bibitem[KGC16]{DBLP:journals/fmsd/KomuravelliGC16}
Anvesh Komuravelli, Arie Gurfinkel, and Sagar Chaki.
\newblock {SMT}-based model checking for recursive programs.
\newblock {\em Formal Methods Syst. Des.}, 48(3):175--205, 2016.

\bibitem[KMMM10]{DBLP:conf/sas/KatoenMMM10}
Joost-Pieter Katoen, Annabelle McIver, Larissa Meinicke, and Carroll~C. Morgan.
\newblock Linear-invariant generation for probabilistic programs: - automated
  support for proof-based methods.
\newblock In {\em {SAS}}, volume 6337 of {\em Lecture Notes in Computer
  Science}, pages 390--406. Springer, 2010.

\bibitem[KMZ06]{10.1145/1181775.1181789}
Deepak Kapur, Rupak Majumdar, and Calogero~G. Zarba.
\newblock Interpolation for data structures.
\newblock SIGSOFT '06/FSE-14, page 105–116. ACM, 2006.

\bibitem[Knu92]{iverson_knuth}
Donald~E. Knuth.
\newblock Two notes on notation.
\newblock {\em The American Mathematical Monthly}, 99(5):403--422, 1992.

\bibitem[Koz83]{kozen83}
Dexter Kozen.
\newblock A probabilistic {PDL}.
\newblock In {\em {STOC}}, pages 291--297. {ACM}, 1983.

\bibitem[Koz85]{kozen85}
Dexter Kozen.
\newblock A probabilistic {PDL}.
\newblock {\em J. Comput. Syst. Sci.}, 30(2):162--178, 1985.

\bibitem[KVGG19]{DBLP:conf/cav/KrishnanVGG19}
Hari Govind~Vediramana Krishnan, Yakir Vizel, Vijay Ganesh, and Arie Gurfinkel.
\newblock Interpolating strong induction.
\newblock In {\em {CAV} {(2)}}, volume 11562 of {\em Lecture Notes in Computer
  Science}, pages 367--385. Springer, 2019.

\bibitem[LW93]{DBLP:journals/cj/LoosW93}
R{\"{u}}diger Loos and Volker Weispfenning.
\newblock Applying linear quantifier elimination.
\newblock {\em Comput. J.}, 36(5):450--462, 1993.

\bibitem[McM03]{DBLP:conf/cav/McMillan03}
Kenneth~L. McMillan.
\newblock Interpolation and {SAT}-based model checking.
\newblock In {\em {CAV}}, volume 2725 of {\em Lecture Notes in Computer
  Science}, pages 1--13. Springer, 2003.

\bibitem[McM06]{DBLP:conf/cav/McMillan06}
Kenneth~L. McMillan.
\newblock Lazy abstraction with interpolants.
\newblock In {\em {CAV}}, volume 4144 of {\em Lecture Notes in Computer
  Science}, pages 123--136. Springer, 2006.

\bibitem[MF14]{DBLP:conf/rp/MahdiF14}
Ahmed Mahdi and Martin Fr{\"{a}}nzle.
\newblock Generalized {C}raig interpolation for stochastic satisfiability
  modulo theory problems.
\newblock In {\em {RP}}, volume 8762 of {\em Lecture Notes in Computer
  Science}, pages 203--215. Springer, 2014.

\bibitem[MM05]{mciver_morgan}
Annabelle McIver and Carroll Morgan.
\newblock {\em Abstraction, Refinement and Proof for Probabilistic Systems}.
\newblock Monographs in Computer Science. Springer, 2005.

\bibitem[Mot36]{motzkin1936beitraege}
T.~Motzkin.
\newblock {\em Beitraege zur Theorie der linearen Ungleichungen}.
\newblock Universitaet Basel, 1936.

\bibitem[M{\"u}l19]{mullerBuildingDeductiveProgram2019}
Peter M{\"u}ller.
\newblock Building {{Deductive Program Verifiers}} - {{Lecture Notes}}.
\newblock {IOS Press}, 2019.

\bibitem[NCH18]{DBLP:conf/pldi/NgoC018}
Van~Chan Ngo, Quentin Carbonneaux, and Jan Hoffmann.
\newblock Bounded expectations: resource analysis for probabilistic programs.
\newblock In {\em {PLDI}}, pages 496--512. {ACM}, 2018.

\bibitem[PN{\'{A}}K25]{DBLP:journals/corr/abs-2310-00995}
Valentin Promies, Jasper Nalbach, Erika {\'{A}}brah{\'{a}}m, and Paul Kobialka.
\newblock {FMplex}: Exploring a bridge between fourier-motzkin and simplex.
\newblock {\em Log. Methods Comput. Sci.}, 21(2), 2025.

\bibitem[SBK{\etalchar{+}}23]{caesar}
Philipp Schr{\"{o}}er, Kevin Batz, Benjamin~Lucien Kaminski, Joost-Pieter
  Katoen, and Christoph Matheja.
\newblock A deductive verification infrastructure for probabilistic programs.
\newblock {\em Proc. {ACM} Program. Lang.}, 7({OOPSLA2}):2052--2082, 2023.

\bibitem[SDdB11]{DBLP:conf/uai/SannerDB11}
Scott Sanner, Karina~Valdivia Delgado, and Leliane~Nunes de~Barros.
\newblock Symbolic dynamic programming for discrete and continuous state
  {M}{D}{P}s.
\newblock In {\em {UAI}}, pages 643--652. {AUAI} Press, 2011.

\bibitem[SFS11]{DBLP:conf/hvc/SeryFS11}
Ondrej Sery, Grigory Fedyukovich, and Natasha Sharygina.
\newblock Interpolation-based function summaries in bounded model checking.
\newblock In {\em Haifa Verification Conference}, volume 7261 of {\em Lecture
  Notes in Computer Science}, pages 160--175. Springer, 2011.

\bibitem[SFS12]{DBLP:conf/fmcad/SeryFS12}
Ondrej Sery, Grigory Fedyukovich, and Natasha Sharygina.
\newblock Incremental upgrade checking by means of interpolation-based function
  summaries.
\newblock In {\em {FMCAD}}, pages 114--121. {IEEE}, 2012.

\bibitem[Sti18]{gstill_parametric}
Georg Still.
\newblock Lectures on parametric optimization: An introduction.
\newblock Optimization Online, 2018.

\bibitem[tCC24]{DBLP:conf/fossacs/CateC24}
Balder ten Cate and Jesse Comer.
\newblock Craig interpolation for decidable first-order fragments.
\newblock In {\em FoSSaCS {(2)}}, volume 14575 of {\em Lecture Notes in
  Computer Science}, pages 137--159. Springer, 2024.

\bibitem[TPK24]{DBLP:journals/fss/TavanaPK24}
Nazanin Tavana, Massoud Pourmahdian, and Seyed~Amin Khatami.
\newblock The {C}raig interpolation property in first-order {G}{\"{o}}del
  logic.
\newblock {\em Fuzzy Sets Syst.}, 485:108958, 2024.

\bibitem[VG14]{DBLP:conf/cav/VizelG14}
Yakir Vizel and Arie Gurfinkel.
\newblock Interpolating property directed reachability.
\newblock In {\em {CAV}}, volume 8559 of {\em Lecture Notes in Computer
  Science}, pages 260--276. Springer, 2014.

\bibitem[WWX{\etalchar{+}}24]{DBLP:conf/fm/WuWXLZG24}
Hao Wu, Jie Wang, Bican Xia, Xiakun Li, Naijun Zhan, and Ting Gan.
\newblock Nonlinear {C}raig interpolant generation over unbounded domains by
  separating semialgebraic sets.
\newblock In {\em {FM} {(1)}}, volume 14933 of {\em Lecture Notes in Computer
  Science}, pages 92--110. Springer, 2024.

\bibitem[YFK{\etalchar{+}}24]{DBLP:journals/corr/abs-2403-17567}
Tengshun Yang, Hongfei Fu, Jingyu Ke, Naijun Zhan, and Shiyang Wu.
\newblock Piecewise linear expectation analysis via k-induction for
  probabilistic programs.
\newblock {\em CoRR}, abs/2403.17567, 2024.

\bibitem[ZK22]{strongest_post}
Linpeng Zhang and Benjamin~Lucien Kaminski.
\newblock Quantitative strongest post: a calculus for reasoning about the flow
  of quantitative information.
\newblock {\em Proc. ACM Program. Lang.}, 6(OOPSLA1), 2022.

\end{thebibliography}
	
	\allowdisplaybreaks
	\appendix
	\section{Omitted Proofs}

\subsection{Proof of \Cref{Nlemm:FirstLevelSup}}
\label{Nproof:lemm:FirstLevelSup}
\NlemFirstLevSup*
\begin{proof} We have
	\begin{align*}
		&\sem{\psa}{\SupOO{\vara}{\SNF}} \\
		\eeq&\sup\Bigl\{\sem{\psa[\vara\mapsto\rata]}{\SNF} ~{}\mid{}~\rata\in\QQ\Bigr\}\tag{\Cref{def:SemOfLQ}}\\
		\eeq&\sup\Bigl\{\sem{\psa[\vara\mapsto\rata]}{[\bxa_\idxa]\cdot\eaxa_\idxa}~{}\mid{}~\rata\in\QQ,\idxa\in\{1,\ldots,\nna\}\text{ s.t. }\psa[\vara\mapsto\rata]\models\bxa_\idxa\Bigr\}\tag{quantity is partitioning}\\
		\eeq&\sup\Bigl(\bigcup\limits_{\idxa\in\{1,\ldots,\nna\}}\Bigl\{\sem{\psa[\vara\mapsto\rata]}{[\bxa_\idxa]\cdot\eaxa_\idxa}~{}\mid{}~\rata\in\QQ\text{ s.t. }\psa[\vara\mapsto\rata]\models\bxa_\idxa\Bigr\}\Bigr)
		\tag{rewrite set}\\
		\eeq&\max\Bigl(\bigcup\limits_{\idxa\in\{1,\ldots,\nna\}}\Bigl\{\sup\Bigl\{\sem{\psa[\vara\mapsto\rata]}{[\bxa_\idxa]\cdot\eaxa_\idxa}~{}\mid{}~\rata\in\QQ\text{ s.t. }\psa[\vara\mapsto\rata]\models\bxa_\idxa\Bigr\}\Bigr\}\Bigr)
		\tag{supremum of finite union is maximum of individual suprema}
		\\
		\eeq&\max\Bigl(\bigcup\limits_{\idxa\in\{1,\ldots,\nna\}}\Bigl\{\sup\Bigl\{\sem{\psa[\vara\mapsto\rata]}{\down{\bxa_\idxa}{\eaxa_\idxa}}\\
		&\qquad\qquad\qquad\qquad\qquad{}~{}\mid{}~\rata\in\QQ\text{ s.t. }\psa[\vara\mapsto\rata]\models\bxa_\idxa\Bigr\}\Bigr\}\Bigr)\tag{adding zero}\\
		\eeq&\max\Bigl(\bigcup\limits_{\idxa\in\{1,\ldots,\nna\}}\Bigl\{\sup\Bigl\{\sem{\psa[\vara\mapsto\rata]}{\down{\bxa_\idxa}{\eaxa_\idxa}}\mid\rata\in\QQ\Bigr\}\Bigr\}\Bigr)\tag{$-\infty$ is neutral w.r.t.\ $\sup$}\\
		\eeq&\max\Bigl(\bigcup\limits_{\idxa\in\{1,\ldots,\nna\}}\Bigl\{\sup\Bigl\{\sem{\psa[\vara\mapsto\rata]}{\bxa_\idxa\searrow\eaxa_\idxa}\mid\rata\in\QQ\Bigr\}\Bigr\}\Bigr)
		\tag{by definition}
		\\
		\eeq&\max\Bigl(\bigcup\limits_{\idxa\in\{1,\ldots,\nna\}}\bigl\{\sem{\psa}{\SupOO{\vara}{(\bxa_\idxa\searrow\eaxa_\idxa)}}\bigr\}\Bigr)\tag{\Cref{def:SemOfLQ}}\\
		\eeq&\max\bigl\{\sem{\psa}{\SupOO{\vara}{(\bxa_\idxa\searrow\eaxa_\idxa)}}\mid\idxa\in\{1,\ldots,\nna\}\bigr\}
		\tag{rewrite set}
	\end{align*}
	The reasoning for $\Inf$ is analogous.
\end{proof}

\subsection{Proof of \Cref{lemm:ExploitDNF}}
\label{proof:lemm:ExploitDNF}
\lemmExploitDNF*
\begin{proof}
	For the $\Sup$-case, consider the following:
		\begin{align*}
			&\sem{\psa}{\SupOO{\vara}{(\bxa\searrow\eaxa)}} \\
			\eeq&\sup\Bigl\{\sem{\psa[\vara\mapsto\rata]}{\bxa\searrow\eaxa}\mid\rata\in\QQ\Bigr\}\tag{\Cref{def:SemOfLQ}}\\
			\eeq&\sup\Bigl\{\sem{\psa[\vara\mapsto\rata]}{\down{\bxa}{\eaxa}}\mid\rata\in\QQ\Bigr\}
			\tag{by definition}
			\\
			\eeq&\sup\Bigl\{\sem{\psa[\vara\mapsto\rata]}{\down{\bxa}{\eaxa}}~{}\mid{}~\rata\in\QQ\text{ s.t. }\psa[\vara\mapsto\rata]\models\bxa\Bigr\}
			\tag{$-\infty$ is neutral w.r.t.\ $\sup$}\\
			\eeq&\sup\Bigl\{\sem{\psa[\vara\mapsto\rata]}{[\bxa]\cdot\eaxa}~{}\mid{}~\rata\in\QQ\text{ s.t. }\psa[\vara\mapsto\rata]\models\bxa\Bigr\}\tag{dropping zero}\\
			\eeq&\sup\Bigl\{\sem{\psa[\vara\mapsto\rata]}{[\bxa]\cdot\eaxa}~{}\mid{}~\rata\in\QQ\text{ s.t. }\psa[\vara\mapsto\rata]\models\disj_1~\text{or}~\ldots~\text{or}~\psa[\vara\mapsto\rata]\models\disj_\nna\Bigr\}
			\tag{disjunctive shape of \bxa}\\
			\eeq&\sup\Bigl(\bigcup\limits_{\idxa\in\{1,\ldots,\nna\}}\Bigl\{\sem{\psa[\vara\mapsto\rata]}{[\bxa]\cdot\eaxa}~{}\mid{}~\rata\in\QQ\text{ s.t. }\psa[\vara\mapsto\rata]\models\disj_\idxa\Bigr\}\Bigr)
			\tag{rewrite set}\\
			\eeq&\max\Bigl(\bigcup\limits_{\idxa\in\{1,\ldots,\nna\}}\Bigl\{\sup\Bigl\{\sem{\psa[\vara\mapsto\rata]}{[\bxa]\cdot\eaxa}~{}\mid{}~\rata\in\QQ\text{ s.t. }\psa[\vara\mapsto\rata]\models\disj_\idxa\Bigr\}\Bigr\}\Bigr)
			\tag{supremum of finite union is supremum of individual suprema}
			\\
			\eeq&\max\Bigl(\bigcup\limits_{\idxa\in\{1,\ldots,\nna\}}\Bigl\{\sup\Bigl\{\sem{\psa[\vara\mapsto\rata]}{[\disj_\idxa]\cdot\eaxa} \\
			&\qquad\qquad\qquad \qquad\qquad~{}\mid{}~\rata\in\QQ\text{ s.t. }\psa[\vara\mapsto\rata]\models\disj_\idxa\Bigr\}\Bigr\}\Bigr) 
			\tag{disjunctive shape of $\bxa$} \\
			\eeq&\max\Bigl(\bigcup\limits_{\idxa\in\{1,\ldots,\nna\}}\Bigl\{\sup\Bigl\{\sem{\psa[\vara\mapsto\rata]}{\down{\disj_\idxa}{\eaxa}} \\
			&\qquad\qquad\qquad \qquad\qquad~{}\mid{}~\rata\in\QQ\text{ s.t. }\psa[\vara\mapsto\rata]\models\disj_\idxa\Bigr\}\Bigr\}\Bigr)\tag{adding zero}\\
			\eeq&\max\Bigl(\bigcup\limits_{\idxa\in\{1,\ldots,\nna\}}\Bigl\{\sup\Bigl\{\sem{\psa[\vara\mapsto\rata]}{\down{\disj_\idxa}{\eaxa}}~{}\mid{}~\rata\in\QQ\Bigr\}\Bigr\}\Bigr)\tag{$-\infty$ is neutral w.r.t.\ $\sup$}\\
			\eeq&\max\Bigl(\bigcup\limits_{\idxa\in\{1,\ldots,\nna\}}\Bigl\{\sup\Bigl\{\sem{\psa[\vara\mapsto\rata]}{\disj_\idxa\searrow\eaxa}~{}\mid{}~\rata\in\QQ\Bigr\}\Bigr\}\Bigr)
			\tag{by definition}
			\\
			\eeq&\max\Bigl(\bigcup\limits_{\idxa\in\{1,\ldots,\nna\}}\bigl\{\sem{\psa}{\SupOO{\vara}{(\disj_\idxa\searrow\eaxa)}}\bigr\}\Bigr)\tag{\Cref{def:SemOfLQ}}\\
			\eeq&\max\bigr\{\sem{\psa}{\SupOO{\vara}{(\disj_\idxa\searrow\eaxa)}}\mid\idxa\in\{1,\ldots,\nna\}\bigr\}
			\tag{rewrite set}
	\end{align*}
	The reasoning for $\Inf$ is analogous.
\end{proof}

\subsection{Proof of \Cref{lemm:phiInf+phiSup}}
\label{proof:lemm:phiInf+phiSup}
\thmPhiInfSup*
%
\begin{proof}
	\newcommand{\valsx}{\mathsf{Vals}_\vara}
Theorems \ref{lemm:phiInf+phiSup}.\ref{lemm:phiInf+phiSup1} (resp.\ref{lemm:phiInf+phiSup}.\ref{lemm:phiInf+phiSup3}) hold since $\UBounds{\vara}$  (resp.\ $\LBounds{\vara}$) are finite and non-empty, which implies that the set 
\begin{align*}
	\{ \sem{\psa}{\eaxu_1},\ldots, \sem{\psa}{\eaxu_{m_1}}\}
	\qquad 
	\text{(resp.\ $\{ \sem{\psa}{\eaxl_1},\ldots, \sem{\psa}{\eaxl_{m_2}}\}$)} 
\end{align*}
has a minimum (resp.\ maximum), and the minimum (reps.\ maximum) with minimal index is unique.

For Theorems \ref{lemm:phiInf+phiSup}.\ref{lemm:phiInf+phiSup2} and \ref{lemm:phiInf+phiSup}.\ref{lemm:phiInf+phiSup4}, consider the following: 
Define
\begin{align*}
	\mina
	=\max\bigr\{\sem{\psa}{\eaxc}\mid\eaxc\in\Bounds{>}{\vara}{\disj}\bigl\},~
	\minb
	=
	\max\bigr\{\sem{\psa}{\eaxc}\mid\eaxc\in\Bounds{\geq}{\vara}{\disj}\bigl\}
\end{align*}
and
\begin{align*}
	\maxa
	= \min\bigr\{\sem{\psa}{\eaxc}\mid\eaxc\in\Bounds{<}{\vara}{\disj}\bigl\},~
	\maxb= \min\bigr\{\sem{\psa}{\eaxc}\mid\eaxc\in\Bounds{\leq}{\vara}{\disj}\bigl\}~.
\end{align*}
Since $\disj$ is a conjunction of linear inequalities, we get 
\begin{align*}
	\bigl\{\rata\in\QQ\mid\psa[\vara\mapsto\rata]\models\disj\bigr\}
	\eeq 
	\begin{cases}
		(\mina,\maxa) & \text{if $\mina\geq\minb$ and $\maxa\leq\maxb$} \\
		(\mina,\maxb] & \text{if $\mina\geq\minb$ and $\maxa>\maxb$} \\
		[\minb,\maxa) & \text{if $\mina<\minb$ and $\maxa\leq\maxb$} \\
		[\minb,\maxb] & \text{if $\mina<\minb$ and $\maxa>\maxb$}~,
	\end{cases}
\end{align*}
%
%
where the above denote $\ExtQQ$-valued intervals. Notice that the remaining cases can be omitted since $\valsx$ is non-empty, which gives us 
\begin{align*}
	&\sup \bigl\{\rata\in\QQ\mid\psa[\vara\mapsto\rata]\models\disj\bigr\} \\
	 \eeq & \min \{\maxa,\maxb\} \\
	 \eeq & \min \big( \bigr\{\sem{\psa}{\eaxc}\mid\eaxc\in\Bounds{<}{\vara}{\disj}\bigl\} \cup \bigr\{\sem{\psa}{\eaxc}\mid\eaxc\in\Bounds{\leq}{\vara}{\disj}\bigl\} \big)~.
\end{align*}
Moreover, we have for every $\eaxu\in \UBounds{\vara}$,
\begin{align*}
	&\psa\models\phiSup(\disj,\vara,\eaxu) \\
	\text{implies}\quad &
	\sem{\psa}{\eaxu} \eeq \min \big( \bigr\{\sem{\psa}{\eaxc}\mid\eaxc\in\Bounds{<}{\vara}{\disj}\bigl\} \cup \bigr\{\sem{\psa}{\eaxc}\mid\eaxc\in\Bounds{\leq}{\vara}{\disj}\bigl\} \big)~,
\end{align*}
which implies Theorem \ref{lemm:phiInf+phiSup}.\ref{lemm:phiInf+phiSup2}. The reasoning for Theorem \ref{lemm:phiInf+phiSup}.\ref{lemm:phiInf+phiSup4} is analogous.
%
%
\end{proof}

\subsection{Proof of \Cref{lemm:ThirdLevelSup}}
\label{proof:lemm:ThirdLevelSup}
\thmThirdLevelSup*
\begin{proof}
	We prove the claim for the $\Sup$-quantifier. The proof for $\Inf$ is dual. Fix some $\psa\in\PS$. 
	Since the supremum of the empty set is $-\infty$, we have
	\begin{align*}
		& \sem{\psa}{\SupOO{\vara}{(\disj\searrow\eaxc)}} \eeq
		\sup_{\rata\in\QQ} \{\sem{\psa\statesubst{\vara}{\rata}}{\eaxc} ~\mid~ \psa\statesubst{\vara}{\rata}\models \disj\} ~.
	\end{align*}
	If $\psa \not\models \phiSAT(\disj,\vara)$, then
	\[
	\sem{\psa}{\SupOO{\vara}{(\disj\searrow\eaxc)}}
	\quad{}={}\quad
	-\infty
	\quad{}={}\quad
	\sem{\psa}{\QE{\SupOO{\vara}{(\disj\searrow\eaxc)}}}
	\]
	by \Cref{lemm:phiSAT}. Now assume $\psa \models \phiSAT(\disj,\vara)$. We distinguish the cases $\vara \not\in \FVars{\eaxc}$ and $\vara \in \FVars{\eaxc}$. If $\vara \not\in \FVars{\eaxc}$, then 
	\begin{align*}
		& \sem{\psa}{\SupOO{\vara}{(\disj\searrow\eaxc)}} \eeq
		\sup_{\rata\in\QQ} \{\sem{\psa}{\eaxc} ~\mid~ \psa\statesubst{\vara}{\rata}\models \disj\} 
		\eeq 
		\sem{\psa}{\phiSAT(\disj,\vara) \searrow \eaxc}~.
	\end{align*}
	Conversely, if $\vara \in \FVars{\eaxc}$, then $\eaxc$ is of the form $\rata_0 + \sum_{\varb\in\Vars}\rata_{\varb} \cdot \varb$ and we have
	\begin{align*}
		& \sem{\psa}{\SupOO{\vara}{(\disj\searrow\eaxc)}} \\
		\eeq  & \sup_{\rata\in\QQ} \{\sem{\psa\statesubst{\vara}{\rata}}{\eaxc} ~\mid~ \psa\statesubst{\vara}{\rata}\models \disj\} \\
		\eeq & \sup_{\rata\in\QQ} \{\rata_0 + \rata_\vara\cdot \rata +  \sum_{\varb\in\Vars\setminus\{\vara\}}\rata_{\varb} \cdot \psa(\varb) ~\mid~ \psa\statesubst{\vara}{\rata}\models \disj\} 
		\tag{semantics of $\eaxc$ under $\psa$}\\
		\eeq & 
		\rata_0  +  \sum_{\varb\in\Vars\setminus\{\vara\}}\rata_{\varb} \cdot \psa(\varb)+
		\sup_{\rata\in\QQ} \{   \rata_\vara\cdot \rata  ~\mid~ \psa\statesubst{\vara}{\rata}\models \disj\} 
		\tag{pull constants out of supremum, set non-empty by assumption}
		\\
		\eeq & \rata_0  + \big( \sum_{\varb\in\Vars\setminus\{\vara\}}\rata_{\varb} \cdot \psa(\varb)  \big)
		+ \rata_\vara \cdot 
		\begin{cases}
			\displaystyle \sup_{\rata\in\QQ} \{ \rata ~\mid~ \psa\statesubst{\vara}{\rata}\models \disj\}  & \text{if}~\rata_\vara > 0 \\
			\displaystyle \inf_{\rata\in\QQ} \{ \rata ~\mid~ \psa\statesubst{\vara}{\rata}\models \disj\}  & \text{if}~\rata_\vara < 0 \
		\end{cases}
		\tag{pull out $\rata_\vara$ by introducing case distinction} \\
		\eeq & \rata_0  + \big( \sum_{\varb\in\Vars\setminus\{\vara\}}\rata_{\varb} \cdot \psa(\varb)  \big)
		+ \rata_\vara \cdot 
		\begin{cases}
			\displaystyle \sem{\psa}{\sum\limits_{i=1}^{m_1} \iverson{\phiSup(\disj,\vara,\eaxu_i)} \cdot \eaxu_i } & \text{if}~\rata_\vara > 0 \\
			\displaystyle \sem{\psa}{\sum\limits_{i=1}^{m_2} \iverson{\phiInf(\disj,\vara,\eaxl_i)} \cdot \eaxl_i }   & \text{if}~\rata_\vara < 0 \
		\end{cases} 
		\tag{\Cref{lemm:phiInf+phiSup}}
		\\
		\eeq &
		\begin{cases}
			\displaystyle \sum\limits_{i=1}^{m_1} \sem{\psa}{\iverson{\phiSup(\disj,\vara,\eaxu_i)}}\cdot \big(
		 \rata_0  + \big( \sum_{\varb\in\Vars\setminus\{\vara\}}\rata_{\varb} \cdot \psa(\varb)  \big)
		+ \rata_\vara   \cdot \sem{\psa}{\eaxu_i}  
			\big)& \text{if}~\rata_\vara > 0 \\
			\displaystyle \sum\limits_{i=1}^{m_2} \sem{\psa}{\iverson{\phiInf(\disj,\vara,\eaxl_i)}}\cdot \big(
			\rata_0  + \big( \sum_{\varb\in\Vars\setminus\{\vara\}}\rata_{\varb} \cdot \psa(\varb)  \big)
			+ \rata_\vara   \cdot \sem{\psa}{\eaxl_i}  
			\big)& \text{if}~\rata_\vara < 0
		\end{cases}
		\tag{ \Cref{lemm:phiInf+phiSup}.\ref{lemm:phiInf+phiSup1} and  \Cref{lemm:phiInf+phiSup}.\ref{lemm:phiInf+phiSup3}} \\
		\eeq & 
			\begin{cases}
			\sem{\psa}{\displaystyle\sum\limits_{i=1}^{m_1} \iverson{\phiSup(\disj,\vara,\eaxu_i)} \cdot \eaxc(\vara,\eaxu_i) }&  \text{if \vara{} occurs positively in \eaxc} \\
			\sem{\psa}{\displaystyle \sum\limits_{i=1}^{m_2} \iverson{\phiInf(\disj,\vara,\eaxl_i)} \cdot \eaxc(\vara,\eaxl_i)}&  \text{if \vara{} occurs negatively in \eaxc}~.
		\end{cases} 
		\tag{by definition of $\eaxc(\vara,\eaxu_i)$ (resp.\ $\eaxc(\vara,\eaxl_i)$)} \\
		\eeq & \sem{\psa}{\QE{\SupOO{\vara}{(\disj\searrow\eaxc)}}}~.
		\tag{by definition}
	\end{align*}
	This completes the proof.
	%
	%
	%
\end{proof}

\subsection{Proof of \Cref{thm:craig}}
\label{proof:thm:craig}
\thmCraig*
\begin{proof}
	We prove \Cref{thm:craig}.\ref{thm:craig1}. The proof of \Cref{thm:craig}.\ref{thm:craig2} is analogous.
		
		We first prove that $\lqc$ is a Craig interpolant of $\lqa$ and $\lqb$. 
		For $\lqa \qentails \lqc$, consider the following for an arbitrary valuation $\psa \in \PS$:
		\begin{align*}
			& \sem{\psa}{\lqa} \lleq \sem{\psa}{\lqc} \\
			\text{if}\quad &  \sem{\psa}{\lqa} \lleq  \sem{\psa}{\Sup \vara_1 \ldots \Sup \vara_n \colon \lqa} \\
			\text{if}\quad & \sem{\psa}{\lqa} \lleq  \sup \{\sem{\psa[\vara_1\mapsto \rata_1,\ldots,\vara_n\mapsto\rata_n]}{\lqa} 
			~\mid~ \rata_1,\ldots,\rata_n \in \QQ\} 
			\tag{\Cref{def:SemOfLQ}}\\
			\text{if}\quad & \text{true by choosing $\rata_1 = \psa(\vara_1),\ldots,\rata_n =\psa(\vara_n)$}~.
		\end{align*}
		For $\lqc \qentails \lqb$, consider the following for an arbitrary valuation $\psa \in \PS$:
		\begin{align*}
			&  \sem{\psa}{\lqc} \lleq \sem{\psa}{\lqb} \\
			\text{if} \quad & \sem{\psa}{\Sup \vara_1 \ldots \Sup \vara_n \colon \lqa} \lleq \sem{\psa}{\lqb}
			\tag{\Cref{thm:algo_sound}} \\
			\text{if}\quad & \sup \{\sem{\psa[\vara_1\mapsto \rata_1,\ldots,\vara_n\mapsto\rata_n]}{\lqa} 
			~\mid~ \rata_1,\ldots,\rata_n \in \QQ\}   \lleq \sem{\psa}{\lqb}
			\tag{\Cref{def:SemOfLQ}} \\
			\text{if}\quad &  \forall{\rata_1,\ldots,\rata_n\in\QQ}\colon 
			\sem{\psa[\vara_1\mapsto \rata_1,\ldots,\vara_n\mapsto\rata_n]}{\lqa} \lleq \sem{\psa}{\lqb}
			\tag{property of suprema} \\
			\text{if}\quad & \forall{\rata_1,\ldots,\rata_n\in\QQ}\colon 
			\sem{\psa[\vara_1\mapsto \rata_1,\ldots,\vara_n\mapsto\rata_n]}{\lqa} \lleq \sem{\psa[\vara_1\mapsto \rata_1,\ldots,\vara_n\mapsto\rata_n]}{\lqb}
			\tag{$\vara_1,\ldots,\vara_n \not\in \FVars{\lqb}$} \\ 
			\text{if}\quad & \lqa \qqentails \lqb~.
			\tag{holds by assumption}
		\end{align*}

		Now let $\lqd$ be an arbitrary Craig interpolant of $\lqa$ and $\lqb$. To prove that $\lqc$ is the strongest Craig interpolant of $\lqa$ and $\lqb$, we prove $\lqc \qentails \lqd$. For that, consider the following for an arbitrary valuation $\psa \in \PS$:
		\begin{align*}
			& \sem{\psa}{\lqc} \lleq \sem{\psa}{\lqd} \\
			\text{if}\quad & \sem{\psa}{\Sup \vara_1 \ldots \Sup \vara_n \colon \lqa} \lleq \sem{\psa}{\lqd} \\
			\text{if}\quad & \sup \{\sem{\psa[\vara_1\mapsto \rata_1,\ldots,\vara_n\mapsto\rata_n]}{\lqa} 
			~\mid~ \rata_1,\ldots,\rata_n \in \QQ\}   \lleq \sem{\psa}{\lqd}
			\tag{\Cref{def:SemOfLQ}} \\
			\text{if}\quad &  \forall{\rata_1,\ldots,\rata_n\in\QQ}\colon 
			\sem{\psa[\vara_1\mapsto \rata_1,\ldots,\vara_n\mapsto\rata_n]}{\lqa} \lleq \sem{\psa}{\lqd}
			\tag{property of suprema} \\
			\text{if}\quad &  \forall{\rata_1,\ldots,\rata_n\in\QQ}\colon 
			\sem{\psa[\vara_1\mapsto \rata_1,\ldots,\vara_n\mapsto\rata_n]}{\lqa} \lleq \sem{\psa[\vara_1\mapsto \rata_1,\ldots,\vara_n\mapsto\rata_n]}{\lqd}
			\tag{$\vara_1,\ldots,\vara_n\not\in\FVars{\lqc'}$}\\ 
			\text{if} \quad &  \lqa \qentails \lqd~.
			\tag{holds by assumption}
		\end{align*}
\end{proof}

\subsection{Proof of \Cref{thm:algo_sound}}
\label{proof:algo_sound}

We rely on the following auxiliary results:
\begin{lemma}
	\label{lem:size_minmax}
	Let $\seta=\{\qflqa_1,\ldots,\qflqa_\nna\} \subseteq \LQ$ for some $\nna\geq 1$, where each $\qflqa_i$
	%
	is partitioning and 
	\[
		\max \{ \lqwidth{\qflqa_i} ~\mid~ i \in \{1,\ldots,\nna\}\} \leq z
		\quad\text{and}\quad
		\max \{ \lqdepth{\qflqa_i} ~\mid~ i \in \{1,\ldots,\nna\}\} \leq k~.
	\]
	Then:
	\begin{enumerate}
		\item $\lqwidth{\MAX(\seta)} \leq z^\nna \cdot \nna$ and $\lqwidth{\MIN(\seta)} \leq z^\nna \cdot \nna$.
		\item $\lqdepth{\MAX(\seta)} \leq \nna\cdot (k + 1)$ and $\lqdepth{\MIN(\seta)} \leq \nna\cdot (k + 1)$
	\end{enumerate}
\end{lemma}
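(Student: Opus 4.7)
The plan is to instantiate the explicit formula for $\MAX(\seta)$ from the proof of \Cref{lemm:statewise_maxmin} (and, symmetrically, the analogous $\MIN(\seta)$ formula in \Cref{app:minima}) and perform a direct syntactic count. I do not anticipate a real obstacle: the bounds are worst-case counts over an already-explicit expression, and the argument is essentially bookkeeping. The only point requiring mild care is that $\bxsize{\cdot}$ counts (possibly repeated) inequalities, so no simplification or de-duplication is needed; the bounds will follow purely from the raw syntactic shape.

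For the width bound, I would observe that $\MAX(\seta)$ is a double sum indexed by all tuples in $\{1,\ldots,\nnb_1\}\times\cdots\times\{1,\ldots,\nnb_\nna\}$ together with an outer index $\idxa\in\{1,\ldots,\nna\}$, contributing exactly $\nna$ summands per tuple. Since $\nnb_\idxa=\lqwidth{\qflqa_\idxa}\leq z$ for every $\idxa$, the number of tuples is bounded by $z^\nna$, which immediately yields $\lqwidth{\MAX(\seta)}\leq z^\nna\cdot\nna$. The same argument applies verbatim to $\MIN(\seta)$, whose construction has identical index structure.

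For the depth bound, I would inspect the Boolean guard of an arbitrary summand of $\MAX(\seta)$, which has the shape
\[
\bigwedge_{\nnc=1}^{\nna}\bxa_{\nnc,\idxc_\nnc}
\;\land\;
\bigwedge_{\nnc=1}^{\idxa-1}\eaxa_{\idxa,\idxc_\idxa}>\eaxa_{\nnc,\idxc_\nnc}
\;\land\;
\bigwedge_{\nnc=\idxa+1}^{\nna}\eaxa_{\idxa,\idxc_\idxa}\geq\eaxa_{\nnc,\idxc_\nnc}.
\]
By the hypothesis $\lqdepth{\qflqa_\nnc}\leq k$, each factor $\bxa_{\nnc,\idxc_\nnc}$ contributes at most $k$ linear inequalities, so the first conjunct contains at most $\nna\cdot k$ of them, while the remaining two conjuncts together contribute exactly $\nna-1$ comparisons. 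Summing gives at most $\nna\cdot k+(\nna-1)\leq\nna\cdot(k+1)$ inequalities per summand, so $\lqdepth{\MAX(\seta)}\leq\nna\cdot(k+1)$. The case of $\MIN(\seta)$ is entirely symmetric: the construction merely flips the strict/non-strict comparisons used to select the \emph{minimizing} summand with smallest index, so the same counts go through unchanged.
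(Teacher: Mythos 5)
Your proof is correct; the paper actually states this lemma without any proof, so your direct syntactic count from the explicit constructions of $\MAX(\seta)$ and $\MIN(\seta)$ supplies exactly the routine bookkeeping the authors leave implicit. Both counts check out: the outer sum contributes at most $z^\nna$ tuples times $\nna$ inner summands for the width, and each guard contains at most $\nna\cdot k$ inequalities from the $\bxa_{\nnc,\idxc_\nnc}$ plus $\nna-1$ comparison literals, giving $\nna\cdot k+(\nna-1)\leq\nna\cdot(k+1)$ for the depth.
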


\begin{lemma}
	\label{lem:size_third_level}
	Let $\quanta \in \{\Sup,\Inf\}, \vara \in \Vars, \eaxa \in \ExtLinAX, n \geq 1$, and 
	\[
	\disj \quad{}={}\quad \bigwedge_{i=1}^n \lit_i~,
	\]
	where each $\lit_i$ is a linear inequality. Moreover, let $\QE{\quanta \vara \colon \disj  \searrow \eaxa}$ be given as in \Cref{eq:qe_stage3_sup} (resp.\ \Cref{eq:qe_stage3_inf}). Then: 
	\begin{enumerate}
		\item $\lqwidth{\QE{\quanta \vara \colon \disj  \searrow \eaxa}} \leq n+2$
		\item $\lqdepth{\QE{\quanta \vara \colon \disj  \searrow \eaxa}} \leq (\nicefrac{n+2}{2})^2 + n $
	\end{enumerate}
\end{lemma}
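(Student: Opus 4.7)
My plan is a direct structural inspection of the constructions in \Cref{eq:qe_stage3_sup} and \Cref{eq:qe_stage3_inf}, followed by a counting argument that is symmetric in the two cases. Throughout I use the shorthands $L = |\LBounds{\vara}|$ and $U = |\UBounds{\vara}|$ (the paper's $m_2$ and $m_1$). Because $\disj = \bigwedge_{i=1}^n \lit_i$ is in $\XGNF{\vara}$, every $\vara$-mentioning literal has the shape $\vara \sim \eaxb$ and therefore contributes exactly one element to $\LBounds{\vara} \cup \UBounds{\vara}$; adding the two defaults $-\infty \in \LBounds{\vara}$ and $\infty \in \UBounds{\vara}$ yields $L, U \geq 1$ and $L + U \leq n + 2$. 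This inequality drives both the width and depth estimates.

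For the width bound, after distributing the outer bracket $\iverson{\phiSAT(\disj,\vara)}$ into the inner case expression, the output becomes a sum of the single summand $\iverson{\neg\phiSAT(\disj,\vara)} \cdot (-\infty)$ together with the inner summands, of which there are $U$ in the positive-$\vara$ case, $L$ in the negative-$\vara$ case, and $1$ when $\vara \not\in \FVars{\eaxc}$. Since $L + U \leq n+2$ and $L, U \geq 1$ force $\max(L,U) \leq n+1$, the width is at most $1 + (n+1) = n+2$.

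For the depth bound, I would count the inequalities in every Boolean guard that appears in the output. By direct enumeration of the four pairwise-bound conjunctions plus the $\vara$-free literals carried through unchanged, $|\phiSAT(\disj,\vara)| = L \cdot U + (n - (L + U - 2)) = LU + n - L - U + 2$. The selectors satisfy $|\phiSup(\disj,\vara,\eaxu_i)| = U - 1$ and $|\phiInf(\disj,\vara,\eaxl_i)| = L - 1$, so the conjoined guards in the output have size at most $LU + n - L - U + 2 + \max(U-1, L-1) = LU + n + 1 - \min(L,U) \leq LU + n$, using $\min(L,U) \geq 1$. Applying the arithmetic--geometric mean inequality to the constraint $L + U \leq n+2$ gives $LU \leq (\nicefrac{n+2}{2})^2$, closing the depth bound at $(\nicefrac{n+2}{2})^2 + n$.

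I expect the main obstacle to be careful bookkeeping rather than any clever argument: one must consistently count the $\pm\infty$ defaults in $L$ and $U$ (they are the reason the bound involves $n+2$ and not $n$), and one must ensure that the $\vara$-free literals of $\disj$ are counted exactly once inside $\phiSAT$ rather than being double-counted together with the bound comparisons. Once these accounting details are settled, the remainder is routine algebra based on $L + U \leq n + 2$ and AM--GM.
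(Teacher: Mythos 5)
Your proof is correct and takes essentially the same route as the paper's (one-line) argument: the paper likewise reduces the depth bound to the worst case of $\bxsize{\phiSAT(\disj,\vara)}$, attained at $|\LBounds{\vara}| = |\UBounds{\vara}| = \nicefrac{n+2}{2}$, which is exactly your AM--GM step on $L+U \leq n+2$, with the selector literals and the $\vara$-free literals absorbed into the $+n$ term and the width following from $\max(L,U)\leq n+1$. One minor bookkeeping imprecision: the number of pairwise comparisons in $\phiSAT(\disj,\vara)$ is $\bigl(|\Bounds{>}{\vara}{\disj}|+|\Bounds{\geq}{\vara}{\disj}|\bigr)\cdot\bigl(|\Bounds{<}{\vara}{\disj}|+|\Bounds{\leq}{\vara}{\disj}|\bigr)$, which can exceed $L\cdot U$ when the same bound occurs both strictly and non-strictly, but since these four cardinalities still sum to at most $n+2$, the identical AM--GM argument applied to them yields the stated bound.
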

\begin{proof}
 This is a consequence of the fact that the worst-case size of $\phiSAT(\disj,\vara)$ is obtained when $\LBounds{\vara} \eeq \UBounds{\vara} = \nicefrac{n+2}{2}$, in which case $\bxsize{\phiSAT(\disj,\vara)} = (\nicefrac{n+2}{2})^2$.
\end{proof}

\algoSound*
\begin{proof}
	After transforming $\lqa$ into $\GNF{\vara}$ (l.\ \ref{algoline:transform_gnf} of \Cref{alg:qe}), we have 
	\[
		\lqwidth{\lqa}  \leq n 
		\quad\text{and}\quad 
		\lqdepth{\lqa} \leq 2^m \cdot m~.
	\]
	Hence, the quantity generated at l.\ \ref{eq:qe_stage3_sup} (analogously for l.\ \ref{eq:qe_stage3_inf}) is of the form
	\begin{align*}
		\MAX \big( \bigcup_{i=1}^n \bigcup_{j=1}^{2^m} 
		\big\{ 
		\QE{\Sup \vara \colon \disj_{i,j} \searrow \eaxa_{i,j}}
		\big\}
		\big)
		\quad\text{where}\quad
		\bxsize{\disj_{i,j}} \leq m~.
	\end{align*}
	Hence, by \Cref{lem:size_third_level},
	\begin{align*}
		\lqwidth{\QE{\Sup \vara \colon \disj_{i,j} \searrow \eaxa_{i,j}}} \leq m+2
		~\text{and}~
		\lqdepth{\QE{\Sup \vara \colon \disj_{i,j} \searrow \eaxa_{i,j}}} \leq (\nicefrac{m+2}{2})^2 + m~.
	\end{align*}
	The claim then follows by \Cref{lem:size_minmax}.
\end{proof}
	\section{Auxiliary Results}

\subsection{Construction of Valuation-Wise Pointwise Minima}
\label{app:minima}
We define
	\begin{align*}
	&\MIN(\seta)  = \sum\limits_{(\idxc_1,\ldots,\idxc_\nna)\in \underline{\nnb}_1\times\ldots\times\underline{\nnb}_\nna}\quad\sum\limits_{\idxa=1}^{\nna}\\
	&\qquad \qquad\qquad\Bigl[
	\underbrace{\bigwedge\limits_{\nnc=1}^{\nna} \bxa_{\nnc,\idxc_\nnc}}_{
		\text{$\qflqa_\nnc$ evaluates to $\eaxa_{\nnc,\idxc_\nnc}$}
	}
	\land
	\underbrace{\bigwedge\limits_{\nnc= 1}^{\idxa - 1} \eaxa_{\idxa,\idxc_\idxa}<\eaxa_{\nnc,\idxc_\nnc}
		\land
		\bigwedge\limits_{\nnc= \idxa + 1}^{\nna} \eaxa_{\idxa,\idxc_\idxa}\leq \eaxa_{\nnc,\idxc_\nnc}}_{\substack{\text{$\qflqa_\idxa$ is the quantitiy with smallest index} \\ \text{evaluating to the sought-after minimum}}}
	\Bigr] \cdot \eaxa_{\idxa,\idxc_\idxa}~.
\end{align*}

\subsection{Syntactic Replacement of Variables by Expressions}
\label{app:syntrepl}
%
%
	Given $\eaxa\in\ExtLinAX$, we define the arithmetic expression $\syntRepl{\vara_\idxc}{\eaxa}{\axc} \in \ExtLinAX$ obtained from $\eaxa$ by substituting $\vara_\idxc$ in $\eaxa$ by $\axc$ as
	%
	\begin{align*}
			\syntRepl{\vara_\idxc}{\eaxa}{\axc} 
			\eeq
			\begin{cases}
					(\rata_0+\rata_\idxc\cdot\ratb_0)+\sum\limits_{\idxa=1}^{\cardnum}(\rata_\idxa+\rata_\idxc\cdot\ratb_\idxa)\cdot\vara_\idxa &\text{if}~\eaxa = 
					\rata_0+\sum\limits_{\idxa=1}^{\cardnum}\rata_\idxa\cdot\vara_\idxa \\
					\eaxa & \text{if $\eaxa=-\infty$ or $\eaxa=\infty$}~.
				\end{cases}
		\end{align*}
		
	\end{document}